\newcommand{\openone}{\leavevmode\hbox{\small1\normalsize\kern-.33em1}}
\def\UrlSpecials{\do\~{\kern -.15em\lower .7ex\hbox{~}\kern .04em}} \catcode`~=13 
\newcommand{\nn}{\nonumber}
\newcommand{\calA}{\mathcal{A}}
\newcommand{\calB}{\mathcal{B}}
\newcommand{\calC}{\mathcal{C}}
\newcommand{\calE}{\mathcal{E}}
\newcommand{\calF}{\mathcal{F}}
\newcommand{\calM}{\mathcal{M}}
\newcommand{\calP}{\mathcal{P}}
\newcommand{\calS}{\mathcal{S}}
\newcommand{\calT}{\mathcal{T}}
\newcommand{\calV}{\mathcal{V}}
\newcommand{\calX}{\mathcal{X}}
\newcommand{\calY}{\mathcal{Y}}
\newcommand{\bx}{\mathbf{x}}
\newcommand{\bX}{\mathbf{X}}
\newcommand{\by}{\mathbf{y}}
\newcommand{\bY}{\mathbf{Y}}
\newcommand{\rmb}{\mathrm{b}}
\newcommand{\rmc}{\mathrm{c}}
\newcommand{\rmj}{\mathrm{j}}
\newcommand{\rmS}{\mathrm{S}}
\newcommand{\rmV}{\mathrm{V}}
\newcommand{\bbN}{\mathbb{N}}
\newcommand{\bbR}{\mathbb{R}}
\DeclareMathAlphabet{\mathbsf}{OT1}{cmss}{bx}{n}
\DeclareMathAlphabet{\mathssf}{OT1}{cmss}{m}{sl}
\DeclareSymbolFont{bsfletters}{OT1}{cmss}{bx}{n}  
\DeclareSymbolFont{ssfletters}{OT1}{cmss}{m}{n}
\DeclareMathSymbol{\bsfGamma}{0}{bsfletters}{'000}
\DeclareMathSymbol{\ssfGamma}{0}{ssfletters}{'000}
\DeclareMathSymbol{\bsfDelta}{0}{bsfletters}{'001}
\DeclareMathSymbol{\ssfDelta}{0}{ssfletters}{'001}
\DeclareMathSymbol{\bsfTheta}{0}{bsfletters}{'002}
\DeclareMathSymbol{\ssfTheta}{0}{ssfletters}{'002}
\DeclareMathSymbol{\bsfLambda}{0}{bsfletters}{'003}
\DeclareMathSymbol{\ssfLambda}{0}{ssfletters}{'003}
\DeclareMathSymbol{\bsfXi}{0}{bsfletters}{'004}
\DeclareMathSymbol{\ssfXi}{0}{ssfletters}{'004}
\DeclareMathSymbol{\bsfPi}{0}{bsfletters}{'005}
\DeclareMathSymbol{\ssfPi}{0}{ssfletters}{'005}
\DeclareMathSymbol{\bsfSigma}{0}{bsfletters}{'006}
\DeclareMathSymbol{\ssfSigma}{0}{ssfletters}{'006}
\DeclareMathSymbol{\bsfUpsilon}{0}{bsfletters}{'007}
\DeclareMathSymbol{\ssfUpsilon}{0}{ssfletters}{'007}
\DeclareMathSymbol{\bsfPhi}{0}{bsfletters}{'010}
\DeclareMathSymbol{\ssfPhi}{0}{ssfletters}{'010}
\DeclareMathSymbol{\bsfPsi}{0}{bsfletters}{'011}
\DeclareMathSymbol{\ssfPsi}{0}{ssfletters}{'011}
\DeclareMathSymbol{\bsfOmega}{0}{bsfletters}{'012}
\DeclareMathSymbol{\ssfOmega}{0}{ssfletters}{'012}
\newcommand{\hatH}{\hat{H}}
\newcommand{\hatT}{\hat{T}}
\DeclareMathOperator*{\argmin}{arg\,min}
\newtheorem{theorem}{Theorem} 
\newtheorem{lemma}[theorem]{Lemma}
\newtheorem{proposition}[theorem]{Proposition}
\newtheorem{corollary}[theorem]{Corollary}
\newtheorem{definition}{Definition}
\newcommand{\myfoot}[1]{\footnote{\color{red}\bf #1}}
\begin{document}
\title{Achievable Moderate Deviations Asymptotics for Streaming Compression of Correlated Sources}

\author{\IEEEauthorblockN{Lin Zhou, Vincent Y.\ F.\ Tan, and Mehul Motani} \thanks{The authors are with the Department of Electrical and Computer Engineering, National University of Singapore  (emails: lzhou@u.nus.edu, vtan@nus.edu.sg, motani@nus.edu.sg). V.~Y.~F.~Tan is also with the Department of Mathematics, NUS. }
\thanks{The work of the authors is supported in part by a Ministry of Education Tier 2 grant (grant number R-263-000-B61-112).} 
\thanks{Part of this paper was presented at ISIT 2017~\cite{zhou2017achievable}.}
}

\maketitle

\flushbottom
\allowdisplaybreaks[1]

\begin{abstract}
Motivated by streaming multi-view video coding and wireless sensor networks, we consider the problem of blockwise streaming compression of a pair of correlated sources, which we term streaming Slepian-Wolf coding. We study the moderate deviations regime in which the rate pairs of a sequence of codes converge, along a straight line, to various points on the boundary of the Slepian-Wolf region at a speed slower than the inverse square root of the blocklength $n$, while the error probability decays subexponentially fast in $n$. Our main result focuses on directions of approaches to corner points of the Slepian-Wolf region. It states that for each correlated source and all corner points, there exists a non-empty subset of directions of approaches such that the moderate deviations constant (the constant of proportionality for the subexponential decay of the error probability) is enhanced (over the non-streaming case) by at least a factor of $T$, the block delay of decoding source block pairs. We specialize our main result to the setting of streaming lossless source coding and generalize this result  to the setting where we have different delay requirements for each of the two source blocks. The proof of our main result involves the use of various analytical tools and amalgamates several ideas from the recent information-theoretic streaming literature. We adapt the  so-called truncated memory encoding idea from Draper and Khisti (2011) and Lee, Tan, and Khisti (2016) to ensure that the effect of error accumulation is nullified in the limit of large blocklengths. We also adapt the use of the so-called minimum weighted empirical suffix entropy decoder which was used by Draper, Chang, and Sahai (2014) to derive achievable error exponents for symbolwise streaming Slepian-Wolf coding.
\end{abstract}

\begin{IEEEkeywords}
Slepian-Wolf coding, Streaming compression, Moderate deviations, Truncated memory, Weighted empirical suffix entropy  decoder
\end{IEEEkeywords}

\section{Introduction}\label{sec:intro}

Multi-view video coding (MVC) has found important applications in 3D television and surveillance~\cite{flierl2007mvc,guo2008wyner, Wang00, Wilburn04}. In MVC, we usually have multiple cameras recording the same scene from different locations and/or angles. The video frames from each camera are then compressed separately by the encoders and transmitted to a control center.  The video frames from the different cameras are highly correlated since the cameras are recording the same scene. Each view is  also corrupted by noise that emanates from various sources in the environment.   The control center (decoder) then aims to recover the video frames sequentially by tolerating a small pre-specified delay. Another motivation for the present work arises from wireless sensor networks \cite{Sohraby07} which are deployed to monitor some ambient environment. Consider a scenario where we have multiple sensors monitoring the temperature and humidity in a given location. The data monitored by each sensor is compressed by an encoder and transmitted to a control center periodically.  Environmental data typically exhibits both temporal and spatial correlation~\cite{Piegorsch}.  The control center aims to recover the measurements, e.g., temperature and humidity, accurately but can tolerate some small delay. Note that, in both cases, the encoder has access to the data (i.e, video frames or environmental measurements) in an incremental manner.

In an effort to  characterize the fundamental performance limits of streaming MVC and wireless sensor network applications, we propose a \emph{streaming} version of  the Slepian-Wolf (SW) source coding problem~\cite{slepian1973noiseless} as an information-theoretic model. Our setting is shown pictorially in Figure~\ref{systemmodelsw}. In this setting, the correlated source generates one source block pair (each of length $n$) and the $k$-th pair of  encoders has access to {\em all} $k$ source block pairs generated up to and including the current time. The decoders incur a block delay $T$ when decoding each pair of source blocks. We would like to evaluate the maximum of the error probabilities over {\em all} source blocks under the so-called moderate deviations regime. We note our setting is, in general, \emph{different} from the settings in \cite{chang2007streaming} and \cite{draper2010lossless} where the authors restricted the encoders to have access to accumulated {\em symbol pairs}  and not {\em blocks} of  symbol pairs  at each time. Hence, we term the  settings  in \cite{chang2007streaming} and \cite{draper2010lossless} as \emph{symbolwise} streaming and our setting as \emph{blockwise} streaming. 

There are three asymptotic regimes in the study of Shannon-theoretic problems when we want to establish the relationships and tradeoffs between the blocklength, coding rate(s) and the error probability. They are, respectively, the error exponent (large deviations), second-order (central limit) and moderate deviations regimes. Here we use fixed-length lossless source coding as an example to discuss these three regimes. In the traditional study of error exponents for source coding, the rate of the code $R$ is fixed at a value strictly above entropy $H(P_X)$ (which is the first-order fundamental limit) and the exponential rate of the decay of the error probability is sought~\cite{csiszar2011information,gallagerIT}. In second-order or normal approximation analysis~\cite{TanBook, Hayashi08,strassen1962asymptotische}, the rate $R_n$ converges to the first-order fundamental limit at a rate of the order $1/\sqrt{n}$ and the error probability converges to a constant between $0$ and $1$. In contrast, in the study of moderate deviations~\cite{altugwagner2014,polyanskiy2010channel,altug2013lossless}, the rate of the code $R_n$ depends on the blocklength $n$ and converges to the entropy $H(P_X)$  (the first-order fundamental limit) at a speed slower than $1/\sqrt{n}$ while the error probability decays to zero at a sub-exponential speed of roughly $\exp(-\nu n^{1-2t})$ for $t\in (0,1/2)$. The constant $\nu$ is known as the moderate deviations constant and is the object of study in moderate deviations analysis here. The moderate deviations regime can be seen as a bridge between the error exponent and second-order regimes.

\subsection{Related Work}

The papers that are most  related to the present one are those by Lee, Tan and Khisti~\cite{lee2015streaming} and Draper, Chang and Sahai~\cite{draper2010lossless}. In \cite{lee2015streaming}, the authors study the information-theoretic limits of the blockwise streaming version of channel coding in  the moderate deviations and central limit regimes. In \cite{draper2010lossless}, the authors derived lower bounds for the error exponent (reliability function) of symbolwise streaming SW coding by using random binning, minimum weighted empirical entropy decoding, and maximum likelihood decoding. 
 In other works on information-theoretic limits of streaming compression and transmission, Chang and Sahai~\cite{chang2006error} derived bounds on the error exponent for symbolwise streaming of lossless compression. They demonstrated similar results for the case with both encoder and decoder side information in \cite{chang2007price}. They also extended the feedforward decoder idea (which originated from Pinsker~\cite{pinsker1967bounds}) to the case with decoder side information  to derive an upper (converse) bound on the error exponent. Other works by Chang on information-theoretic limits in symbolwise streaming are summarized in~\cite{chang2007streaming}.

Concerning other works on streaming and source coding with delayed decoding, Palaiyanur  \cite{palaiyanur2007sequential} studied lossless streaming compression of a source with side information with and without a discrete memoryless channel between the encoder and the decoder. 
  Matsuta and Uyematsu \cite{matsuta2014wyner} considered the lossy source coding problem with delayed side information. Ma and Ishwar \cite{ma2011delayed} focused on delayed sequential coding of correlated video sources. Zhang, Vellambi and Nguyen~\cite{zhang2014delay} analyzed the error exponent of lossless streaming compression of a single source using variable-length sequential random binning. Finally, Etezadi, Khisti, and Chen~\cite{etezadi2014truncated} recently considered the sequential transmission of a stream of Gauss-Markov sources over erasure channels with zero decoding delay.

Here we also mention a few works on  moderate deviations analyses in information theory. Chen \emph{et al.}~\cite{chen2007redundancy} and He \emph{et al.}~\cite{he2009redundancy} initiated the study of moderate deviations by studying lossless source coding with decoder side information.  Chen \emph{et al.}~\cite{chen2007redundancy} also used duality between source and channel coding to study the moderate deviation asymptotics of so-called cyclic symmetric channels. Altu\u{g} and Wagner~\cite{altugwagner2014} studied moderate deviations for discrete memoryless channels. Polyanksiy and Verd\'u~\cite{polyanskiy2010channel} relaxed some assumptions in the conference version of Altu\u{g} and Wagner's work~\cite{altug2010moderate} and they also considered moderate deviations for AWGN channels. Altu\u{g}, Wagner  and Kontoyiannis~\cite{altug2013lossless} considered moderate deviations for lossless source coding. Other works on moderate deviations in information theory include~\cite{tan2012moderate, tan2014moderate}. Note, however, that all these cited works on moderate deviations analysis in information  theory pertain to point-to-point systems with a {\em single} rate parameter. In this paper, we perform moderate deviations analysis on  a multi-terminal problem involving {\em two} rates and, additionally, we consider the {\em streaming} scenario.

\subsection{Main Contributions}

In this paper, we derive an achievable moderate deviations constant for the blockwise streaming version of SW coding. We show that for each correlated source, there exists a non-empty subset of directions of approaches to the boundary of the SW region for which the moderate deviations constant is enhanced by a factor of $T$, the block delay of decoding source block pairs, compared with non-streaming case. Furthermore, we specialize our results to streaming lossless source coding and generalize our results to the setting where we have different delay requirements for each of the two source blocks.

Despite being analogous to \cite{lee2015streaming} in terms of the main result, our proof is significantly different. We adapt two key ideas to our blockwise streaming setting: the {\em truncated memory encoding} idea in \cite{lee2015streaming} (originated from \cite{drapertruncated} and also related to the notion of tree codes in \cite{wozencraft1957sequential,schulman,sahai2001anytime,sukhavasi2011,khisti2014dmt}) and the {\em  minimum weighted empirical suffix entropy} decoder in \cite{draper2010lossless}. The analysis of this paper is different from both \cite{lee2015streaming} and \cite{draper2010lossless}. In \cite{lee2015streaming}, the truncated memory encoding idea was used in the central limit regime for streaming channel coding. In this paper, we adapt the idea and apply it to the moderate deviations regime of our streaming SW setting. We argue that truncation is, in general, necessary for us to prove our main result (otherwise the accumulation of error probabilities will result). In \cite{draper2010lossless}, the minimum weighted empirical suffix entropy  decoder was used to derive an achievable error exponent for symbolwise streaming SW coding. We adapt the decoding rule to be suited to our moderate deviations setting  so as to obtain an analogous term as in the exponent calculation for  \cite[Theorem 6, Case (iii)]{draper2010lossless}. Subsequently, we Taylor expand the exponent to obtain an achievable moderate deviations constant. We remark that this final Taylor expansion step  is not straightforward, requiring some analytical techniques inspired by Polyanskiy \cite[Lemma 48]{polyanskiy2010thesis}. This is because the exponent involves an additional minimization over a scalar parameter (see~\cite[Theorem 6, Case (iii)]{draper2010lossless} and Lemma \ref{mdcasymp}). 

We emphasize that in \cite{draper2010lossless}, a quantification of the improvement of the error exponent in the streaming case vis-\`a-vis the block coding setting was absent. In this paper, under our proposed coding scheme, we provide a definitive answer to the question of how much we gain in the moderate deviations regime from streaming setup with a block delay $T$. We show analytically that there is a multiplicative gain of $T$ in the moderate deviations constant over the non-streaming setting in many scenarios and provide intuition for why this is the case. We analyze several sources  in Section \ref{sec:num}  and calculate the various directions of approaches in which we can attain this gain.   

\subsection{Organization of the Paper}
The rest of the paper is organized as follows. In Section \ref{sec:sw}, we set up the notation, formulate the problem of streaming SW coding and present our main result--an achievable moderate deviations constant for streaming SW coding. In Section \ref{sec:num}, we provide  three numerical examples to illustrate our results. In particular, we delineate the set of directions for which we can achieve a gain of at least the block delay $T$ in the moderate deviations constant.  In Section \ref{sec:gen}, we specialize our results to streaming lossless source coding with and without decoder side information and generalize our results to the scenario where we impose different delay requirements on different source blocks. In Section \ref{sec:swmdc}, we present the proof of our achievable moderate deviations constant. Finally, in Section \ref{conc}, we conclude the paper and propose future research topics. Auxiliary lemmata are proved in the Appendices.

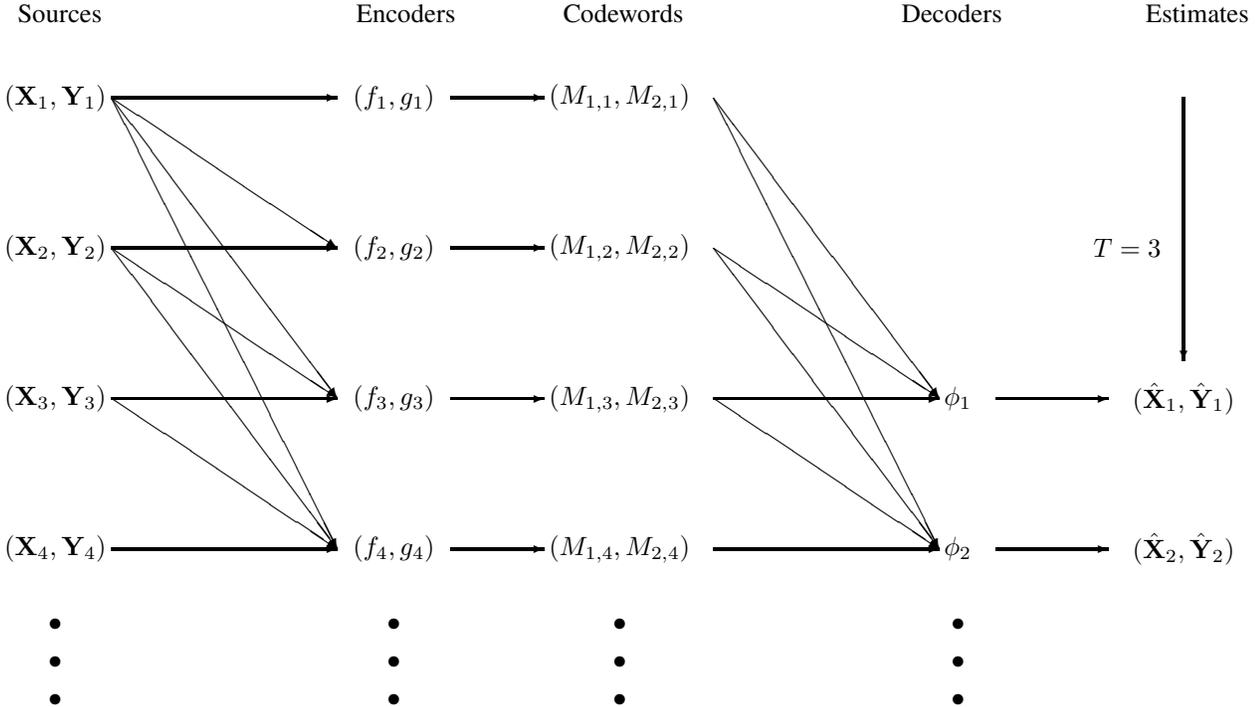
\begin{figure}[t]
\centering
\setlength{\unitlength}{0.5cm}
\begin{picture}(35,18)
\linethickness{1pt}
\put(0,18){\makebox{Sources}}
\put(9,18){\makebox{Encoders}}
\put(14.5,18){\makebox{Codewords}}
\put(23.5,18){\makebox{Decoders}}
\put(30,18){\makebox{Estimates}}
\put(1,16){\makebox(0,0){$(\bX_1,\bY_1)$}}
\put(1,12){\makebox(0,0){$(\bX_2,\bY_2)$}}
\put(1,8){\makebox(0,0){$(\bX_3,\bY_3)$}}
\put(1,4){\makebox(0,0){$(\bX_4,\bY_4)$}}
\put(1,2){\circle*{0.3}}
\put(1,1){\circle*{0.3}}
\put(1,0){\circle*{0.3}}

\put(10,2){\circle*{0.3}}
\put(10,1){\circle*{0.3}}
\put(10,0){\circle*{0.3}}

\put(16,2){\circle*{0.3}}
\put(16,1){\circle*{0.3}}
\put(16,0){\circle*{0.3}}

\put(25,2){\circle*{0.3}}
\put(25,1){\circle*{0.3}}
\put(25,0){\circle*{0.3}}
\put(10,16){\makebox(0,0){$(f_1,g_1)$}}
\put(10,12){\makebox(0,0){$(f_2,g_2)$}}
\put(10,8){\makebox(0,0){$(f_3,g_3)$}}
\put(10,4){\makebox(0,0){$(f_4,g_4)$}}
\put(2.5,16){\vector(1,0){6}}
\put(2.5,16){\vector(3,-2){6}}
\put(2.5,12){\vector(1,0){6}}
\put(2.5,16){\vector(3,-4){6}}
\put(2.5,12){\vector(3,-2){6}}
\put(2.5,8){\vector(1,0){6}}
\put(2.5,16){\vector(1,-2){6}}
\put(2.5,12){\vector(3,-4){6}}
\put(2.5,8){\vector(3,-2){6}}
\put(2.5,4){\vector(1,0){6}}
\put(16,16){\makebox(0,0){$(M_{1,1},M_{2,1})$}}
\put(16,12){\makebox(0,0){$(M_{1,2},M_{2,2})$}}
\put(16,8){\makebox(0,0){$(M_{1,3},M_{2,3})$}}
\put(16,4){\makebox(0,0){$(M_{1,4},M_{2,4})$}}
\put(11.5,16){\vector(1,0){2.5}}
\put(11.5,12){\vector(1,0){2.5}}
\put(11.5,8){\vector(1,0){2.5}}
\put(11.5,4){\vector(1,0){2.5}}
\put(25,8){\makebox(0,0){$\phi_1$}}
\put(25,4){\makebox(0,0){$\phi_2$}}
\put(18.5,16){\vector(3,-4){6}}
\put(18.5,12){\vector(3,-2){6}}
\put(18.5,8){\vector(1,0){6}}
\put(18.5,16){\vector(1,-2){6}}
\put(18.5,12){\vector(3,-4){6}}
\put(18.5,8){\vector(3,-2){6}}
\put(18.5,4){\vector(1,0){6}}
\put(31,8){\makebox(0,0){$(\hat{\bX}_1,\hat{\bY}_1)$}}
\put(31,4){\makebox(0,0){$(\hat{\bX}_2,\hat{\bY}_2)$}}
\put(26,8){\vector(1,0){3}}
\put(26,4){\vector(1,0){3}}
\put(31,16){\vector(0,-1){7}}
\put(29.5,12){\makebox(0,0){$T=3$}}
\end{picture}
\caption{Lossless streaming compression of correlated sources with $T=3$. At each time $k$, a new source block pair $(\bX_k,\bY_k)$ is fed into encoders $(f_k,g_k)$. The encoders then produce output as codewords $(M_{1,k},M_{2,k})$ using the accumulated source blocks $(\bX^k,\bY^k)$. For the case of $T=3$, the decoder is required  to estimate $(X_k,Y_k)$ after receiving the codewords with indices from $1$ to $k+2=k+T-1$.}
\label{systemmodelsw}
\end{figure}

\section{Streaming Slepian Wolf Coding}
\label{sec:sw}
\subsection{Notation}
Random variables and their realizations are in capital (e.g.,\ $X$) and lower case (e.g.,\ $x$) respectively. All sets are denoted in calligraphic font (e.g.,\ $\mathcal{X}$). Let $X^n:=(X_1,\ldots,X_n)$ be a random vector of length $n$. We use $\bX$ to denote $X^n$ and $\bX_a^b$ to denote $X_{n(a-1)+1}^{bn}$. All logarithms are base $e$  (natural logarithm).  As usual,  for any $k\in\bbN$,  $p\log^k p=0$ if $p=0$. Given two integers $a$ and $b$, we use $[a:b]$ to denote the set  $\{a, a+1, \ldots, b-1,b\}$. We use standard asymptotic notation such as $O(\cdot)$  and $o(\cdot)$~\cite{Cor03}. 

The set of all probability distributions on $\calX$ is denoted as $\calP(\calX)$ and the set of all conditional probability distributions from $\calX$ to $\calY$ is denoted as $\calP(\calY|\calX)$. Given $P\in\calP(\calX)$ and $V\in\calP(\calY|\calX)$, we use $P\times V$ to denote the joint distribution induced by $P$ and $V$. We use the method of types extensively and we follow the notation in~\cite{TanBook}. Given a sequence $x^n$, the empirical distribution (type) is denoted as $\hat{T}_{x^n}$. The set of types formed from length-$n$ sequences in $\calX$ is denoted as $\calP_{n}(\calX)=\{\hat{T}_{x^n} \in \calP(\calX):x^n\in\calX^n\}$. Given $P\in\calP_{n}(\calX)$, the set of all sequences of length-$n$ with type $P$ (the type class) is denoted as $\calT_{P}$. Given $x^n\in\calT_{P}$, the set of all sequences $y^n\in\calY^n$ such that the joint type of $(x^n,y^n)$ is $P\times V$ is denoted as $\calT_{V}(x^n)$, the $V$-shell. The set of all stochastic matrices $V\in\calP(\calY|\calX)$ for which the $V$-shell of a sequence  of type $P$ in $\calX^n$ is not empty is denoted as $\calV_{n}(\calY;P)$.

For  information-theoretic quantities, we interchangeably use $H(P_X)$ and $H(X)$ to denote the entropy of a random variable $X$ with distribution $P_{X}$. Similarly, we interchangeably use $H(P_{X|Y}|P_Y)$  or $H(X|Y)$ to denote the conditional entropy. The mutual information and relative entropy are denoted in a similar manner.

\subsection{System Model and Problem Formulation}

We consider a streaming version of Slepian-Wolf problem, which is termed \emph{streaming Slepian-Wolf coding}. The standard Slepian-Wolf problem was solved in~\cite{slepian1973noiseless} and \cite{cover1975proof}. Similarly to~\cite{slepian1973noiseless}, we have two correlated sources to be separately compressed and jointly reconstructed. However, our streaming model differs from~\cite{slepian1973noiseless} in three main  aspects:
\begin{enumerate}
\item We have a {\em sequence} (countably infinite) of source blocks as inputs for each encoder;
\item We have a {\em sequence} (countably infinite) of encoders and decoders;
\item We incur a {\em  delay} at the decoder to decode a specific  source block pair.
\end{enumerate}
Our streaming setting is also different from \cite{draper2010lossless} in the sense that we allow encoders to have access to one more source {\em block} per unit time instead of one more {\em symbol} as in \cite{draper2010lossless}.

Consider a discrete memoryless source (DMS) with joint probability mass function (pmf) $P_{XY}$ on a finite alphabet $\calX\times\calY$. We define a streaming SW code formally as follows.

\begin{definition}[Streaming Code]
\label{def:streamingcode}
An $(n,N_1,N_2,T,\epsilon_n)$-code for streaming SW coding consists of
\begin{enumerate}
\item A sequence of correlated source blocks $\{(\bX_k,\bY_k)\}_{k\geq 1}$, where $(\bX_k,\bY_k)\in\calX^n\times\calY^n$ is an i.i.d.\ sequence with common distribution $P_{XY}\in\calP(\calX\times\calY)$;
\item A sequence of encoding function pairs $f_k:\mathcal{X}^{kn}\to \calM_1$ that maps the accumulated source blocks $\bX^k\in\calX^{kn}$ to a codeword $M_{1,k}\in\calM_1$ and  $g_k:\mathcal{Y}^{kn}\to \calM_2$ that maps the accumulated source blocks $\by^k\in\calY^{kn}$ to a codeword $M_{2,k}\in\calM_2$.
Here $|\calM_1|=N_1$ and $|\calM_2|=N_2$;
\item A sequence of decoding functions $\phi_k:\calM_1^{k+T-1}\times \calM_2^{k+T-1}\to\calX^n\times\calY^n$ that maps the accumulated codewords $\{(M_{1,j},M_{2,j})\}_{j=1}^{k+T-1}\in\calM_1^{k+T-1}\times\in\calM_2^{k+T-1}$ to a source block pair $(\hat{\bX}_k,\hat{\bY}_k)$,
\end{enumerate}
which satisfies 
\begin{align}
\sup_{k\in\bbN}\Pr\big((\hat{\bX}_k,\hat{\bY}_k)\neq (\bX_k,\bY_k)\big)\leq \epsilon_n\label{errorreq}.
\end{align}
Furthermore, we assume that common randomness is shared between the encoder and decoder.
\end{definition}

Let us say a few words about the availability of  common randomness at the encoder and decoder. This assumption ensures the existence of a deterministic code~\cite[Definition~1]{draper2010lossless}. We remark that, if common randomness is not present, the existence of deterministic codes can also be shown by changing the criterion in \eqref{errorreq} to  one of the following two forms:
\begin{enumerate}
	\item The average error criterion (averaged over infinitely many source blocks):
	\begin{align}
	\limsup_{N\to\infty}\frac{1}{N}\sum_{k=1}^N\Pr\big((\hat{\bX}_k,\hat{\bY}_k)\neq (\bX_k,\bY_k)\big)\leq \epsilon_n\label{errorreqa1}.
	\end{align}
	This error criterion was employed in the achievability work for streaming data transmission by Lee, Tan, and Khisti~\cite{lee2015streaming}. In \cite{lee2015streaming} it was asserted that deterministic codes exist under this criterion. 
	\item The maximum error criterion, similar  to \eqref{errorreq} but limiting the total number of source blocks (that are required to satisfy the error probability bound of $\epsilon_n$) to be  $L=L_n=\exp(o(n\xi_n^2))$, where $\xi_n = \omega\big(\sqrt{\frac{\log n}{n}} \big) \cap o(1)$ is a vanishing  sequence used to define the moderate deviations regime (cf.~Definition \ref{defmdcsw}). More precisely,
	\begin{align}
	\sup_{k \in [1:L]}\Pr\big((\hat{\bX}_k,\hat{\bY}_k)\neq(\bX_k,\bY_k)\big)\leq \epsilon_n\label{errorreqa2}.
	\end{align}
	Note that $L$ can be large and can also grow with the blocklength, so for practical applications \eqref{errorreqa2} is also useful. If we adopt~\eqref{errorreqa2}, we can assert the existence of deterministic codes by invoking the union bound and Markov's inequality as in \cite{hayashi2015asym}.\footnote{In more detail,  averaged over the random code code $\calC_n$, the error probabilities $\mathbb{E}_{\calC_n}[\Pr(\calE_k|\calC_n)]$ for each source block indexed by $k$ can be shown to decay as $\exp( - c n \xi_n^2)$ for some constant $c>0$ (this is, in fact, the moderate deviations constant). By the union bound and Markov's inequality, the probability, over the random code,  that any one of $N$ error events $\{ \Pr(\calE_k|\calC_n) > 2N e^{-cn\xi_n^2 }\}, k\in [1:L]$ occurs is $\le 1/2$. Thus, with probability   $>1/2$, there must exist a deterministic code, say $\calC_n^*$, satisfying $\Pr(\calE_k|\calC_n^*) \le 2N e^{-cn\xi_n^2 }\approx e^{-cn\xi_n^2 }$ for all $k \in [1:L]$.}  This error criterion was  also used by Lee, Tan, and Khisti in~\cite{LeeTanKhisti2016}. 
\end{enumerate} 

We remark that given fixed blocklength $n$, an $(n,N_1,N_2,T,\epsilon_n)$-code for streaming SW coding consists of a {\em sequence} of encoding and decoding functions.  Specifically, when $T=1$ and $k=1$ (only {\em one} source block is considered), we recover the standard SW coding scenario~\cite{slepian1973noiseless}.  To illustrate our streaming setup, we show in Figure~\ref{systemmodelsw} the case of $T=3$.

Further, we remark that the optimal rate region for our streaming setting coincides with the optimal rate region for the standard SW coding problem~\cite{slepian1973noiseless}. The converse proof of the optimal rate region for our streaming model follows by combining the techniques in \cite{slepian1973noiseless} and \cite[Chapter 5.7.1]{etezadi2015streaming}.

To define the next concept succinctly, let $R_X^*$ and $R_Y^*$ be two fixed rates. The rate pair $(R_X^*, R_Y^*)$  will be taken to be on the boundary of the optimal rate region for SW  coding~\cite{slepian1973noiseless} in Figure \ref{rateregion}. There are five cases in total, but Cases (iv) and (v) are symmetric to Cases (ii) and (i) respectively. Hence, in the presentation of definition and main result, we illustrate only the first three cases.

\begin{definition}[Moderate Deviations Constant]
\label{defmdcsw} 
{\em
Consider any correlated source with joint pmf $P_{XY}$ and any positive sequence $\{\xi_n\}_{n\in\bbN}$ satisfying $\xi_n\to 0$ and $\frac{\log n}{n\xi_n^2}\to 0$ as $n\to\infty$.\footnote{As an archetypical example, $\xi_n=n^{-t}$ for any $t\in (0,1/2)$ satisfies the two  conditions.} Let $\bm{\theta}=(\theta_1,\theta_2)\in\bbR^2$ be a real vector. A number $\nu$ is said to be a {\em $(R_X^*,R_Y^*,\bm{\theta},T)$-achievable moderate deviations constant} if there exists a sequence of $(n,N_1,N_2,T,\epsilon_n)$-codes such that 
\begin{align}
\label{ratemdc}
  \limsup_{n\to\infty} \frac{\log N_1-nR_X^*}{n\xi_n}&\leq \theta_1\\
  \limsup_{n\to\infty} \frac{\log N_2-nR_Y^*}{n\xi_n}&\leq \theta_2.
\end{align}
and
\begin{align}
\liminf_{n\to\infty} -\frac{\log\epsilon_n}{n\xi_n^2}\geq \nu.
\end{align}
The supremum of all $(R_X^*,R_Y^*,\bm{\theta},T)$-achievable moderate deviations constants is denoted as $\nu^*(R_X^*,R_Y^*,\bm{\theta},T)$.
}
\end{definition}

We remark that compared with \cite{draper2010lossless}, our goal to characterize $\nu^*(R_X^*,R_Y^*,\bm{\theta},T)$ is significantly different in the sense that i) the authors in~\cite{draper2010lossless} consider   large deviations while we consider moderate deviations; ii) their exponent that was derived in \cite{draper2010lossless} is with respect to the delay while our moderate deviations constant is with respect to the blocklength $n$ of each source block and the sequence $\xi_n$ which controls the speed of convergence of rates to a particular rate pair; and iii) we can show a quantitative improvement of  (at least) $T$  in the moderate deviations constant over the non-streaming SW setting.

We remark that in the above definition, in order to approach a {boundary} rate pair from a sequence of {non-boundary} rate pairs {\em inside}  (in the interior of) the SW coding region, we need to impose different conditions on $\bm{\theta}$ for the different cases as follows:
\begin{itemize}
\item Case (i):  The feasible set of $\bm{\theta}$ is
\begin{align}
\Theta_{(\mathrm{i})}:=\left\{\bm{\theta}\in\bbR^2:\theta_1>0,~-\infty<\theta_2<\infty\right\}.
\end{align}
\item Case (ii): The feasible set of $\bm{\theta}$ is
\begin{align}
\Theta_{(\mathrm{ii})}:=\left\{\bm{\theta}\in\bbR^2:\theta_1>0,~-\theta_1<\theta_2<\infty\right\}\label{def:Theta2}.
\end{align}
\item Case (iii): The feasible set of $\bm{\theta}$ is
\begin{align}
\Theta_{(\mathrm{iii})}
&:=\left\{\bm{\theta}\in\bbR^2:-\infty<\theta_1<\infty,~-\theta_1<\theta_2<\infty\right\}\\
&=\Theta_{(\mathrm{ii})}\bigcup \left\{\bm{\theta}\in\bbR^2:\theta_1<0,~-\theta_1<\theta_2<\infty\right\}.
\end{align}
\end{itemize}
The conditions on $\bm{\theta}$ for Cases (iv) and (v) are omitted since they are similar to the conditions for Cases (ii) and (i).

\subsection{Preliminaries and An Assumption}
\label{preliminary}

\begin{figure}[t]
\centering
\begin{picture}(115, 135)
\setlength{\unitlength}{.47mm}
\put(30,50){\circle*{4}}
\put(17,52){\footnotesize (ii)}
\put(19,66){\footnotesize (i)}
\put(30,68){\circle*{4}}
\put(40,40){\circle*{4}}
\put(42,40){\footnotesize (iii)}
\put(50,30){\circle*{4}}
\put(51,24){\footnotesize (iv)}
\put(68,30){\circle*{4}}
\put(69,24){\footnotesize (v)}
\input{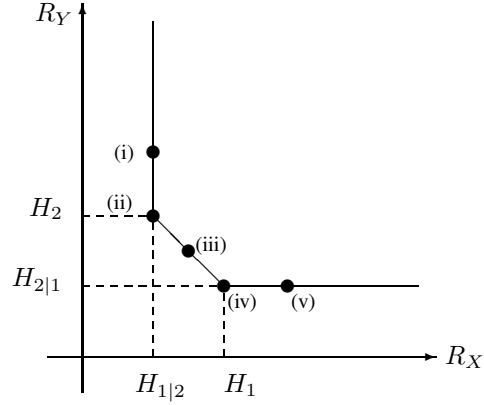}
\end{picture}
\caption{Illustration of the different cases in Definition \ref{defmdcsw}, Theorems \ref{swmdc} and \ref{streamingswmdc} where $H_1=H(P_X)$ and $H_{1|2}={H(P_{X|Y}|P_{Y})}$ etc. }
\label{rateregion}
\end{figure}

For a given source with pmf $P_{XY}$ on alphabet $\calX\times\calY$, the {\em joint and conditional source dispersions} or {\em varentropies} \cite{verdu14} are respectively defined as
\begin{align}
\rmV(P_{XY})&:=\sum_{ x,y  }P_{XY}(x,y)\left(-\log P_{XY}(x,y)-H(P_{XY})\right)^2,\label{defvarpxy}\\
\rmV(P_{X|Y}|P_{Y})&:=\sum_{ x,y  } P_{XY}(x,y)\left(-\log P_{X|Y}(x|y)-H(P_{X|Y}|P_Y)\right)^2\label{defvarpxgy},
\end{align}
and $\rmV(P_{Y|X}|P_X)$ is defined in a similar manner as $\rmV(P_{X|Y}|P_Y)$ with $X$ and $Y$ interchanged.  Note that in \eqref{defvarpxy} and \eqref{defvarpxgy}, it suffices to only sum over elements $(x,y)\in\calX\times\calY$ such that $P_{XY}(x,y)>0$.

We assume that for the correlated source with joint pmf $P_{XY}$ on the finite alphabet $\calX\times\calY$,  the three source dispersions
$\rmV(P_{XY})$, $\rmV(P_{X|Y}|P_Y)$, and $\rmV(P_{Y|X}|P_X)$ are positive.

\subsection{Moderate Deviations Asymptotics for Standard Slepian-Wolf Coding}

In this section, we present the moderate deviations constant for the (standard) Slepian-Wolf problem, i.e., the \emph{non-streaming case}. Let $(R_X^*,R_Y^*)$ be fixed as a boundary rate pair of the Slepian Wolf rate region in Figure \ref{rateregion}. Recall that when $T=1$ and $k=1$, the streaming setting in Definition \ref{def:streamingcode} reduces to the traditional Slepian-Wolf coding. Similarly as Definition \ref{defmdcsw}, we define the optimal moderate deviations constant for non-streaming SW coding and denote it as $\nu^*_{\mathrm{sw}}(R_X^*,R_Y^*,\bm{\theta})$.

\begin{theorem}[Non-Streaming Moderate Deviations Constant]
\label{swmdc}
Depending on $(R_X^*,R_Y^*)$, there are five cases, of which we present three here. The moderate deviations constant for non-streaming SW source coding is
\begin{enumerate}
\item Case (i): $R_X^*=H(P_{X|Y}|P_Y)$ and $R_Y^*>H(P_Y)$
\begin{align}
\nu^*_{\mathrm{sw}}(R_X^*,R_Y^*,\bm{\theta})=\frac{\theta_1^2}{2\rmV(P_{X|Y}|P_Y)}.
\end{align}
\item Case (ii): $R_X^*=H(P_{X|Y}|P_Y)$ and $R_Y^*=H(P_Y)$
\begin{align}
\nu^*_{\mathrm{sw}}(R_X^*,R_Y^*,\bm{\theta})=\min\left\{\frac{\theta_1^2}{2\rmV(P_{X|Y}|P_Y)},\frac{(\theta_1+\theta_2)^2}{2\rmV(P_{XY})}\right\}\label{caseiinonstreaming}
\end{align}
\item Case (iii): $R_X^*+R_Y^*=H(P_{XY})$, $H(P_{X|Y}|P_Y)<R_X^*<H(P_X)$ and $H(P_{Y|X}|P_X)<R_Y^*<H(P_Y)$
\begin{align}
\nu^*_{\mathrm{sw}}(R_X^*,R_Y^*,\bm{\theta})=\frac{(\theta_1+\theta_2)^2}{2\rmV(P_{XY})}.
\end{align}
\end{enumerate}
\end{theorem}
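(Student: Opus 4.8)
The plan is to establish achievability by random binning combined with the moderate deviations principle (MDP) for sums of i.i.d.\ random variables, and to establish the matching converse by genie-aided reductions to the known single-terminal moderate deviations converses; since $\calX\times\calY$ is finite, all relevant log-likelihoods are bounded and the regularity conditions underlying Cram\'er's theorem and the MDP hold automatically. For achievability I would draw two independent random binning functions $\varphi_1:\calX^n\to[1:N_1]$ and $\varphi_2:\calY^n\to[1:N_2]$, each assigning indices uniformly and independently, use them as the two encoders, and let the decoder output the pair of largest likelihood $P_{XY}(\cdot,\cdot)$ in the product bin indexed by the received message pair $(m_1,m_2)$ (a minimum empirical-entropy decoder would serve equally well). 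An error forces the existence of a competing pair $(\tilde x^n,\tilde y^n)\neq(X^n,Y^n)$ lying in the same product bin with $P_{XY}(\tilde x^n,\tilde y^n)\ge P_{XY}(X^n,Y^n)$; splitting according to whether the competitor differs from $(X^n,Y^n)$ only in the $X$-block, only in the $Y$-block, or in both coordinates gives $\calE\subseteq\calE_1\cup\calE_2\cup\calE_{12}$, and the overall exponent will be the minimum of the three.

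Next I would bound each term. A union bound over competitors, together with the counting inequalities $|\{\tilde x^n:P_{X|Y}(\tilde x^n|y^n)\ge P_{X|Y}(x^n|y^n)\}|\le 1/P_{X|Y}(x^n|y^n)$ and $|\{(\tilde x^n,\tilde y^n):P_{XY}(\tilde x^n,\tilde y^n)\ge P_{XY}(x^n,y^n)\}|\le 1/P_{XY}(x^n,y^n)$, bounds the conditional error probabilities by $\min\{1,(N_1 P_{X|Y}(X^n|Y^n))^{-1}\}$, its $X\leftrightarrow Y$ analogue, and $\min\{1,(N_1N_2 P_{XY}(X^n,Y^n))^{-1}\}$ respectively; inserting a slack $\gamma_n$ with $\log n\ll\gamma_n\ll n\xi_n$ then yields tail bounds such as $\Pr(\calE_1)\le\Pr(-\log P_{X|Y}(X^n|Y^n)\ge\log N_1-\gamma_n)+e^{-\gamma_n}$. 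Since $-\log P_{X|Y}(X^n|Y^n)$, $-\log P_{Y|X}(Y^n|X^n)$ and $-\log P_{XY}(X^n,Y^n)$ are sums of $n$ bounded i.i.d.\ terms with means $nH(P_{X|Y}|P_Y),nH(P_{Y|X}|P_X),nH(P_{XY})$ and variances $n\rmV(P_{X|Y}|P_Y),n\rmV(P_{Y|X}|P_X),n\rmV(P_{XY})$, the remaining work is just to identify which tail is a genuine moderate-deviations event in each case and to invoke the MDP (with scaling sequence $\sqrt{n}\,\xi_n$): in Case~(iii) the strict slack $R_X^*>H(P_{X|Y}|P_Y)$, $R_Y^*>H(P_{Y|X}|P_X)$ makes $\Pr(\calE_1),\Pr(\calE_2)$ decay exponentially in $n$ (Cram\'er) and hence negligibly against $\exp(-\Theta(n\xi_n^2))$, while $\Pr(\calE_{12})$ is the probability that a zero-mean i.i.d.\ sum exceeds $n\xi_n(\theta_1+\theta_2)(1+o(1))$, giving exponent $\tfrac{(\theta_1+\theta_2)^2}{2\rmV(P_{XY})}$; Case~(i) leaves only $\calE_1$ binding with exponent $\tfrac{\theta_1^2}{2\rmV(P_{X|Y}|P_Y)}$; and Case~(ii), a corner point, leaves both $\calE_1$ and $\calE_{12}$ binding, with $\calE_2$ exponentially negligible because there $R_Y^*=H(P_Y)>H(P_{Y|X}|P_X)$, so its constant is the minimum of the two. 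A routine derandomization, or the shared common randomness already assumed in Definition~\ref{def:streamingcode}, completes the achievability part.

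For the converse I would attach, in each case, a genie that can only decrease the error probability. In Case~(i), reveal $Y^n$ to the decoder: the code becomes a fixed-length lossless source code for $X^n$ with decoder side information $Y^n$ at rate $\tfrac1n\log N_1\le H(P_{X|Y}|P_Y)+\xi_n\theta_1+o(\xi_n)$, so the moderate deviations converse for source coding with decoder side information \cite{chen2007redundancy,he2009redundancy} forces $\nu^*_{\mathrm{sw}}(R_X^*,R_Y^*,\bm{\theta})\le\tfrac{\theta_1^2}{2\rmV(P_{X|Y}|P_Y)}$. In Case~(iii), regard $Z^n=(X^n,Y^n)$ as a single source and concatenate the two descriptions into one message of size $N_1N_2$: this is a fixed-length lossless source code for $Z^n$ at rate $\tfrac1n\log(N_1N_2)\le H(P_{XY})+\xi_n(\theta_1+\theta_2)+o(\xi_n)$, so the moderate deviations converse for lossless source coding \cite{altug2013lossless} forces $\nu^*_{\mathrm{sw}}(R_X^*,R_Y^*,\bm{\theta})\le\tfrac{(\theta_1+\theta_2)^2}{2\rmV(P_{XY})}$. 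Case~(ii) is a corner point and inherits both reductions simultaneously, hence its constant is at most the minimum of the two; combining with achievability settles all three cases, and Cases~(iv)--(v) follow by relabelling $X\leftrightarrow Y$.

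I do not anticipate a deep obstacle here, since this is the non-streaming baseline and the substantive difficulty of the paper lies in the streaming statement (Theorem~\ref{streamingswmdc}). The one point needing care is the asymptotic bookkeeping that keeps the constant sharp: one must check that the truncation slack $\gamma_n$ and the polynomial-in-$n$ prefactors contribute only $o(n\xi_n^2)$ to the exponent and therefore vanish under the hypothesis $\tfrac{\log n}{n\xi_n^2}\to 0$, and that the regime $\xi_n\to 0$, $n\xi_n^2\to\infty$ is precisely the classical moderate-deviations window in which $-\tfrac{1}{n\xi_n^2}\log\Pr(\text{i.i.d.\ sum}\ge n\xi_n\theta)\to\tfrac{\theta^2}{2\sigma^2}$ holds verbatim. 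The positivity assumptions on the three varentropies in Section~\ref{preliminary} guarantee the denominators are nonzero, and the constraints defining $\Theta_{(\mathrm{i})},\Theta_{(\mathrm{ii})},\Theta_{(\mathrm{iii})}$ ensure $\theta_1>0$ and/or $\theta_1+\theta_2>0$ exactly where these are needed for the tail events to be non-trivial.
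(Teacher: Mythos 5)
Your proposal is correct and follows what the paper itself identifies as the second of two possible routes: information-spectrum (Han) bounds together with the Dembo--Zeitouni moderate deviations principle. The paper's Appendix~\ref{appprelim} is geared toward the first route instead -- Taylor expansion of the Gallager exponents $E_{X|Y}(\rho)$, $E_{XY}(\rho)$ around $\rho=0$ (Lemmas~\ref{derivative:joint} and \ref{derivative:conditional}) combined with the parametric-continuity device of Lemma~\ref{convergegamma} in the style of \cite[Lemma~48]{polyanskiy2010thesis} -- but the paper never writes out a proof of Theorem~\ref{swmdc}, declaring it straightforward and omitting it. Your argument therefore fills that gap by a legitimate alternative method. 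Your route is arguably cleaner for the non-streaming baseline, since it sidesteps the second-order expansion of $E_{XY}$ and $E_{X|Y}$, the uniform third-derivative bound, and the minimization over $\gamma$; the trade-off is that it does not extend as naturally to the streaming Theorem~\ref{streamingswmdc}, where the exponent-expansion machinery is what lets the paper handle the $\inf_{\gamma\in[0,1]}$ and the many interleaved competitor indices. Your error decomposition into $\calE_1\cup\calE_2\cup\calE_{12}$, the competitor-counting bounds, the MDP normalization (with scaling sequence $\sqrt{n}\,\xi_n$), and the genie-aided converse reductions to side-information source coding \cite{chen2007redundancy,he2009redundancy} (Case~(i)), to point-to-point lossless compression of $(X,Y)$ \cite{altug2013lossless} (Case~(iii)), and to both at the corner (Case~(ii)) are all sound.

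One bookkeeping slip: the slack should satisfy $n\xi_n^2 \ll \gamma_n \ll n\xi_n$, not $\log n \ll \gamma_n \ll n\xi_n$. The lower constraint is what makes the residual $e^{-\gamma_n}$ in
\begin{align}
\Pr(\calE_1)\le\Pr\bigl(-\log P_{X|Y}(X^n|Y^n)\ge\log N_1-\gamma_n\bigr)+e^{-\gamma_n}
\end{align}
negligible against $e^{-\nu n\xi_n^2}$ for the target constant $\nu$; since $\log n = o(n\xi_n^2)$ in the regime of Definition~\ref{defmdcsw}, a choice like $\gamma_n=2\log n$ would satisfy your stated condition yet give $e^{-\gamma_n}=n^{-2}$, which dominates the moderate-deviations tail and ruins the constant. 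Because $\xi_n\to 0$, the corrected window is nonempty (e.g., $\gamma_n = n\xi_n^{3/2}$), and the remaining upper constraint $\gamma_n \ll n\xi_n$ keeps the tail threshold at $n\theta\xi_n(1+o(1))$, so the fix is immediate and the rest of the argument goes through unchanged.
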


For lossless source coding with decoder side information, Chen \emph{et. al} \cite{chen2007redundancy} and He \emph{et. al}~\cite{he2009redundancy} derived the optimal moderate deviations constant. For SW coding, Hayashi and Matsumoto derived an achievability result for the moderate deviations constants similar to Theorem \ref{swmdc} in \cite[Lemma 89]{hayashi2016sw}. However, no corresponding converse result was proved for SW coding.

We remark that there are at least two ways to prove Theorem \ref{swmdc} for both achievability and converse parts. One way is to Taylor expand the error exponents for SW coding in a similar manner as in~\cite{altugwagner2014}. In Appendix \ref{appprelim}, we provide some preliminary lemmas for this calculation. The second way is to leverage information spectrum bounds \cite[Lemmas 7.2.1 and 7.2.2]{han2003information} and the moderate deviations theorem \cite[Theorem 3.7.1]{dembo2009large}. We believe that  the result in Theorem \ref{swmdc} (sans the achievability part~\cite{hayashi2016sw}) does not appear explicitly in previous works but it is straightforward and thus we omit its proof.

\subsection{Moderate Deviations Asymptotics for Streaming Slepian-Wolf Coding}
We now state the main result of this paper.
\begin{theorem}[Streaming Moderate Deviations Constant]
\label{streamingswmdc}
The moderate deviations constant for streaming SW coding satisfies
\begin{enumerate}
\item Case (i): $R_X^*=H(P_{X|Y}|P_Y)$ and $R_Y^*>H(P_Y)$
\begin{align}
\nu^*(R_X^*,R_Y^*,\bm{\theta},T)\geq \frac{T\theta_1^2}{2\rmV(P_{X|Y}|P_Y)}.
\end{align}
\item Case (ii): $R_X^*=H(P_{X|Y}|P_Y)$ and $R_Y^*=H(P_Y)$
\begin{align}
\nu^*(R_X^*,R_Y^*,\bm{\theta},T)\geq T\min\left\{\inf_{\gamma\in[0,1]}\frac{(\theta_1+(1-\gamma)\theta_2)^2}{2\left(\gamma \rmV(P_{X|Y}|P_Y)+(1-\gamma)\rmV(P_{XY})\right)},
\frac{(\theta_1+\theta_2)^2}{2\rmV(P_{XY})}\right\}\label{caseii}
\end{align}
\item Case (iii): $R_X^*+R_Y^*=H(P_{XY})$, $H(P_{X|Y}|P_Y)<R_X^*<H(P_X)$ and $H(P_{Y|X}|P_X)<R_Y^*<H(P_Y)$
\begin{align}
\nu^*(R_X^*,R_Y^*,\bm{\theta},T)\geq \frac{T(\theta_1+\theta_2)^2}{2\rmV(P_{XY})}.
\end{align}
\end{enumerate}
\end{theorem}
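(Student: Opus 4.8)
\emph{The plan} is a random-coding achievability argument built from two ingredients imported from the streaming literature---truncated-memory random binning and a minimum weighted empirical suffix-entropy decoder---followed by a moderate-deviations expansion of the resulting error exponent; Cases~(i) and~(iii) will drop out as special cases of the analysis needed for Case~(ii). First I would construct the code: using the shared common randomness, draw for each time $k$ independent uniform binning maps $\beta_k^X:\calX^{Tn}\to\calM_1$ and $\beta_k^Y:\calY^{Tn}\to\calM_2$ and set $M_{1,k}=\beta_k^X(\bX_{k-T+1}^k)$, $M_{2,k}=\beta_k^Y(\bY_{k-T+1}^k)$ (early missing blocks padded with a fixed symbol), i.e.\ the encoder memory is truncated to the last $T$ blocks (any window of size $\ge T$ works). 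This truncation, taken from Draper--Khisti and \cite{lee2015streaming}, is what prevents a decoding error at block $j$ from corrupting any codeword after time $j+T-1$. The decoder $\phi_k$ observes $M_{1,1},\dots,M_{1,k+T-1}$ and $M_{2,1},\dots,M_{2,k+T-1}$, ranges over all source tuples consistent with these binning constraints, and outputs the block-$k$ components of the tuple minimizing a suitably weighted sum of empirical joint and $\bX$-conditional-on-$\bY$ suffix entropies---the blockwise analogue of the decoder in \cite[Sec.~V]{draper2010lossless}. The single structural fact that produces the $T$-fold gain is that block $k$ lies in the memory window of exactly the $T$ codeword pairs indexed by $k,k+1,\dots,k+T-1$, all of which are available to $\phi_k$.

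Next I would bound $\Pr(\mathcal E_k)$ uniformly in $k$ by decomposing the event that $\phi_k$ errs according to the earliest block $\ell\le k$ at which the winning wrong candidate departs from the truth and the set $S\subseteq\{\ell,\dots,k+T-1\}$ (with $k\in S$) of blocks and coordinates over which it departs. A departure set containing block $k$ forces the candidate into the true bin at $k+T-\ell$ times on each departing coordinate; since $k+T-\ell\ge T$ with equality iff $\ell=k$, the $\ell<k$ contributions carry an extra factor $N_1^{-(k-\ell)}$ or $N_2^{-(k-\ell)}$, which after method-of-types accounting becomes an extra $e^{-(k-\ell)nc}$ for a positive constant $c$; summing over $\ell$ then gives a convergent geometric series dominated---uniformly in $k$---by the $\ell=k$ term. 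This is the precise sense in which truncation nullifies error accumulation. For $\ell=k$ I would use standard type counting, crucially capping the candidate union bound at $1$ because it is a probability; this reveals that the dominant candidates are those departing over the \emph{entire} window $\{k,\dots,k+T-1\}$, since partial-window departures fight $T$ codeword pairs' worth of rate against strictly fewer blocks' worth of source entropy and are hence $e^{-\Theta(n)}$ smaller. Tracking, in Case~(ii), the fraction $\gamma\in[0,1]$ of the window on which only $\bX$ departs versus $1-\gamma$ on which both $\bX$ and $\bY$ depart, one arrives at $\Pr(\mathcal E_k)\le\exp(-nT\,E(R_X^*,R_Y^*,\bm\theta;\xi_n)+o(n\xi_n^2))$ with $E=\min_{\gamma\in[0,1]}E_\gamma$, where each $E_\gamma$ is a combination of relative entropies, mirroring \cite[Thm.~6, Case (iii)]{draper2010lossless}.

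Finally I would substitute $\log N_i=nR_i^*+n\theta_i\xi_n+o(n\xi_n)$ and let $n\to\infty$. In Cases~(i) and~(iii) only one rate constraint is active, $E$ is a single minimization of $D(\cdot\|P_{X|Y})$, respectively $D(\cdot\|P_{XY})$, over a slack of order $\xi_n$, and a second-order Taylor expansion around $P_{X|Y}$, respectively $P_{XY}$, gives $E\sim\theta_1^2\xi_n^2/(2\rmV(P_{X|Y}|P_Y))$, respectively $(\theta_1+\theta_2)^2\xi_n^2/(2\rmV(P_{XY}))$; the leading factor $T$ survives, yielding the stated bounds (in these two cases exactly $T$ times the corresponding value of Theorem~\ref{swmdc}). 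In Case~(ii) the minimization over $\gamma$ must be carried through the expansion, and because the optimal $\gamma$ drifts with $\xi_n$ one cannot expand at a fixed $\gamma$; here I would invoke the estimate behind \cite[Lemma~48]{polyanskiy2010thesis}, packaged as Lemma~\ref{mdcasymp}, to obtain $\min_{\gamma}E_\gamma\sim\xi_n^2\inf_{\gamma\in[0,1]}\frac{(\theta_1+(1-\gamma)\theta_2)^2}{2(\gamma\rmV(P_{X|Y}|P_Y)+(1-\gamma)\rmV(P_{XY}))}$, which gives \eqref{caseii}. The constraints defining $\Theta_{(\mathrm{i})},\Theta_{(\mathrm{ii})},\Theta_{(\mathrm{iii})}$ are exactly what keep the relevant slacks positive, so that the expansions are valid and $\epsilon_n$ is genuinely sub-exponential. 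Derandomization uses the common randomness (or, alternatively, the criteria \eqref{errorreqa2}); the $o(n\xi_n^2)$ in the exponent absorbs the $\log L$ and $\log k$ terms.

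The two genuinely non-routine points---and the places where this setting differs from both \cite{lee2015streaming} (central-limit regime) and \cite{draper2010lossless} (large deviations, symbolwise)---are what I expect to be the main obstacle: (a) verifying that truncated memory really does nullify error accumulation in the moderate-deviations regime, i.e.\ that the geometric sum over branch points $\ell$ is of the same exponential order as its first term \emph{uniformly} in $k$; and (b) pushing the Taylor expansion of the Case-(ii) exponent past the inner minimization over $\gamma$ by means of the Polyanskiy-type lemma. Once the truncated-memory encoder and the suffix-entropy decoder are fixed, the remaining ingredients (type counting, capping the union bound at $1$, the reduction to a full-window departure) are standard, and Cases~(i) and~(iii) follow by specializing the Case-(ii) analysis to a single active rate constraint.
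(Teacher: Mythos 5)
Your high-level plan (random binning, suffix-entropy decoding, Taylor expansion via a Polyanskiy-type lemma, and specializing Cases (i)/(iii) from the Case (ii) analysis) matches the paper's broad strategy, and your identification of the two non-routine points is exactly right. But your proposed \emph{encoder}---a fixed-memory random hash $M_{1,k}=\beta_k^X(\bX_{k-T+1}^k)$ of the last $T$ blocks---is materially different from the paper's truncated-memory scheme, and the error analysis you sketch does not go through for it.

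The load-bearing claim in your argument is that ``a departure set containing block $k$ forces the candidate into the true bin at $k+T-\ell$ times on each departing coordinate.'' That is true only if each codeword is a hash of the \emph{entire accumulated prefix} within the current window, so that \emph{any} disagreement anywhere before time $j$ perturbs $M_{1,j}$. With a fixed $T$-block hash, $M_{1,j}$ depends only on $\bX_{j-T+1}^j$; a competitor that departs from the truth at block $\ell$, ``returns to agreement'' for $T$ or more consecutive blocks, and then departs again at block $k$ triggers only the codewords $j\in[\ell,\ell+T-1]\cup[k,k+T-1]$, i.e.\ at most $2T$ binning constraints, \emph{not} $k+T-\ell$. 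Consequently the per-$\ell$ probability of such sparse-departure events does not decay geometrically in $k-\ell$, and the union over the first departure position $\ell\in[1:k]$ gives a $\Theta(k)$ prefactor that cannot be absorbed uniformly in $k$ as required by the sup-over-$k$ criterion in \eqref{errorreq}. For sufficiently low-entropy sources one even loses positivity of the exponent for these events (e.g.\ when $T\cdot(R_X^*+R_Y^*)<\log|\calX\times\calY|$). So the stated geometric-series step (``dominated uniformly in $k$ by the $\ell=k$ term'') is the gap: the truncation you use bounds the \emph{forward} influence of a block, but it also \emph{caps} the number of codewords a past departure is tied to, and the second effect breaks your argument.

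The paper avoids this precisely by making each codeword a hash of the entire buffer contents $\bX_{t_q}^k$, where the buffer grows from $\Omega=2T$ to $\Psi=n^{1/2+\delta}$ and then resets. Inside a phase the code is tree-like, so a departure at $\ell$ does constrain \emph{all} codewords from $\max(\ell,\alpha_{q-1})$ to $\beta_{q-1}$; the reset is what caps the number of positions the decoder must backtrack over to $O(\Psi)=O(n^{1/2+\delta})$, which is why only the additive $O(\log n)$ term appears (and why the stronger condition $\tfrac{\log n}{n\xi_n^2}\to 0$ is imposed). In addition, the paper's error decomposition runs over joint $(l,m)$ first-departure pairs across \emph{both} coordinates (events $\calE^1_{\mathrm{sw}},\ldots,\calE^6_{\mathrm{sw},j}$), whereas your decomposition tracks a single earliest departure $\ell$; the interleaving of $\calX$- and $\calY$-departures is one of the paper's stated reasons why a simple blow-up of the single-source tree-code argument fails. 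If you replace the fixed $T$-block hash with the growing-and-resetting buffer (so that each codeword hashes the whole buffer contents) and carry out the $(l,m)$-indexed decomposition, the rest of your outline---effective rates $R_{X,n}^l,R_{Y,n}^m$ close to $R_{X,n},R_{Y,n}$, identification of the final decoding step as the dominant event, and the Taylor/Polyanskiy expansion through the $\gamma$-minimization---is the right argument and essentially what the paper does.
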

The proof of Theorem \ref{streamingswmdc} is provided in Section \ref{sec:swmdc}.  Several remarks are now in order. 
\begin{enumerate}

\item At a high level, our proof proceeds by combining the truncated memory idea for second-order analysis from \cite{lee2015streaming} and the so called minimum weighted empirical suffix entropy decoding idea from~\cite{draper2010lossless}. Subsequently, we Taylor expand the resultant  exponents at rates near the first-order fundamental limit and invoke some continuity arguments from \cite{polyanskiy2010thesis}. We remark that, unlike the moderate deviations analysis in \cite{lee2015streaming}, the truncated memory idea appears to be necessary. Otherwise,  the decoding error probability of a particular source block pair is upper bounded by the sum of the probabilities of $k$ ``dominant'' error events. In general, $k$ (the index of the source block we wish to decode) can be much larger than any exponential function of the blocklength so if truncation is not performed, this upper bound on the error probability would be vacuous as it would exceed one (cf. Section \ref{sec:needtrun} for a detailed explanation). However, if we adopt the maximum error criterion over $L = \exp(o(n\xi_n^2))$ source blocks (cf.~\eqref{errorreqa2}), then the truncated memory scheme can be shown to be no longer necessary. We believe that in order to extend the single-user streaming result  in \cite{lee2015streaming} to  multi-terminal  settings, ideas similar to using truncated memory encoding are required.

\item  The first term of the minimization in \eqref{caseii} is somewhat unusual so we comment on its significance here. This term results from the analysis of our streaming setup. Essentially, we perform standard random binning \cite{cover1975proof} and in the decoding procedure, several dominant error events result  from the use of the minimum weighted empirical suffix entropy decoder (see \eqref{suffixentropy}) due to the streaming scenario. More precisely, let $l$ be the first index where  the true source sequence $\bx^{T_k}$ differs from a competitor source sequence $\tilde{\bx}^{T_k}$. Similarly, let $m$ denote the first index where the true source sequence $\by^{T_k}$ differs from a competitor source sequence $\tilde{\by}^{T_k}$. The minimization over $\gamma\in[0,1]$ in \eqref{caseii} results from identifying the dominant error events over all feasible choices of $l$ and $m$. 

\item To achieve the moderate deviations constant in Theorem \ref{swmdc} using our coding scheme, we need the more stringent condition on the backoff sequence $\{\xi_n\}_{n\in\bbN}$, namely $\frac{n\xi_n^2}{\log n}\to\infty$ (compared to $n\xi_n^2\to\infty$ which is standard in moderate deviations analysis \cite{altugwagner2014, polyanskiy2010channel}).  However, we believe that this  condition cannot be easily relaxed in the current problem as well as other multi-terminal coding streaming problems. Furthermore when we use a truncated memory code for multi-terminal problems, we have roughly $O(n^{\frac{1}{2}})$ dominant error events, and thus  we need the additional logarithm to nullify these error events. This  stringent condition on $\{\xi_n\}_{n\in\bbN}$ \emph{cannot} be relaxed even if we employ an analogue of the (non-universal) maximum likelihood decoder instead of the minimum weighted empirical suffix entropy decoder.

\item We also discuss briefly on the difficulties faced to derive a matching converse result for streaming source coding problems. The only tight result (with matching achievability and converse) for streaming source coding problems was proved by Chang and Sahai in \cite{chang2006error}, where they derived the optimal delay exponent for symbolwise streaming point-to-point lossless source coding problem. In \cite{chang2006error}, they proved the achievability part by using a {\em fixed-to-variable} coding idea coupled with a ``first in, first out'' (FIFO) encoder. We consider fixed-to-fixed-length coding in this paper. Chang \emph{et al.}~\cite{chang2007price} made attempts to establish a converse result in the large deviations regime for lossless streaming source coding with decoder side information  using feedforward decoding. However, the derived bounds on optimal error exponents differ significantly (from the achievability) except for some very pathological sources. Adopting the idea of feedforward decoders to our setting~\cite{sahai2008block,chang2007streaming}, we can derive a converse moderate deviations result. However, because of the suboptimality of the bounds on the error exponents, the result turns out to yield a moderate deviations constant of infinity, which is vacuous.

\item In the streaming SW setting, for Cases (i) and (iii), we can  clearly achieve \emph{at least} $T$ times of the moderate deviations constant compared to the non-streaming case (see Theorem \ref{swmdc}). However, for Case (ii), which is most interesting, we cannot guarantee a gain of $T$ in moderate deviations constant  for all possible values of $\bm{\theta}\in\Theta_{\mathrm{(ii)}}$ (recall \eqref{def:Theta2}). In Proposition~\ref{ttime}, we show that for each correlated source $P_{XY}$, there exists a non-empty set of values of $\bm{\theta}\in\Theta_{\mathrm{(ii)}}$ such that we have at least a multiplicative gain of $T$ in the moderate deviations constant compared to Case (ii) of the non-streaming setting. Recall that $\bm{\theta}$ designates the direction of approach of a pair of rates towards the boundary of the SW region.

\end{enumerate}

For simplicity in notation, let the objective function in \eqref{caseii} be denoted as
\begin{align}
f(P_{XY},\gamma,\bm{\theta}):=\frac{(\theta_1+(1-\gamma)\theta_2)^2}{2\left(\gamma \rmV(P_{X|Y}|P_Y)+(1-\gamma)\rmV(P_{XY})\right)}.
\end{align}
Also define  the functions 
\begin{align}
g_1(P_{XY})&:=\frac{\rmV(P_{XY})-\rmV(P_{X|Y}|P_Y)}{2\rmV(P_{X|Y}|P_Y)},\label{def:g1} \qquad\mbox{and}\\
g_2(P_{XY})&:= \min\left\{\sqrt{\frac{\rmV(P_{XY})}{\rmV(P_{X|Y}|P_Y)}}-1,\frac{\rmV(P_{XY})-\rmV(P_{X|Y}|P_Y)}{\rmV(P_{XY})+\rmV(P_{X|Y}|P_Y)}\right\}\label{def:g2}.
\end{align}

We remark that $g_1(P_{XY})\geq g_2(P_{XY})$ for all sources $P_{XY}$ with $\rmV(P_{X|Y}|P_Y)>0$. This can be verified by considering the different relationships between $\rmV(P_{XY})$ and $\rmV(P_{X|Y}|P_Y)$. For the case in which $\rmV(P_{XY})=\rmV(P_{X|Y}|P_Y)$, we have $g_1(P_{XY})=g_2(P_{XY})=0$. When $\rmV(P_{XY})>\rmV(P_{X|Y}|P_Y)$, we have
\begin{align}
g_1(P_{XY})
> \frac{\rmV(P_{XY})-\rmV(P_{X|Y}|P_Y)}{\rmV(P_{XY})+\rmV(P_{X|Y}|P_Y)}\geq g_2(P_{XY}). \label{eqn:g1g2}
\end{align}
Lastly, when $\rmV(P_{XY})<\rmV(P_{X|Y}|P_Y)$, we have 
\begin{equation}
g_1(P_{XY})=\frac{1}{2}\left( \frac{ \rmV(P_{XY})}{ \rmV(P_{X|Y}|P_Y) } -1\right) >  \sqrt{\frac{\rmV(P_{XY})}{\rmV(P_{X|Y}|P_Y)}}-1\ge g_2(P_{XY}) \label{eqn:g1g22}
\end{equation}
where the strict inequality holds because $\frac{1}{2}(t-1) > \sqrt{t}-1$ for all $t\in [0,1)$ (due to the strict concavity of $t\mapsto\sqrt{t}-1$).  From the bounds in \eqref{eqn:g1g2} and \eqref{eqn:g1g22}, we also infer that for all sources $P_{XY}$, $g_1(P_{XY})=g_2(P_{XY})$ if and only if $\rmV(P_{XY})=\rmV(P_{X|Y}|P_Y)$.

We now state necessary and sufficient conditions on $\bm{\theta}\in\Theta_{\mathrm{(ii)}}$ (recall \eqref{def:Theta2}) for obtaining a multiplicative gain of $T$ in the moderate deviations constant under our proposed scheme.

\begin{proposition}[Conditions for Obtaining a Multiplicative Gain of $T$]
\label{ttime}
For each source distribution $P_{XY}$, the moderate deviations constant in the streaming setting is at least a factor of $T$ larger than the non-streaming counterpart under our proposed coding scheme, i.e.,
\begin{align}
\min\left\{\inf_{\gamma\in[0,1]}f(P_{XY},\gamma,\bm{\theta}),\frac{(\theta_1+\theta_2)^2}{2\rmV(P_{X|Y}|P_Y)}\right\}&=\eqref{caseiinonstreaming}\label{tgain},
\end{align}
if and only if $\bm{\theta}\in\Theta_{\mathrm{(ii)}}$ satisfies 
\begin{align}
\frac{\theta_2}{\theta_1}\geq g_1(P_{XY}),\label{condtheta1}
\end{align}
or
\begin{align}
-1<\frac{\theta_2}{\theta_1}\leq g_2(P_{XY})\label{condtheta2}.
\end{align}
\end{proposition}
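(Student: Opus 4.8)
The plan is to read \eqref{tgain} as a statement about where the scalar function $\gamma\mapsto f(P_{XY},\gamma,\bm{\theta})$ attains its infimum on $[0,1]$. First I would record that $f(P_{XY},1,\bm{\theta})=\frac{\theta_1^2}{2\rmV(P_{X|Y}|P_Y)}$ and $f(P_{XY},0,\bm{\theta})=\frac{(\theta_1+\theta_2)^2}{2\rmV(P_{XY})}$, so that the right-hand side of \eqref{tgain}, namely the expression in \eqref{caseiinonstreaming}, is exactly $\min\{f(P_{XY},0,\bm{\theta}),f(P_{XY},1,\bm{\theta})\}$; in particular $\inf_{\gamma\in[0,1]}f(P_{XY},\gamma,\bm{\theta})\le\eqref{caseiinonstreaming}$ always, and the other term in the outer minimum of \eqref{caseii}/\eqref{tgain} (arising from the sum-rate error event) does not affect this comparison. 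Consequently \eqref{tgain} holds if and only if $\inf_{\gamma\in[0,1]}f(P_{XY},\gamma,\bm{\theta})=\min\{f(P_{XY},0,\bm{\theta}),f(P_{XY},1,\bm{\theta})\}$, i.e.\ if and only if the infimum of $f(P_{XY},\cdot,\bm{\theta})$ over $[0,1]$ is attained at an endpoint $\gamma\in\{0,1\}$.

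For the analysis I would substitute $\beta=1-\gamma$ and write, with the abbreviations $V:=\rmV(P_{XY})$ and $W:=\rmV(P_{X|Y}|P_Y)$ (both positive by the standing assumption),
\begin{align}
f(\beta)=\frac{(\theta_1+\beta\theta_2)^2}{2\bigl(W+\beta(V-W)\bigr)},\qquad \beta\in[0,1].
\end{align}
On $[0,1]$ the base $\theta_1+\beta\theta_2$ is positive (it is affine, positive at $\beta=0$ since $\theta_1>0$ and at $\beta=1$ since $\theta_1+\theta_2>0$ for $\bm{\theta}\in\Theta_{(\mathrm{ii})}$) and the denominator is positive, so a short computation shows that $f'(\beta)$ has the same sign as the affine function $\phi(\beta):=\theta_2(V-W)\beta+\bigl(2\theta_2 W-\theta_1(V-W)\bigr)$. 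I would then split on the sign of the slope $\theta_2(V-W)$. If $\theta_2(V-W)\le 0$ — which covers $V=W$, $\theta_2=0$, and the sign patterns $(\theta_2>0,\,V<W)$ and $(\theta_2<0,\,V>W)$ — then $\phi$ is non-increasing, so $f$ is monotone on $[0,1]$ or first increases and then decreases; in each such case $\inf_{[0,1]}f=\min\{f(0),f(1)\}$, so \eqref{tgain} holds, and one checks directly from \eqref{def:g1}--\eqref{def:g2} that these $\bm{\theta}$ all satisfy \eqref{condtheta1} or \eqref{condtheta2} (when $V=W$, $g_1=g_2=0$ and \eqref{condtheta1}$\cup$\eqref{condtheta2} is all of $\Theta_{(\mathrm{ii})}$; when $V<W$, $g_1<0$; when $V>W$, $g_2>0$). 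If instead $\theta_2(V-W)>0$, then $\phi$ is strictly increasing, so $f$ strictly decreases and then strictly increases on $[0,1]$, with unique interior minimizer at the root $\beta^\star=\tfrac{\theta_1}{\theta_2}-\tfrac{2W}{V-W}$ of $\phi$; here \eqref{tgain} holds if and only if $\beta^\star\le 0$ or $\beta^\star\ge 1$ (so that $f$ is monotone on $[0,1]$) and fails if and only if $\beta^\star\in(0,1)$.

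The last step is to express $\beta^\star\le 0$ and $\beta^\star\ge 1$ as conditions on $r:=\theta_2/\theta_1$. Dividing by $\theta_1>0$ and clearing denominators while tracking the sign of $r(V-W)>0$, I expect $\beta^\star\le 0\iff r\ge\frac{V-W}{2W}=g_1(P_{XY})$, which is \eqref{condtheta1}, and $\beta^\star\ge 1\iff r\le\frac{V-W}{V+W}$, which together with the automatic bound $r>-1$ on $\Theta_{(\mathrm{ii})}$ is \eqref{condtheta2} once we identify $\frac{V-W}{V+W}=g_2(P_{XY})$. For that identification I would use the elementary inequality $\bigl(\sqrt{V/W}-1\bigr)-\frac{V-W}{V+W}=\frac{\sqrt{V/W}\,(\sqrt{V/W}-1)^2}{V/W+1}\ge 0$, which shows that the binding term in the minimum defining $g_2$ is always $\frac{V-W}{V+W}$; I would also invoke the strict inequality $g_1>g_2$ for $V\ne W$ (already established in the text) to note that when \eqref{tgain} fails we have $g_2<r<g_1$, so both \eqref{condtheta1} and \eqref{condtheta2} fail there. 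Combining the two cases yields the claimed equivalence. I expect the main obstacle to be exactly this sign bookkeeping: since $V-W$ and $r$ both change sign between the regimes $V>W$ and $V<W$, one must check carefully that the inequalities flip consistently, that the thresholds come out as $g_1$ and $g_2$ (in particular that $g_2=\frac{V-W}{V+W}$ rather than $\sqrt{V/W}-1$), and that the failure interval $(g_2,g_1)$ lies inside the range of $r$ allowed in this case.
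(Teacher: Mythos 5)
Your proposal is correct, and while it works with the same function $f$ as the paper, the organization is genuinely different and a bit cleaner. The paper's proof computes $f'$ and $f''$ in $\gamma$, uses $f''\geq 0$ to establish convexity, and then case-splits on the location of the minimizer $\gamma^*\in\{0\}$, $\{1\}$, or $(0,1)$, with the last case requiring a separate contradiction argument ending in the impossible equality $\theta_2/\theta_1=\frac{V-W}{V+W}$. You instead reduce \eqref{tgain} to the statement that $\inf_{[0,1]}f$ is attained at an endpoint, substitute $\beta=1-\gamma$, and note that $f'(\beta)$ shares the sign of the affine function $\phi(\beta)=\theta_2(V-W)\beta+\bigl(2\theta_2 W-\theta_1(V-W)\bigr)$; the dichotomy on the slope $\theta_2(V-W)$ then replaces the paper's three sub-cases, the convexity computation is absorbed into the sign structure of $\phi$, and the interior-minimizer contradiction becomes simply ``$\beta^\star\in(0,1)$, which is exactly $g_2<\theta_2/\theta_1<g_1$.'' The most substantive new observation is $\bigl(\sqrt{V/W}-1\bigr)-\frac{V-W}{V+W}=\frac{\sqrt{V/W}\,(\sqrt{V/W}-1)^2}{V/W+1}\geq 0$, showing that the minimum in the definition \eqref{def:g2} of $g_2$ is always attained by $\frac{V-W}{V+W}$; the paper arrives at $g_2$ as the intersection of the $\gamma^*=0$ first-order condition with the separate requirement $f(0)\leq f(1)$, whereas in your framing the single inequality $\beta^\star\geq 1$ yields $\frac{V-W}{V+W}$ directly and the $\sqrt{V/W}-1$ threshold never binds. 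One point worth flagging explicitly: you read the second term inside the outer minimum of \eqref{tgain} as $\frac{(\theta_1+\theta_2)^2}{2\rmV(P_{XY})}$, i.e.\ $f(P_{XY},0,\bm{\theta})$, which is the reading consistent with \eqref{caseii} and with the paper's own proof, even though the displayed \eqref{tgain} has $\rmV(P_{X|Y}|P_Y)$ in the denominator; under the literal reading your reduction to ``$\inf$ attained at an endpoint'' would require an extra step, so it is better to note this as a likely typo than to adopt the interpretation silently.
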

The proof of Proposition \ref{ttime} is provided in Appendix \ref{proofttime}.

Note that for any source $P_{XY}$, $g_1(P_{XY})\in\bbR$. Thus, Proposition \ref{ttime}, and in particular \eqref{condtheta1}, allows us to conclude that for each correlated source, there exists a non-empty set of directions of approaching a boundary rate pair $(R_X^*,R_Y^*)$, parametrized by $\bm{\theta}$, such that we can achieve a multiplicative  gain of $T$ in the moderate deviations constant compared with Case (ii) of the non-streaming SW coding. In addition, from Proposition \ref{ttime} and the observation after \eqref{eqn:g1g22} (namely that  $g_1(P_{XY}) = g_2(P_{XY}) \Longleftrightarrow\rmV(P_{XY})=\rmV(P_{X|Y}|P_Y)$), we notice that for sources $P_{XY}$ such that  $\rmV(P_{XY}) \ne \rmV(P_{X|Y}|P_Y)$, we  conclude that there will also be a non-empty set of directions of approaching the boundary in which we {\em cannot}, in general, achieve a gain of $T$ in the streaming scenario (using the coding scheme and analyses delineated in Section \ref{sec:swmdc}).

One may conceive that the multiplicative gain of $T$ is achieved easily via a coding scheme for the vanilla SW problem with an effective blocklength $nT$. However, this is \emph{not} true because such a  scheme requires the encoders to have access to {\em future} source blocks\footnote{That is,  encoder $f_k$ (resp.\ $g_k$) has access to the first $T_k=T+k-1$
source blocks and compresses a superblock $\bX^{T_k}$ (resp. $\bY^{T_k}$) with blocklength $nT$ using $T$ codewords of rate $R$ and only one source block is decoded using these $T$ codewords.}  beyond the current time and is thus inconsistent with our streaming model (cf.\ Figure \ref{systemmodelsw} and Definition \ref{def:streamingcode}).

\section{Numerical Examples} \label{sec:num}

In this section, we present three sources satisfying the conditions as stated in Section \ref{preliminary} and illustrate conditions on $\bm{\theta}$ for which we can achieve a multiplicative gain of $T$ in moderate deviations constant compared to non-streaming setting. Throughout, we focus only on Case (ii) of Theorem \ref{streamingswmdc}.

\subsection{Doubly Symmetric Binary Source}

We consider the example of a doubly symmetric binary source (DSBS) where $\calX=\calY=\{0,1\}$, $P_{XY}(0,0)=P_{XY}(1,1)=\frac{1-p}{2}$ and $P_{XY}(0,1)=P_{XY}(1,0)=\frac{p}{2}$ for some $p\in\left[0,\frac{1}{2}\right]$. Note that $P_{X}(0)=P_Y(0)=\frac{1}{2}$, $P_{X|Y}(x|y)=1-p$ if $x=y$ and $P_{X|Y}(x|y)=p$ if $x
\neq y$. The joint and conditional entropies are
\begin{align}
H(P_{XY})&=2\left(\frac{1-p}{2}\log \frac{2}{1-p}+\frac{p}{2}\log \frac{2}{p}\right)=\log 2+h_{\rmb}(p),\\
H(P_{X|Y}|P_Y)&=h_{\rmb}(p).
\end{align}
where $h_{\rmb}(p)=-p\log p-(1-p)\log (1-p)$ is the binary entropy function.

The joint source dispersion is
\begin{align}
\rmV(P_{XY})
&=2\left(\frac{1-p}{2}\left(\log\frac{2}{1-p}-h_{\rmb}(p)-\log 2\right)^2+\frac{p}{2}\left(\log\frac{2}{p}-h_{\rmb}(p)-\log 2\right)^2\right)\\
&=(1-p)\left(-\log(1-p)-h_{\rmb}(p)\right)^2+p\left(-\log p-h_{\rmb}(p)\right)^2,
\end{align}
and similarly we obtain the conditional source dispersion
\begin{align}
\rmV(P_{X|Y}|P_Y)
&=(1-p)\left(-\log(1-p)-h_{\rmb}(p)\right)^2+p\left(-\log p-h_{\rmb}(p)\right)^2.
\end{align}
Hence, $\rmV(P_{XY})=\rmV(P_{X|Y}|P_Y)$ for this class of sources and we expect to achieve a gain of $T$. 

Indeed, using the definitions of $g_1$ in~\eqref{def:g1} and  $g_2$ in~\eqref{def:g2}, we obtain
\begin{align}
g_1(P_{XY}) =
g_2(P_{XY}) =0.
\end{align}
Now invoking Proposition~\ref{ttime} and \eqref{def:Theta2}, we conclude that for any DSBS, for all $\bm{\theta}\in\Theta_{\mathrm{(ii)}}$, we can achieve a multiplicative  gain of $T$ in the moderate deviations constant.

\begin{figure}[t]
\centering
\includegraphics[width=12cm]{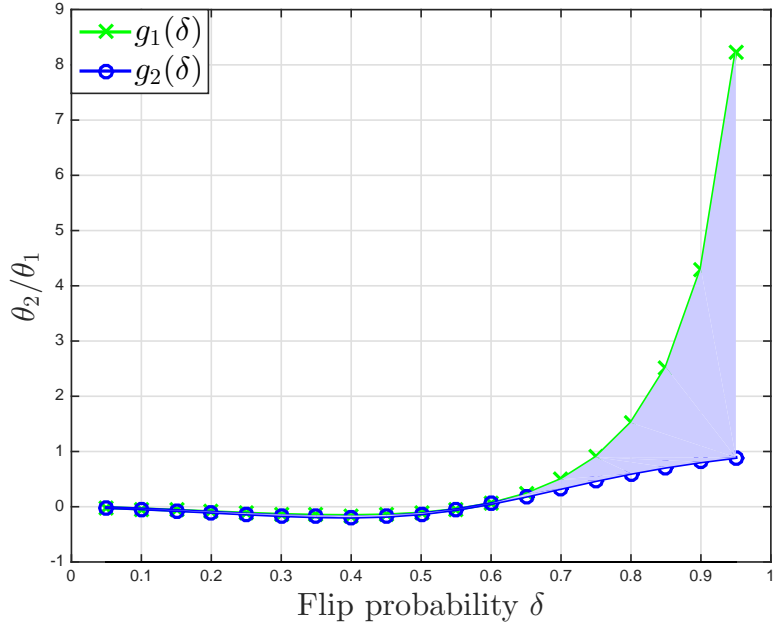}
\caption{Bounds on $\theta_2/\theta_1$ for the correlated binary source  generated by a Z-channel determined by $\delta$.  Proposition~\ref{ttime} says that the set of $\theta_2/\theta_1$ such that we obtain a multiplicative gain of $T$ in the moderate deviations constant are precisely contained in those areas  {\em excluding the shaded region}, i.e., $\theta_2/\theta_1$ above the green line ($\theta_2/\theta_1\geq g_1(\delta)$) or between the red line and the blue line ($-1<\theta_2/\theta_1\leq g_2(\delta)$). As the flip probability $\delta$ increases towards one (very noisy Z-channel), the gap between the green and blue lines increases.}
\label{gdelta}
\end{figure}

\subsection{Correlated Binary Source Generated via a Z-channel}

Let $\calX=\calY=\{0,1\}$ and $X\sim\mathrm{Bern}(\frac{1}{2})$. The random variable $Y$ is generated by transmitting $X$ through a Z-channel with flip probability $\delta\in(0,1)$~\cite[Figure 1.7]{palaiyanur2011impact}, i.e., $P_{Y|X}(0|0)=1$ and $P_{Y|X}(0|1)=\delta$. For this source, it can be verified that (using the convention that $0\log 0=0$)
\begin{align}
H(P_{XY})&=\log 2+\frac{1}{2} h_{\rmb}(\delta),\\
H(P_{X|Y}|P_Y)&=\frac{1+\delta}{2}h_{\rmb}\left(\frac{1}{1+\delta}\right),
\end{align}
and
\begin{align}
\rmV(P_{XY})&=\frac{1}{8}h_{\rmb}(\delta)^2+\frac{\delta}{2}\left( \log \frac{1}{\delta}-\frac{1}{2}h_{\rmb}(\delta)\right)^2+\frac{1-\delta}{2}\left( \log \frac{1}{1-\delta}-\frac{1}{2}h_{\rmb}(\delta)\right)^2,\\
\rmV(P_{X|Y}|P_Y)\nn&=\frac{1}{2}\left(\log (1+\delta)-\frac{1+\delta}{2}h_{\rmb}\left(\frac{1}{1+\delta}\right)\right)^2+\frac{\delta}{2}\left(\log\frac{1+\delta}{\delta}-\frac{1+\delta}{2}h_{\rmb}\left(\frac{1}{1+\delta}\right)\right)^2\\
&\qquad+\frac{(1+\delta)(1-\delta^2)}{8}\left(h_{\rmb}\left(\frac{1}{1+\delta}\right)\right)^2.
\end{align}
Note that $\rmV(P_{XY})$ and $\rmV(P_{X|Y}|P_Y)$ are functions of only $\delta$. Hence, we use $g_1(\delta)$ and $g_2(\delta)$ in place of  \eqref{def:g1} and \eqref{def:g2} for this correlated source.

Invoking Proposition~\ref{ttime}, we know that for any $\bm{\theta}\in\Theta_{\mathrm{(ii)}}$ (recall \eqref{def:Theta2}) such that $\frac{\theta_2}{\theta_1}\geq g_1(\delta)$ or $-1<\frac{\theta_2}{\theta_1}\leq g_2(\delta)$, we obtain a gain of $T$ in the moderate deviations constant compared with Case (ii) of  the non-streaming setting in Theorem \ref{swmdc}. In Figure \ref{gdelta}, we plot the two functions $g_1(\delta)$ and $g_2(\delta)$ against the flip probability $\delta$. From Figure \ref{gdelta}, we observe that  when $\delta=0.6$, we obtain a multiplicative gain of $T$ in the moderate deviations constant except when $\bm{\theta}\in\Theta_{\mathrm{(ii)}}$ is such that $0.056<\frac{\theta_2}{\theta_1}<0.058$ (which is a small interval).

\subsection{Asymmetric Correlated Binary Source }
In this subsection, we consider an asymmetric correlated binary sources where $\calX=\calY=\{0,1\}$, $P_{XY}(0,0)=1-3p$ and $P_{XY}(0,1)=P_{XY}(1,0)=P_{XY}(1,1)=p$ for some $p\in\left(0,\frac{1}{4}\right)\cup\left(\frac{1}{4},\frac{1}{3}\right)$. We do not allow $p$ to be $0$ (constant/degenerate distribution) or $1/4$ (uniform distribution), otherwise at least one of the joint and conditional source dispersions (varentropies) is zero, which is not permitted.

The joint and conditional entropies are 
\begin{align}
H(P_{XY})&=-(1-3p)\log (1-3p)-3p\log p,\\
H(P_{X|Y}|P_Y)&=(1-2p)h_{\rmb}\left(\frac{1-3p}{1-2p}\right)+2p \log 2.
\end{align}
The joint and conditional dispersions are
\begin{align}
\rmV(P_{XY})&=(1-3p)\left(-\log(1-3p)-H(P_{XY})\right)^2+3p\left(-\log p-H(P_{XY})\right)^2,\\
\nn\rmV(P_{X|Y}|P_Y)&=(1-3p)\left(\log\frac{1-2p}{1-3p}-H(P_{X|Y}|P_Y)\right)^2+p\left(\log\frac{1-2p}{p}-H(P_{X|Y}|P_Y)\right)^2\\
&\qquad+2p\left(\log2-H(P_{X|Y}|P_Y)\right)^2
\end{align}

Define $g_1(p),g_2(p)$ similarly as \eqref{def:g1} and \eqref{def:g2}. We plot these two functions $g_1(p)$ and $g_2(p)$ in Figure \ref{asymmetric}. Note that when $p=0.1$, for all $\bm{\theta}\in\Theta_{\mathrm{(ii)}}$ (recall \eqref{def:Theta2}) such that 
$\frac{\theta_2}{\theta_1}\in(-1,0.40]\cup[0.67,\infty)$, we can achieve a multiplicative gain of $T$ in the moderate deviations constant. When $\frac{\theta_2}{\theta_1} \in (0.40, 0.67)$ we cannot guarantee that we can achieve this gain using the proposed coding scheme. 

\begin{figure}[t]
\centering
\includegraphics[width=12cm]{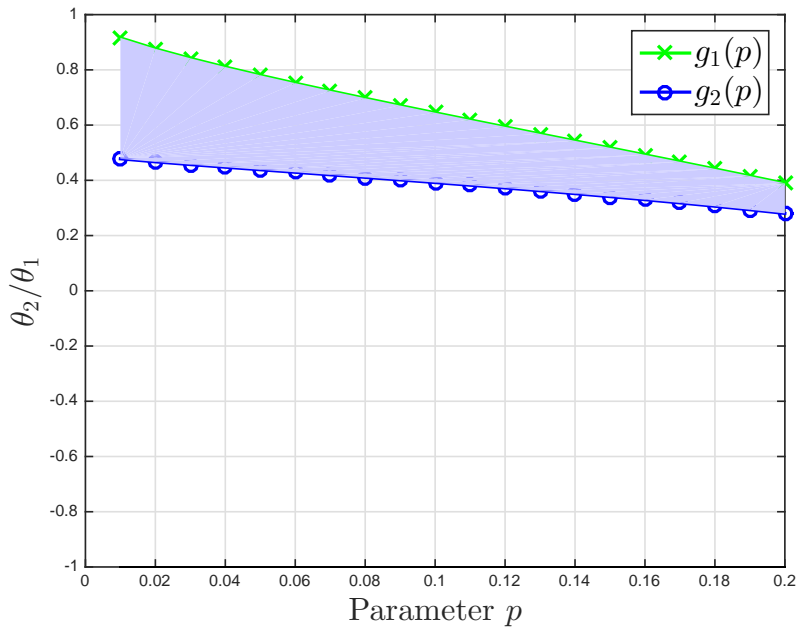}
\caption{Bounds on $\theta_2/\theta_1$ for the asymmetric correlated binary source  determined by $p$.  See the caption of Figure \ref{gdelta} for the set of $\theta_2/\theta_1$ such that we obtain a gain of $T$.}
\label{asymmetric}
\end{figure}

\section{Some Specializations and Generalizations} \label{sec:gen}
In this section, we specialize Theorem \ref{streamingswmdc} to point-to-point lossless source coding with/without decoder side information and we also generalize our result to the case where the delay requirements on both sources are different.

\subsection{Specialization to Streaming Lossless Source Coding with and without Decoder Side Information}

In this subsection, we first specialize the result in Theorem \ref{streamingswmdc} to the one encoder and one decoder with causal decoder side information setting. We term this as \emph{streaming lossless source coding with decoder side information.} An $(n,N_1,T,\epsilon_n)$-streaming code can be defined similarly as Definition \ref{def:streamingcode} by removing the encoding functions $g_k$ and replacing the decoding functions by $\phi_k:\calM_1^{k+T-1}\times\calY^{n(k+T-1)}\to\calX^n$. Each of these  decoding functions maps the accumulated codewords $\{ M_{1,j}\}_{j=1}^{k+T-1}$ and causal side information $\{\bY_j\}_{j=1}^{k+T-1}$ into a source block $\bX_k$. Similarly as Definition \ref{defmdcsw}, we can define the moderate deviations constant $\nu^*_{X|Y}(R_X^*,T)$ for this setting. Note that we let $\theta_1=1$ and ignore the requirement on the rate of second encoder.

For non-streaming  lossless source coding with decoder side information, it was shown in Chen {\em et al.}~\cite[Theorems~4 and~5]{chen2007redundancy} that the optimal moderate deviations constant is $( {2\rmV(P_{X|Y}|P_Y)})^{-1}$. By specializing Case (i) in Theorem \ref{streamingswmdc}, we obtain the following result, which says that under our setting in which we allow for a block delay of $T$, the optimal moderate deviations constant is enhanced by at least $T$.
\begin{corollary}
\label{streamingdsi}
The optimal  moderate deviations constant for streaming lossless source coding with decoder side information satisfies
\begin{align}
\nu^*_{X|Y}(H(P_{X|Y} | P_Y),T)\geq \frac{T}{2\rmV(P_{X|Y}|P_Y)}.
\end{align}
When the side information is not available, the optimal moderate deviations constant satisfies
\begin{equation}
\nu_X^*(H(P_X),T)\geq \frac{T}{2\rmV(P_X)}.
\end{equation}
\end{corollary}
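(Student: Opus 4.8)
The plan is to derive both inequalities as straightforward specializations of Case~(i) of Theorem~\ref{streamingswmdc}, viewing the presence or absence of decoder side information as a change of the underlying correlated source. \emph{Side-information case.} I would fix $R_X^* = H(P_{X|Y}|P_Y)$ together with any $R_Y^* > H(P_Y)$, so that the boundary rate pair lies in Case~(i), and fix $\bm{\theta} = (1,0) \in \Theta_{(\mathrm{i})}$ (recall that $\theta_1$ need only be positive, and setting $\theta_1 = 1$ merely normalizes the speed of approach to be exactly $\xi_n$). Then I would argue that any $(n,N_1,N_2,T,\epsilon_n)$ streaming SW code induces a streaming code with decoder side information having the same triple $(n,N_1,T)$ and error probability at most $\epsilon_n$: retain the encoders $f_k$, and let the side-information decoder first recompute the $Y$-codewords $M_{2,j} = g_j(\bY^j)$ from the causally available side information $\bY^{k+T-1}$, run the SW decoder $\phi_k$, and output only its $\bX$-component. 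Since the event $\{\hat{\bX}_k \ne \bX_k\}$ is contained in the SW error event $\{(\hat{\bX}_k,\hat{\bY}_k) \ne (\bX_k,\bY_k)\}$, the error probability can only decrease, so $\nu^*_{X|Y}(R_X^*,T) \ge \nu^*(R_X^*,R_Y^*,\bm{\theta},T)$; invoking Case~(i) of Theorem~\ref{streamingswmdc} with $\theta_1 = 1$ then gives $\nu^*_{X|Y}(H(P_{X|Y}|P_Y),T) \ge T/(2\rmV(P_{X|Y}|P_Y))$.

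\emph{No-side-information case.} The natural move is to take $Y$ degenerate, since then $H(P_{X|Y}|P_Y) = H(P_X)$ and $\rmV(P_{X|Y}|P_Y) = \rmV(P_X)$; the only wrinkle is that a constant $Y$ violates the standing assumption that all three source dispersions are positive. I would sidestep this by augmenting the source with an independent auxiliary variable $Y' \sim \Ber(q)$ with $q \ne 1/2$ and setting $P_{XY'} := P_X \times P_{Y'}$. Because $-\log P_{XY'}(x,y')$ splits into independent summands, one has $\rmV(P_{X|Y'}|P_{Y'}) = \rmV(P_X) > 0$, $\rmV(P_{Y'|X}|P_X) = \rmV(P_{Y'}) > 0$ and $\rmV(P_{XY'}) = \rmV(P_X) + \rmV(P_{Y'}) > 0$, so $P_{XY'}$ meets the positivity requirement and the first part applies, yielding $\nu^*_{X|Y'}(H(P_X),T) \ge T/(2\rmV(P_X))$ (using $H(P_{X|Y'}|P_{Y'}) = H(P_X)$ and $\rmV(P_{X|Y'}|P_{Y'}) = \rmV(P_X)$). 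Finally I would convert any streaming code with decoder side information $\bY'$ for the source $(X,Y')$ into a genuine no-side-information code for $X$ of identical performance, by having the decoder simulate the i.i.d.\ $\Ber(q)$ sequence $\{\bY'_j\}$ internally and feed it to the side-information decoder. Since $\bY'$ is independent of $\bX$ and of the codebook, the joint law of $(\bX^\infty,\{M_{1,j}\},\hat{\bX}_k)$, and hence $\sup_k \Pr(\hat{\bX}_k \ne \bX_k)$, is unchanged, giving $\nu^*_X(H(P_X),T) \ge T/(2\rmV(P_X))$. An alternative would be to observe that the proof of Case~(i) of Theorem~\ref{streamingswmdc} in Section~\ref{sec:swmdc} only uses positivity of $\rmV(P_{X|Y}|P_Y)$, the $Y$-layer being decoded essentially without cost once $R_Y^* > H(P_Y)$, so it specializes verbatim to a degenerate $Y$ and the no-side-information bound drops out directly.

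There is no serious obstacle here, since the argument is just a chain of reductions; the step that needs the most care is reconciling the degenerate-$Y$ specialization with the standing positivity-of-dispersions hypothesis, which is what forces the small auxiliary-variable workaround (or the appeal to the internal structure of the Case~(i) proof). One should also check that each reduction --- the genie side-information decoder built from the SW decoder, and the internal simulation of $\bY'$ --- preserves the error probability \emph{exactly} rather than up to vanishing corrections; both do, the first because one error event is literally a sub-event of the other, and the second because independence of the auxiliary sequence makes the simulated $\bY'$ statistically indistinguishable from a true observation.
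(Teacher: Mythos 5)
Your proof is correct, and it fills in the argument that the paper only gestures at (``by specializing Case~(i) of Theorem~\ref{streamingswmdc}''). Both reductions are sound. In the side-information case, the genie decoder can regenerate every $M_{2,j}=g_j(\bY^j)$, $j\le k+T-1$, from the causally available side information before running the SW decoder $\phi_k$, and $\{\hat{\bX}_k\ne\bX_k\}$ is literally a sub-event of the SW error event, so with $\bm{\theta}=(1,0)\in\Theta_{(\mathrm{i})}$ you recover exactly $T/(2\rmV(P_{X|Y}|P_Y))$ under the paper's convention $\theta_1=1$. The auxiliary $Y'\sim\Ber(q)$, $q\ne 1/2$, device in the no-side-information case is a genuinely useful detail that the paper does not supply; it is needed precisely because the standing assumption in Section~\ref{preliminary} requires $\rmV(P_{Y|X}|P_X)>0$, and your product-source identities $\rmV(P_{X|Y'}|P_{Y'})=\rmV(P_X)$, $\rmV(P_{Y'|X}|P_X)=\rmV(P_{Y'})>0$, $\rmV(P_{XY'})=\rmV(P_X)+\rmV(P_{Y'})>0$ are correct. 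The internal simulation of $\bY'$ preserves the per-block error probability exactly because $\bY'$ is independent of both the source and the (common) codebook randomness, and only the marginal law at each fixed $k$ matters for the $\sup_k$ criterion. One caution about your parenthetical ``specializes verbatim to degenerate $Y$'' alternative: the proof of Lemma~\ref{mdcasymp} chooses $\rho_n'$ in~\eqref{defrhon2} with $\rmV(P_{Y|X}|P_X)$ appearing in the denominator, which vanishes for constant $Y$, so that path does not literally go through without separately observing that all $E_Y$-type error events are impossible for degenerate $Y$; your auxiliary-variable route sidesteps this cleanly. Finally, the paper also records an alternative direct proof via the information-spectrum method and the moderate-deviations theorem, which is non-universal but only needs $n\xi_n^2\to\infty$, whereas your reduction inherits the stronger requirement $n\xi_n^2/\log n\to\infty$ from Theorem~\ref{streamingswmdc}.
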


Thus for streaming lossless source coding with or without decoder side information, we obtain a multiplicative gain of at least $T$ in the moderate deviations constant. We remark that Corollary~\ref{streamingdsi}  can be proved similarly as \cite[Theorem 1]{lee2015streaming} using the information spectrum method~\cite[Lemma 7.2.1]{han2003information} and the moderate deviations theorem~\cite[Theorem 3.7.1]{dembo2009large}. The proof requires that the decoder has knowledge of  the source distribution and thus it is \emph{non-universal}. However, the benefit is that we only require the usual condition on $\{\xi_n\}_{n\in\bbN}$ for the moderate deviations regime \cite{altugwagner2014}  (i.e., $n\xi_n^2\to\infty$ as $n\to\infty$)  instead of the more stringent condition that $\frac{n\xi_n^2}{\log n}\to \infty$ as $n\to\infty$ in Definition \ref{defmdcsw}.

\subsection{Generalization to Streaming with Different Delay Requirements}

In this subsection, we consider a generalization of Theorem \ref{streamingswmdc}. Recall that in Theorem \ref{streamingswmdc}, we require that  both source blocks are reconstructed with the same delay $T$. Now assume that we incur a delay  of $T_1$  blocks for source $\bX$ and a delay of $T_2$  blocks  for source $\bY$. We are now interested in characterizing the streaming performance in the moderate deviations regime. The formal definition of this setup is as follows.
\begin{definition}[Streaming Code with Different Delay Requirements]
An $(n,N_1,N_2,T_1,T_2,\epsilon_n)$-code for streaming SW coding consists of
\begin{enumerate}
\item A sequence of correlated source blocks $\{(\bX_k,\bY_k)\}_{k\geq 1}$, where $(\bX_k,\bY_k)\in\calX^n\times\calY^n$ is an i.i.d. sequence;
\item A sequence of encoding function pairs $f_k:\mathcal{X}^{kn}\to \calM_1$ that maps the accumulated source blocks $\bX^k\in\calX^{kn}$ to a codeword $M_{1,k}\in\calM_1$ and  $g_k:\mathcal{Y}^{kn}\to \calM_2$ that maps the accumulated source blocks $\by^k\in\calY^{kn}$ to a codeword $M_{2,k}\in\calM_2$ where $|\calM_i|=N_i,~i=1,2$.
\item A sequence of decoding functions $\phi_k:\calM_1^{k+T_1-1}\times \calM_2^{k+T_1-1}\to\calX^n\times\calY^n$ that maps the accumulated codewords $(M_1^{k+T_1-1},M_2^{k+T_1-1})\in\calM_1^{k+T_1-1}\times\calM_2^{k+T_1-1}$ to a source block pair $(\hat{\bX}_{k_1},\hat{\bY}_{k_2})$,
\end{enumerate}
which satisfies that for each $k$ satisfying $\max\{k_1,k_2\}\geq 1$,
\begin{align}
\Pr\big((\hat{\bX}_{k_1},\hat{\bY}_{k_2})\neq(\bX_{k_1},\bY_{k_2})\big)\leq \epsilon_n\label{tt2errorreq},
\end{align}
where $k_1=k$ and $k_2=k+T_1-T_2$.

When $k_1>0$, $k_2\leq 0$, \eqref{tt2errorreq} reduces to 
\begin{align}
\Pr\left(\hat{\bX}_{k_1}\neq \bX_{k_1}\right)\leq \epsilon_n, \label{eqn:X_only}
\end{align}
and similarly when $k_1\leq 0,k_2>0$, \eqref{tt2errorreq} reduces to 
\begin{equation}
\Pr\left(\hat{\bY}_{k_2}\neq \bY_{k_2}\right)\leq \epsilon_n.
\end{equation}
Similarly to Definition \ref{def:streamingcode}, we assume that common randomness is shared between the encoder and decoder.
\end{definition}

The optimal moderate deviations constant for this setting is defined similarly to that in Definition \ref{defmdcsw} and we denote it as  $\nu^*(R_X^*,R_Y^*,\bm{\theta},T_1,T_2)$.

\begin{theorem}
\label{swstreamingmdctt2}
The moderate deviations constant for streaming compression of correlated sources with two decoding delay requirements satisfies the same lower bounds as in Theorem \ref{streamingswmdc} with $T$ replaced by $\min\{T_1, T_2\}$. 
\end{theorem}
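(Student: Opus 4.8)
The plan is to reduce the two-delay problem to the equal-delay problem of Theorem~\ref{streamingswmdc} with common delay $T:=\min\{T_1,T_2\}$, and then invoke that theorem verbatim. The key observation is that the coding scheme achieving the bounds of Theorem~\ref{streamingswmdc} for delay $T$ has \emph{identical} encoders to those in the two-delay definition, and it decodes every block $\bX_j$ using codewords indexed by $[1:j+T-1]$ and every block $\bY_j$ using codewords indexed by $[1:j+T-1]$; since $T\le T_1$ and $T\le T_2$, these index sets are contained in $[1:j+T_1-1]$ and $[1:j+T_2-1]$, respectively. Hence the same encoders and the same truncated-memory, minimum weighted empirical suffix entropy decoding rules are admissible in the two-delay setting, which is all the reduction requires.

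Concretely, I would fix a boundary rate pair $(R_X^*,R_Y^*)$, a direction $\bm{\theta}$ in the relevant feasible set, and any $(R_X^*,R_Y^*,\bm{\theta},T)$-achievable constant $\nu$, witnessed by a sequence of $(n,N_1,N_2,T,\epsilon_n)$-codes from Section~\ref{sec:swmdc}. For the $(n,N_1,N_2,T_1,T_2,\cdot)$ setting, decoder $\phi_k$ must output $(\hat{\bX}_{k_1},\hat{\bY}_{k_2})$ with $k_1=k$, $k_2=k+T_1-T_2$, from the codewords indexed by $[1:k+T_1-1]$. I would let $\phi_k$ read off the $\bX$-estimate of the equal-delay decoder with index $k_1$ (which uses codewords $[1:k_1+T-1]\subseteq[1:k+T_1-1]$ because $T\le T_1$) together with the $\bY$-estimate of the equal-delay decoder with index $k_2$ (which uses codewords $[1:k_2+T-1]$, and $k_2+T-1=k+T_1-T_2+T-1\le k+T_1-1$ since $T\le T_2$); when $k_1\le0$ or $k_2\le0$ only one source is reconstructed and the argument is the same but shorter. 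By the union bound the left side of~\eqref{tt2errorreq} is at most $2\epsilon_n$, and since
\begin{align}
-\frac{\log(2\epsilon_n)}{n\xi_n^2}=-\frac{\log 2}{n\xi_n^2}-\frac{\log\epsilon_n}{n\xi_n^2}
\end{align}
with the first term vanishing (because $n\xi_n^2\to\infty$), the constant $\nu$ is also $(R_X^*,R_Y^*,\bm{\theta},T_1,T_2)$-achievable. Taking suprema gives $\nu^*(R_X^*,R_Y^*,\bm{\theta},T_1,T_2)\ge\nu^*(R_X^*,R_Y^*,\bm{\theta},\min\{T_1,T_2\})$, and Theorem~\ref{streamingswmdc} supplies the three claimed lower bounds. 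An essentially equivalent alternative is to re-run the achievability analysis of Section~\ref{sec:swmdc} directly in the two-delay model, noting that the number of past blocks whose codewords help resolve a target block pair is $\min\{T_1,T_2\}$ — the larger delay contributes nothing once the smaller one is exhausted, since a joint type-$P_{XY}$ confusion event is governed by whichever source is decoded from fewer codewords — which reproduces the same Taylor-expanded exponents with $T\mapsto\min\{T_1,T_2\}$.

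The main obstacle is purely bookkeeping: checking the index inequalities above and confirming that the relabeling $k\mapsto(k_1,k_2)$ introduces no new dominant error event. All the analytic substance — the truncated-memory construction that kills error accumulation, the suffix-entropy decoding bound, the continuity argument of Polyanskiy, and the moderate-deviations Taylor expansion — is already carried out in the proof of Theorem~\ref{streamingswmdc}, so no genuinely new ingredient is needed.
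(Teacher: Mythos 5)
Your proof is correct and follows essentially the same route as the paper: apply the equal-delay scheme of Theorem~\ref{streamingswmdc} with delay $\min\{T_1,T_2\}$ to both indices $k_1$ and $k_2$ (checking the codeword-index containments), then combine via the union bound and absorb the factor of $2$ in the moderate-deviations scaling. The only cosmetic difference is that the paper fixes $T_1\le T_2$ without loss of generality, whereas you carry $T=\min\{T_1,T_2\}$ symbolically; the substance is identical.
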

\begin{proof}
The proof of Theorem \ref{swstreamingmdctt2} follows by invoking Theorem \ref{streamingswmdc} twice. Suppose $T_1\leq T_2$. Then $k_2\leq k_1$. At each time $T_1+k_1-1$, we can first decode $(\bX_{k_2},\bY_{k_2})$ using codewords with indices from $1$ to $T_1+k_2-1$ and then decode $(\bX_{k_1},\bY_{k_1})$ using all the received $T_1+k_1-1$ codewords. An application of  Theorem \ref{streamingswmdc} yields that for each $k_1\geq 1$ and $k_2\geq 1$,
\begin{align}
\Pr\big((\hat{\bX}_{k_2},\hat{\bY}_{k_2})\neq (\bX_{k_2},\bY_{k_2})\big)
&\leq \exp\big(-n\xi_n^2\nu^*(R_X^*,R_Y^*,\bm{\theta},T_1)  + o(n\xi_n^2) \big),\\
\Pr\big((\hat{\bX}_{k_1},\hat{\bY}_{k_1})\neq (\bX_{k_1},\bY_{k_1})\big)
&\leq \exp\big(-n\xi_n^2\nu^*(R_X^*,R_Y^*,\bm{\theta},T_1)+ o(n\xi_n^2) \big).
\end{align}
Therefore, we have that for   for each $k_1\geq 1$ and $k_2\geq 1$,
\begin{align}
\Pr\big((\hat{\bX}_{k_1},\hat{\bY}_{k_2})\neq(\bX_{k_1},\bY_{k_2})\big)
&\leq \Pr\big((\hat{\bX}_{k_2},\hat{\bY}_{k_2})\neq (\bX_{k_2},\bY_{k_2})\big)+\Pr\big((\hat{\bX}_{k_1},\hat{\bY}_{k_1})\neq (\bX_{k_1},\bY_{k_1})\big)\\
&\leq 2\exp\big(-n\xi_n^2\nu^*(R_X^*,R_Y^*,\bm{\theta},T_1)+ o(n\xi_n^2) \big).
\end{align}
This completes the proof for the case $T_1\le T_2$. The argument for $T_2\leq T_1$ is completely analogous. 
\end{proof}

\section{Proof of Theorem \ref{streamingswmdc}}
\label{sec:swmdc}
In this section, we present the proof of Theorem \ref{streamingswmdc}. We combine the truncated memory encoding idea in \cite{lee2015streaming} and the minimum weighted empirical suffix entropy decoding in \cite{draper2010lossless} in the proof. The analysis is, however, rather \emph{different} from both works.

\subsection{Justification for the Use of Truncated Memory Encoding}
\label{sec:needtrun}

We explain why we choose to employ truncated memory coding (cf.~\cite{drapertruncated} and \cite[Theorem 2]{lee2015streaming})  for our problem. Let us first consider the point-to-point streaming source coding problem using  a simple tree code. Under this achievability scheme, when we aim to decode $\bX_k$, we actually need to decode $\bX_1,\ldots,\bX_k$ sequentially (cf.~\cite{wozencraft1957sequential,schulman,sahai2001anytime,sukhavasi2011,draper2010lossless,khisti2014dmt} and \cite[Theorem 1]{lee2015streaming}). The error in decoding any $\bX_j$ for $j=1,\ldots,k$ can potentially lead to a decoding error. Fortunately, the error probability of decoding $\bX_j$ with $j<k$ decreases exponentially fast and one can upper bound the total error probability as in \cite[Theorem~1]{lee2015streaming} to obtain the desired result. However, this is not the case for the streaming SW coding problem. Suppose a simple  tree code is used. When we aim to decode $(\bX_k,\bY_k)$, we then need to decode $(\bX_1,\bY_1),\ldots,(\bX_k,\bY_k)$ sequentially. The error in decoding can occur in any $(\bX_l,\bY_m)$ where $l,m\in[1:k]^2$, i.e., errors in the $\calX$-blocks and $\calY$-blocks are   {\em interleaved}. Let $p_{l,m}$ be the probability of decoding $(\bX_l,\bY_m)$ wrongly at time $k+T-1$. Then the error probability can be upper bounded as $\sum_{l=1}^k\sum_{m=1}^kp_{l,m}$. The inner sum $\sum_{m=1}^kp_{l,m}$ can be calculated in a similar manner as \cite{lee2015streaming} and shown to scale as $ \exp(-O(nT\xi_n^2\nu))$ for some $\nu>0$. When the maximum value of $k$ is fixed as in \cite{draper2010lossless}, the outer sum does not affect the exponent. However, in our setting, {\em any} $k\in\bbN$ (scaling with $n$) is allowed. Hence, the maximum value of $k$ is infinity and the outer sum is unbounded if we do not use the truncated memory idea to {\em limit the terms involved in outer sum}.

\subsection{Truncated Memory Encoding: Basic Idea and One Example}

In truncated memory encoding, we set a buffer to store source blocks at each encoder with maximum and minimum sizes which depend on the number of source blocks. Let the maximum and minimum sizes be denoted as $\Psi$ and $\Omega$ respectively. By setting these two values, we ensure that the maximum and minimum number of codewords that each source sequence is encoded into is $\Psi$ and  $\Omega$ (instead of infinity in a simple tree code) respectively. To achieve a multiplicative gain of $T$, under our coding scheme, we assume that $\Omega\geq T$ and $\Psi>2\Omega$ and we will ensure these two conditions hold in the sequel. The choices of $\Psi$ and $\Omega$ (as functions of $n$) form the crux of the   proof of Theorem \ref{streamingswmdc}.

The buffer for each encoder adheres to the following two rules:
\begin{enumerate}
\item At each time, only one new source block enters the buffer;
\item Once the buffer is full, in the next time slot, only the new source block and the most recent $\Omega-1$ source blocks remain in the buffer.
\end{enumerate}

To illustrate the idea behind truncated memory encoding, we first consider an example where the maximum and minimum sizes of the encoder buffer are set to be $\Psi=8$ and $\Omega=3$ respectively. We will describe the encoding procedure for encoders $\{f_k\}_{k\in\bbN}$ only since encoders $\{g_k\}_{k\in\bbN}$ operate similarly. At each time $k\in[1:\Psi=8]$, the buffer is not full and thus one new source block enters the buffer according to rule 1). Correspondingly, the encoder $f_k$ maps the first $k$ source blocks $\bX^k$ into a codeword $M_{1,k}$. At time $k=\Psi+1=9$, the buffer is full for the first time. According to rule 2), we keep the new source block and the most recent $\Omega-1=2$ source blocks in the encoder buffer, i.e., only $(\bX_7,\bX_8,\bX_9)$. Thus, the encoder $f_k$ maps $\bX_7^9$ into a codeword $M_{1,9}$. Subsequently, for $k\in[10:14]$, the encoder is not full and one new source block enters the buffer per time. Hence, the encoder $f_k$ maps $\bX_7^k$ in to a codeword $M_{1,k}$. For $k>14$, the encoding procedure continues by following the above two rules to select a subset of the accumulated source blocks to be encoded into codeword $M_{1,k}$. We illustrate the truncated memory encoding idea for this example in Figure \ref{encodingsw} for $k \in [1:20]$.

\begin{figure}[t]
\centering
\setlength{\unitlength}{0.5cm}
\begin{picture}(28,32)
\linethickness{1pt}
\input{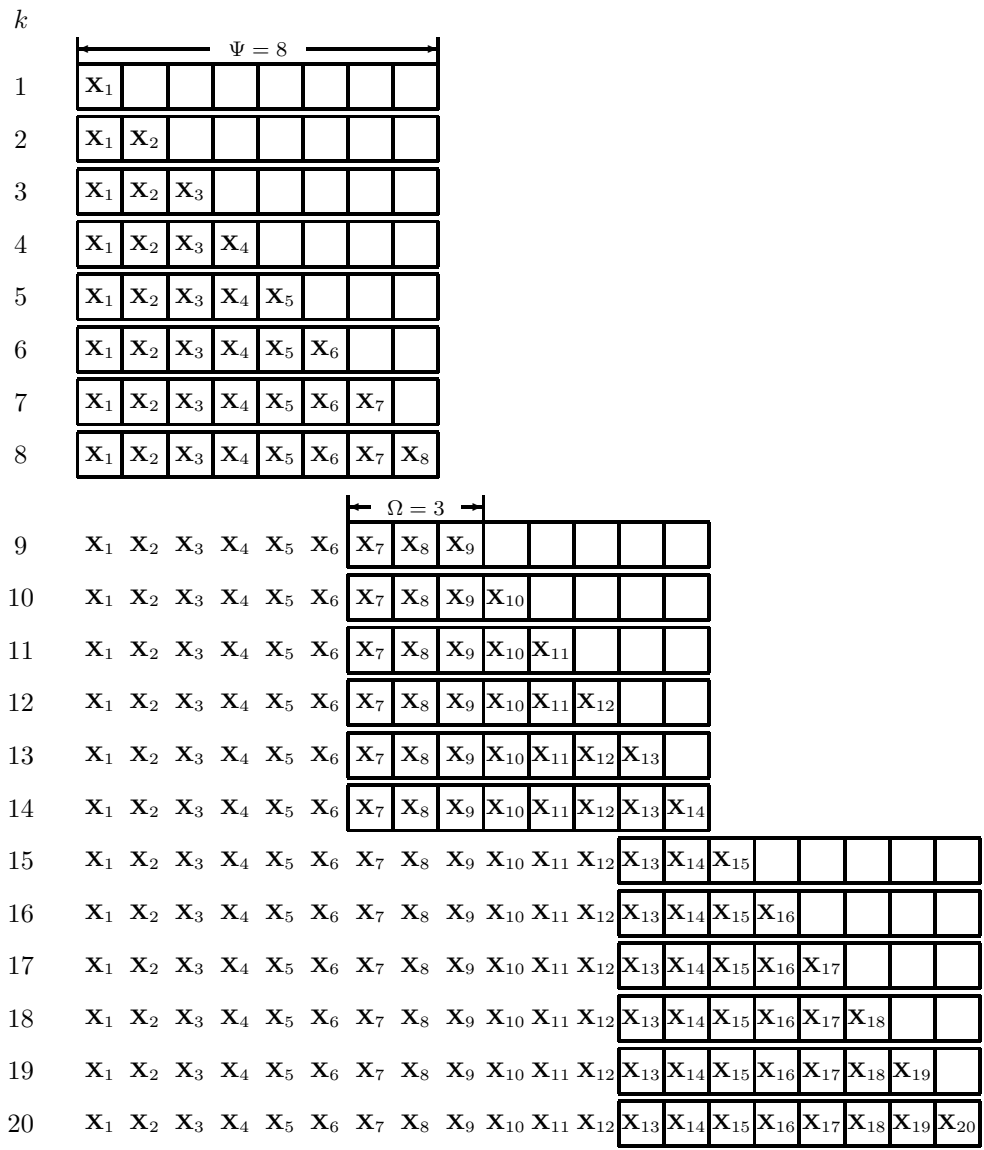}
\end{picture}
\caption{Illustration of truncated memory encoding for encoder $f_k$ of an example where $\Omega=3$ and $\Psi=8$. For any $k\in\bbN$, only the source blocks inside the encoder buffer (denoted by squares) will be encoded into the codeword $M_{1,k}$.}
\label{encodingsw}
\end{figure}

We state several observations from this numerical example (cf.\ Figure \ref{encodingsw}). We divide the encoding process  into two phases: the \emph{Initialization Phase} and the \emph{Periodic Phase}. In the initialization phase (i.e., $k\in[1:\Psi]$), the encoder $f_k$ encodes all the accumulated source blocks $\bX^k$ into a codeword $M_{1,k}$. In the periodic phase (i.e.,  $k>\Psi$), the encoder $f_k$ encodes only a subset of all available source blocks into a codeword $M_{1,k}$. In order to illustrate the rule for the  periodic phase and generalize the current example to arbitrary values of $\Psi$ and $\Omega$, we need the following definitions:
\begin{align}
\alpha_q&:=(\Psi-\Omega+1)q+\Omega,\label{def:alphaq}\\
\beta_q&:=(\Psi-\Omega+1)(q+1)+\Omega-1 \label{def:betaq},\\
t_q&:=\alpha_q-(\Omega-1)=(\Psi-\Omega+1)q+1\label{def:tq},\\
\calS(q)&:=\big[(\Psi-\Omega+1)q+\Omega : (\Psi-\Omega+1)(q+1)+\Omega-1 \big]\label{def:calsq}.
\end{align}
By specializing \eqref{def:alphaq}, \eqref{def:betaq}, and  \eqref{def:tq} with $\Psi=8$ and $\Omega=3$ in our example, we obtain that
\begin{align}
\alpha_q&=6q+3,\label{alpha:eg}\\
\beta_q&=6(q+1)+2,\label{beta:eg}\\
t_q&=6q+1\label{tq:eg},\\
\calS(q)&=\big[6q+3:6(q+1)+2\big]\label{calsq:eg}.
\end{align}
We also  state several observations for this encoding rule when $k>\Psi$. From Figure \ref{encodingsw}, for $k\in[\alpha_1=9:\beta_1=14]=\calS(1)$ (cf.\ \eqref{alpha:eg} to \eqref{calsq:eg}), the encoder $f_k$ encodes $\bX_{t_1=7}^k$ into a codeword $M_{1,k}$. In general, for $k\in\calS(q)$ with $q\in\bbN$, the encoder $f_k$ encodes $\bX_{t_q}^k$ into a codeword $M_{1,k}$. These observations pave the way for us to introduce the truncated memory encoding idea for arbitrary values of $\Psi$ and $\Omega$ in  the next section.

\subsection{Codebook Generation and Encoding}
\label{swencoding}

We now present the codebook generation and the truncated memory encoding for arbitrary values of $\Psi$ and $\Omega$. Again we only consider encoders $\{f_k\}_{k\in\bbN}$ since encoders $\{g_k\}_{k\in\bbN}$ operate similarly.

\textbf{Codebook Generation:} 
\begin{enumerate}
\item Initialization Phase: for $k\in[1:\Psi]$, we randomly and independently generate a codeword $M_{1,k}$ for each $\bx^k\in\calX^{nk}$ according to a uniform distribution over $\calM_1$. The codebook $\calC_k$ consists of all codewords $M_{1,k}$ for each $\bx^k\in\calX^{nk}$;
\item Periodic Phase: For each $k\in \calS(q)$, we randomly and independently generate a codeword $M_{1,k}$ for each $\bx_{t_q}^k\in\calX^{n(k-t_q+1)}$ according to a uniform distribution over $\calM_1$. The codebook $
\calC_k$ consists of all codewords $M_{1,k}$ for each $\bx_{t_q}^k\in\calX^{n(k-t_q+1)}$.
\end{enumerate}
We assume  that   the codebooks $\{\calC_k\}_{k\in\bbN}$ are known to the encoders and decoders.

\textbf{Encoding:}  
\begin{enumerate}
\item For $k\in[1:\Psi]$, given $\bx^k$, we send $f_k(\bx^k)=M_{1,k}$;
\item For $k\in\calS(q)$, since the output of $f_k$ depends only on $\bx_{t_q}^k$, we may define a function $\tilde{f}_k:\calX^{n(k-t_q+1)}\to\calM_1$ satisfying $\tilde{f}_k(\bx_{t_q}^k)=f_k(\bx^k)$. Similarly, we define $\tilde{g}_k(\by_{t_q}^k)=M_{2,k}$ for $k\in\calS(q)$.
\end{enumerate}
Throughout this section, we assume that source block pairs $\{(\bX_k,\bY_k)\}_{k\in\bbN}$ are generated and the corresponding codewords produced by the encoders are $\{(M_{1,k},M_{2,k})\}_{k\in\bbN}$.

\subsection{Decoding: Basic Idea and One Example}
\label{decode:example}

Our decoding strategy is similar to that used for the central limit analysis for streaming  channel coding \cite{lee2015streaming}  and is done in correspondence to the truncated memory encoding strategy. We will first consider an example where $T=2$, $\Omega=3$ and $\Psi=8$. For brevity, let
\begin{align}
T_k:=k+T-1\label{def:tk}.
\end{align}
Note that $T_k$ denotes the time to decode the source block pair $\{(\bX_k,\bY_k)\}$ when a delay of $T$ blocks is tolerable at the decoder $\phi_k$~(cf. Figure \ref{systemmodelsw}). Let us consider how to decode the source block pair $(X_{16},Y_{16})$ at time $T_{16}=17$ using the codeword pairs $\{(M_{1,\tau},M_{2,\tau})\}_{\tau\in[1:17]}$.

Recall the encoding procedure in Figure \ref{encodingsw}. Out of all available codeword pairs, only $\{(M_{1,\tau},M_{2,\tau})\}_{\tau\in\{16,17\}}$ are functions of the source block $(\bX_{16},\bY_{16})$. However, this does not mean that it suffices to use  only $\{(M_{1,\tau},M_{2,\tau})\}_{\tau\in\{16,17\}}$ to decode since these codeword pairs are also a function of the {\em past} source blocks $(\bX_{13}^{15},\bY_{13}^{15})$ and the future source block pair $(\bX_{17},\bY_{17})$. Since the future source block cannot lead to an error in decoding the current source block $(\bX_k,\bY_k)$ (cf.~\cite{lee2015streaming,LeeTanKhisti2016}), it suffices to first decode past source blocks $(\bX_{13}^{15},\bY_{13}^{15})$ in order to remove uncertainty in using $\{(M_{1,\tau},M_{2,\tau})\}_{\tau\in\{16,17\}}$ to decode the current source block pair. To decode $(\bX_{13},\bY_{13})$, we need to use codewords $\{(M_{1,\tau},M_{2,\tau})\}_{\tau\in[13:17]}$. We observe from Figure~\ref{encodingsw} that $\{(M_{1,\tau},M_{2,\tau})\}_{\tau\in\{13,14\}}$ is also a function of $(\bX_7^{12},\bY_7^{12})$. Hence, we need to backtrack and decode all source blocks $(\bX_7^{12},\bY_7^{12})$ in order to remove uncertainty of the past source blocks $(\bX_7^{12},\bY_7^{12})$ when using $\{(M_{1,\tau},M_{2,\tau})\}_{\tau\in[13:17]}$ to decode $(\bX_{13},\bY_{13})$.

To decode any source block $(\bX_k,\bY_k)$, one may potentially continue backtracking until the first source block pair $(\bX_1,\bY_1)$. However,  to reduce the ``complexity'' of decoding, we stop backtracking until we can conclude that we need to decode a source block pair with index $t_{q-1}$ (cf.\  \eqref{def:tq}) if the index of the source block we aim to decode lies in $\calS(q)$ for some integer $q\geq 2$. In our example, $k=16\in\calS(2)$, so we stop backtracking until we conclude that we need to decode $(\bX_7,\bY_7)$ since $t_1=7$.

To stop backtracking at the index $t_1=7$, we should only make use of the codewords $\{(M_{1,\tau},M_{2,\tau})\}_{\tau\in[\alpha_2=9:T_{16}=17]}$ in  decoding the source block $(\bX_{16},\bY_{16})$ since otherwise previous source blocks $(\bX^6,\bY^6)$ will cause uncertainty (i.e., a potential error) in our decoding of the past source blocks $(\bX_7^8,\bY_7^8)$ (cf.\ Figure \ref{encodingsw}). We illustrate the process of   backtracking to decode $(\bX_{16},\bY_{16})$ in Figure \ref{decodingsw}.

\begin{figure}[t]
\centering
\setlength{\unitlength}{0.5cm}
\begin{picture}(30,8)
\linethickness{1pt}
\input{sw_decoding}
\end{picture}
\caption{Illustration of the backtracking decoding idea to decode $(\bX_{16},\bY_{16})$ when $\Psi=8$, $\Omega=3$, and $T=2$.}
\label{decodingsw}
\end{figure}

After stating which source blocks to decode when we aim to decode $(\bX_{16},\bY_{16})$, we are now in a position to describe how the decoding is done. Prior to explaining the detail in decoding, we need the following additional notation. For a pair of sequences $(\bx,\by) \in\calX^r\times\calY^r$ (for some $r\in\bbN$), we let $\hatH(\bx,\by)$ be the joint empirical entropy, i.e., the entropy of the joint type $\hatT_{\bx,\by}$. Similarly, we let $\hatH(\bx|\by)= H( \hatT_{\bx|\by}|\hatT_\by)$ be the conditional empirical  entropy. Given two pairs of sequences $(\tilde{\bx}_a^b,\tilde{\by}_a^b)$ and $(\bar{\bx}_a^b,\bar{\by}_a^b)$, let $l$ be the index of the block where $\tilde{\bx}^{T_k}$ and $\bar{\bx}^{T_k}$ first differs and similarly let $m$ be the index of the block where $\tilde{\by}^{T_k}$ and $\bar{\by}^{T_k}$ first differs.

We make use of the weighted empirical suffix entropy in \cite[Eqn.~(54)]{draper2010lossless}, i.e., for $l,m\in[a:b]$,
\begin{align}
 \hat{H}_{\rmS}(l,m,\tilde{\bx}_a^b,\tilde{\by}_a^b)
 &=\left\{
\begin{array}{cc}
\hat{H}(\tilde{\bx}_l^b,\tilde{\by}_l^b) & l=m,\\
\\
\frac{m-l}{b-l+1}\hat{H}(\tilde{\bx}_l^{m-1}|\tilde{\by}_l^{m-1})+\frac{b-m+1}{b-l+1}\hat{H}(\tilde{\bx}_m^b,\tilde{\by}_m^b) &l<m,\\
\\
\frac{l-m}{b-l+1}\hat{H}(\tilde{\bx}_m^{l-1}|\tilde{\by}_m^{l-1})+\frac{b-l+1}{b-m+1}\hat{H}(\tilde{\bx}_l^b,\tilde{\by}_l^b) &l>m.
\end{array}
 \right.\label{suffixentropy}
\end{align}

We will now describe how to decode source block pairs $(\bX_{t_1=7}^{k=16},\bY_7^{16})$ sequentially. Recall the definitions of $\alpha_q$ in \eqref{def:alphaq}, $\beta_q$ in \eqref{def:betaq} and $t_q$ in \eqref{def:tq}. The detail of the decoding is presented as follows:
\begin{enumerate}
\item Decoding $(\bX_{t_1=7}^{\alpha_1=9},\bY_7^9)$ jointly using the codewords $\{(M_{1,\tau},M_{2,\tau})\}_{\tau\in[\alpha_1=9:\beta_1=14]}$.  Let the set of all source blocks which are binned to  $\{(M_{1,\tau},M_{2,\tau})\}_{\tau\in[\alpha_1=9:\beta_1=14]}$ be defined as
\begin{align}
\calA_1:=\{(\bar{\bx}_7^{14},\bar{\by}_7^{14}):f_\tau(\bar{\bx}_7^\tau)=M_{1,\tau},~g_\tau(\bar{\bx}_7^\tau)=M_{2,\tau},~\forall~\tau\in[9:14]\}\label{def:cala1eg}.
\end{align}
The estimate $(\hat{\bX}_7^9,\hat{\bY}_7^9)$ is chosen as $(\tilde{\bx}_7^9,\tilde{\by}_7^9)$ if $(\tilde{\bx}_7^{14},\tilde{\by}_7^{14})\in\calA_1$ satisfies that for all $(\bar{\bx}_7^{14},\bar{\by}_7^{14})$, we have
\begin{align}
\hat{H}_{\rmS}(l,m,\tilde{\bx}_7^{14},\tilde{\by}_7^{14})
 &\leq \hat{H}_{\rmS}(l,m,\bar{\bx}_7^{14},\bar{\by}_7^{14})\label{kin1psi},
\end{align}
where $l$ and $m$ in \eqref{kin1psi} are uniquely determined by $(\tilde{\bx}_7^{14},\tilde{\by}_7^{14})$ and $(\bar{\bx}_7^{14},\bar{\by}_7^{14})$ according to the rule as described prior to~\eqref{suffixentropy}. This rule for choosing $l$ and $m$ applies verbatim whenever we employ minimum weighted empirical suffix entropy decoding in the sequel.
\item For $j\in[\alpha_2+1=10:t_2-1=12]$, decode $(\bX_j,\bY_j)$ using codewords $\{(M_{1,\tau},M_{2,\tau})\}_{\tau\in[j:14]}$ and the previous estimates $(\hat{\bX}_7^{j-1},\hat{\bY}_7^{j-1})$. Similarly to~\eqref{def:cala1eg}, we define the set of source blocks which are binned to $\{(M_{1,\tau},M_{2,\tau})\}_{\tau\in[j:14]}$  as follows:
\begin{align}
\calA_2:=\{(\bar{\bx}_7^{14},\bar{\by}_7^{14}):f_\tau(\bar{\bx}_7^\tau)=M_{1,\tau},~g_\tau(\bar{\bx}_7^\tau)=M_{2,\tau},~\forall~\tau\in[j:14]\}\label{def:cala2eg}.
\end{align}
The estimate $(\hat{\bX}_j,\hat{\bY}_j)$ is chosen as $(\tilde{\bx}_j,\tilde{\by}_j)$ if $(\tilde{\bx}_7^{14},\tilde{\by}_7^{14})\in\calA_2$ satisfies 
\begin{enumerate}
\item $(\tilde{\bx}_7^{j-1},\tilde{\by}_7^{j-1})=(\hat{\bX}_7^{j-1},\hat{\bY}_7^{j-1})$;
\item for all $(\bar{\bx}_7^{14},\bar{\by}_7^{14})\in\calA_2$ satisfying $(\bar{\bx}_7^{j-1},\bar{\by}_7^{j-1})=(\hat{\bX}_7^{j-1},\hat{\bY}_7^{j-1})$
\begin{align}
\hat{H}_{\rmS}(l,m,\tilde{\bx}_7^{14},\tilde{\by}_7^{14})
 &\leq \hat{H}_{\rmS}(l,m,\bar{\bx}_7^{14},\bar{\by}_7^{14}).
\end{align}
\end{enumerate}
The method to decode $(\bX_j,\bY_j)$ for $j\in[13:16]$ is similar to the current case and thus omitted for simplicity.
\item For $j\in[t_2=13:\beta_1=14]$, decode $(\bX_j,\bY_j)$ using  $\{(M_{1,\tau},M_{2,\tau})\}_{\tau\in[j:T_{16}=17]}$ and the previous estimates $(\hat{\bX}_7^{j-1},\hat{\bY}_7^{j-1})$;
\item For $j\in[\alpha_2=15:k=16]$, decode $(\bX_j,\bY_j)$ using   $\{(M_{1,\tau},M_{2,\tau})\}_{\tau\in[j:T_{16}=17]}$ and the previous estimates $(\hat{\bX}_{13}^{j-1},\hat{\bY}_{13}^{j-1})$.
\end{enumerate}
In the next subsection, we generalize the decoding rule introduced in this particular example to arbitrary values of $\Psi$, $\Omega$ and $T$ such that $\Psi>2\Omega$ and $\Omega\geq T$.

\subsection{Decoding: General Case}
\label{decode:general}

For simplicity, we only consider decoding $(\bX_k,\bY_k)$ at time $T_k$ (cf. \eqref{def:tk}) for some $k\in\calS(q)$ (cf. \eqref{def:calsq}) with $q\geq 2$   since other cases can be done similarly. Readers may refer to our arxiv preprint (cf.~\cite[version 1]{zhou2016moderate}) for a complete description of the decoding details for all $k\in\bbN$.

By generalizing the decoding rule presented in the example in Section \ref{decode:example}, we conclude that in order to decode $(\bX_k,\bY_k)$, under our coding scheme, we should sequentially decode source block pairs $(\bX_{t_{q-1}}^k,\bY_{t_{q-1}}^k)$. To be specific, we first jointly decode $(\bX_{t_{q-1}}^{\alpha_{q-1}},\bY_{t_{q-1}}^{\alpha_{q-1}})$. Then we sequentially decode $(\bX_j,\bY_j)$ for $j\in\calS(q-1)$. Finally, we sequentially decode $(\bX_j,\bY_j)$ for $j\leq k$ and $j\in\calS(q)$. Recall the notation used in the codebook generation and truncated memory encoding in Section~\ref{swencoding}. The details on how to decode each source block pair(s) are presented as follows.

\begin{enumerate}

\item \label{joint}Joint decoding of $\bX_{t_{q-1}}^{\alpha_{q-1}}$: Define the set of source blocks in the bin indexed by the codewords as
\begin{align}
\!\!\calB_1(\bX_{t_{q-1}}^{\beta_{q-1}},\bY_{t_{q-1}}^{\beta_{q-1}})
:=\left\{(\tilde{\bx}_{t_{q-1}}^{\beta_{q-1}},\tilde{\by}_{t_{q-1}}^{\beta_{q-1}}):\forall \tau\in[\alpha_{q-1}:\beta_{q-1}],~\tilde{f}_\tau(\tilde{\bx}_{t_{q-1}}^\tau)=\tilde{f}_\tau(\bX_{t_{q-1}}^\tau),~\tilde{g}_\tau(\tilde{\by}_{t_{q-1}}^\tau)=\tilde{g}_\tau(\bY_{t_{q-1}}^\tau)\right\}\label{def:b1}.
\end{align}

We declare $(\bX_{t_{q-1}}^{\alpha_{q-1}},\bY_{t_{q-1}}^{\alpha_{q-1}})=(\tilde{\bx}_{t_{q-1}}^{\alpha_{q-1}},\tilde{\by}_{t_{q-1}}^{\alpha_{q-1}})$
if there exists $(\tilde{\bx}_{t_{q-1}}^{\beta_{q-1}},\tilde{\by}_{t_{q-1}}^{\beta_{q-1}})\in
\calB_1(\bX_{t_{q-1}}^{\beta_{q-1}},\bY_{t_{q-1}}^{\beta_{q-1}})$ such that for all $(\bar{\bx}_{t_{q-1}}^{\beta_{q-1}},\bar{\by}_{t_{q-1}}^{\beta_{q-1}})\in
\calB_1(\bX_{t_{q-1}}^{\beta_{q-1}},\bY_{t_{q-1}}^{\beta_{q-1}})$, we have
\begin{align}
\hat{H}_{\rmS}(l,m,\tilde{\bx}_{t_{q-1}}^{\beta_{q-1}},\tilde{\by}_{t_{q-1}}^{\beta_{q-1}})\leq \hat{H}_{\rmS}(l,m,\bar{\bx}_{t_{q-1}}^{\beta_{q-1}},\bar{\by}_{t_{q-1}}^{\beta_{q-1}})\label{def:db1}.
\end{align}

\item \label{seq1} Decode $(\bX_j,\bY_j)$ for each $j\in[\alpha_{q-1}+1:\beta_{q-1}]$ sequentially, i.e., the remaining source blocks with indices in $\calS(q-1)$.

Let $\lambda_k=\min\{T_k,\beta_q\}$. The decoding rule  differs  depending  on $j$.
\begin{enumerate}
\item $\alpha_{q-1}+1\leq j\leq t_q-1$: Define 
\begin{align}
\calB_2(\bX_{t_{q-1}}^{\beta_{q-1}},\bY_{t_{q-1}}^{\beta_{q-1}},j)
:=\left\{(\tilde{\bx}_{t_{q-1}}^{\beta_{q-1}},\tilde{\by}_{t_{q-1}}^{\beta_{q-1}}):\forall \tau\in[j:\beta_{q-1}],~\tilde{f}_\tau(\tilde{\bx}_{t_{q-1}}^\tau)=\tilde{f}_\tau(\bX_{t_{q-1}}^\tau),~\tilde{g}_\tau(\tilde{\by}_{t_{q-1}}^\tau)=\tilde{g}_\tau(\bY_{t_{q-1}}^\tau)\right\}\label{def:b2}.
\end{align}

Given $(\hat{\bX}_{t_{q-1}}^{j-1},\hat{\bY}_{t_{q-1}}^{j-1})$, we declare $(\hat{\bX}_j,\hat{\bY}_j)=(\tilde{\bx}_j,\tilde{\by}_j)$ if there exists  $(\tilde{\bx}_{t_{q-1}}^{\beta_{q-1}},\tilde{\by}_{t_{q-1}}^{\beta_{q-1}})\in\calB_2(\bX_{t_{q-1}}^{\beta_{q-1}},\bY_{t_{q-1}}^{\beta_{q-1}},j)
$ satisfying $(\tilde{\bx}_{t_{q-1}}^{j-1},\tilde{\by}_{t_{q-1}}^{j-1})=(\hat{\bX}_{t_{q-1}}^{j-1},\hat{\bY}_{t_{q-1}}^{j-1})$ such that for all $(\bar{\bx}_{t_{q-1}}^{\beta_{q-1}},\bar{\by}_{t_{q-1}}^{\beta_{q-1}})\in\calB_2(\bX_{t_{q-1}}^{\beta_{q-1}},\bY_{t_{q-1}}^{\beta_{q-1}},j)
$ satisfying $(\bar{\bx}_{t_{q-1}}^{j-1},\bar{\by}_{t_{q-1}}^{j-1})=(\hat{\bX}_{t_{q-1}}^{j-1},\hat{\bY}_{t_{q-1}}^{j-1})$, we have
\begin{align}
\hat{H}_{\rmS}(l,m,\tilde{\bx}_{t_{q-1}}^{\beta_{q-1}},\tilde{\by}_{t_{q-1}}^{\beta_{q-1}})\leq \hat{H}_{\rmS}(l,m,\bar{\bx}_{t_{q-1}}^{\beta_{q-1}},\bar{\by}_{t_{q-1}}^{\beta_{q-1}})\label{def:db2}.
\end{align}

\item $t_q\leq j\leq \beta_{q-1}$: Define 
\begin{align}
\calB_3(\bX_{t_{q-1}}^{\lambda_k},\bY_{t_{q-1}}^{\lambda_k},j)
\nn:=\Big\{(\tilde{\bx}_{t_{q-1}}^{\lambda_k},\tilde{\by}_{t_{q-1}}^{\lambda_k}):&\forall \tau\in[j:\beta_{q-1}],~\tilde{f}_\tau(\tilde{\bx}_{t_{q-1}}^\tau)=\tilde{f}_\tau(\bX_{t_{q-1}}^\tau),~\tilde{g}_\tau(\tilde{\by}_{t_{q-1}}^\tau)=\tilde{g}_\tau(\bY_{t_{q-1}}^\tau),\\
&\forall \tau\in[\alpha_q:\lambda_k],~\tilde{f}_\tau(\tilde{\bx}_{t_q}^\tau)=\tilde{f}_\tau(\bX_{t_q}^{\tau}),~\tilde{g}_\tau(\tilde{\by}_{t_q}^\tau)=\tilde{g}_\tau(\bY_{t_q}^\tau)\Big\}\label{defb3}.
\end{align}

Given $(\hat{\bX}_{t_{q-1}}^{j-1},\hat{\bY}_{t_{q-1}}^{j-1})$, we declare $(\hat{\bX}_j,\hat{\bY}_j)=(\tilde{\bx}_j,\tilde{\by}_j)$ if there exists  $(\tilde{\bx}_{t_{q-1}}^{\lambda_k},\tilde{\by}_{t_{q-1}}^{\lambda_k})\in\calB_3(\bX_{t_{q-1}}^{\lambda_k},\bY_{t_{q-1}}^{\lambda_k},j)
$ satisfying $(\tilde{\bx}_{t_{q-1}}^{j-1},\tilde{\by}_{t_{q-1}}^{j-1})=(\hat{\bX}_{t_{q-1}}^{j-1},\hat{\bY}_{t_{q-1}}^{j-1})$ such that for all $(\bar{\bx}_{t_{q-1}}^{\lambda_k},\bar{\by}_{t_{q-1}}^{\lambda_k})\in\calB_3(\bX_{t_{q-1}}^{\lambda_k},\bY_{t_{q-1}}^{\lambda_k},j)
$ where $(\bar{\bx}_{t_{q-1}}^{j-1},\bar{\by}_{t_{q-1}}^{j-1})=(\hat{\bX}_{t_{q-1}}^{j-1},\hat{\bY}_{t_{q-1}}^{j-1})$, we have
\begin{align}
\hat{H}_{\rmS}(l,m,\tilde{\bx}_{t_{q-1}}^{\lambda_k},\tilde{\by}_{t_{q-1}}^{\lambda_k})\leq \hat{H}_{\rmS}(l,m,\bar{\bx}_{t_{q-1}}^{\lambda_k},\bar{\by}_{t_{q-1}}^{\lambda_k})\label{def:db3}.
\end{align}
\end{enumerate}

\item Decode $(\bX_j,\bY_j)$ for $j\in[\alpha_q,k]$ sequentially, i.e., the source blocks with indices in $\calS(q)$ which are no larger than $k$. We consider two scenarios which differ in the use of the codewords.
\begin{enumerate}
\item \label{case1}$T_k\leq \beta_q$: Define 
\begin{align}
\calB_4(\bX_{t_q}^{T_k},\bY_{t_q}^{T_k},j)
:=\left\{(\tilde{\bx}_{t_q}^{T_k},\tilde{\by}_{t_q}^{T_k}):\forall \tau\in[j:T_k],~\tilde{f}_\tau(\tilde{\bx}_{t_q}^\tau)=\tilde{f}_\tau(\bX_{t_q}^\tau),~\tilde{g}_\tau(\tilde{\by}_{t_q}^\tau)=\tilde{g}_\tau(\bY_{t_q}^\tau)\right\}\label{def:b4}.
\end{align}

Given $(\hat{\bX}_{t_{q-1}}^{j-1},\hat{\bY}_{t_{q-1}}^{j-1})$, we declare $(\hat{\bX}_j,\hat{\bY}_j)=(\tilde{\bx}_j,\tilde{\by}_j)$ if there exists $(\tilde{\bx}_{t_q}^{T_k},\tilde{\by}_{t_q}^{T_k})\in\calB_4(\bX_{t_q}^{T_k},\bY_{t_q}^{T_k},j)
$ satisfying $(\tilde{\bx}_{t_q}^{j-1},\tilde{\by}_{t_q}^{j-1})=(\hat{\bX}_{t_q}^{j-1},\hat{\bY}_{t_q}^{j-1})$ such that for all $(\bar{\bx}_{t_q}^{T_k},\bar{\by}_{t_q}^{T_k})\in\calB_4(\bX_{t_q}^{T_k},\bY_{t_q}^{T_k},j)
$ where $(\bar{\bx}_{t_q}^{j-1},\bar{\by}_{t_q}^{j-1})=(\hat{\bX}_{t_q}^{j-1},\hat{\bY}_{t_q}^{j-1})$, we have
\begin{align}
\hat{H}_{\rmS}(l,m,\tilde{\bx}_{t_q}^{T_k},\tilde{\by}_{t_q}^{T_k})\leq \hat{H}_{\rmS}(l,m,\bar{\bx}_{t_q}^{T_k},\bar{\by}_{t_q}^{T_k})\label{def:db4}.
\end{align}

\item $T_k>\beta_q$: Depending on the index $j$ of the source block pair to decode, the decoding rules differ.
\begin{enumerate}
\item $\alpha_q \leq j\leq t_{q+1}-1$: Define 
\begin{align}
\calB_5(\bX_{t_q}^{\beta_q},\bY_{t_q}^{\beta_q},j)
:=\left\{(\tilde{\bx}_{t_q}^{\beta_q},\tilde{\by}_{t_q}^{\beta_q}):\forall \tau\in[j:\beta_q],~\tilde{f}_\tau(\tilde{\bx}_{t_q}^\tau)=\tilde{f}_\tau(\bX_{t_q}^\tau),~\tilde{g}_\tau(\tilde{\by}_{t_q}^\tau)=\tilde{g}_\tau(\bY_{t_q}^\tau)\right\}\label{defb5}.
\end{align}

Given $(\hat{\bX}_{t_{q-1}}^{j-1},\hat{\bY}_{t_{q-1}}^{j-1})$, we declare $(\hat{\bX}_j,\hat{\bY}_j)=(\tilde{\bx}_j,\tilde{\by}_j)$ if there exists  $(\tilde{\bx}_{t_q}^{\beta_q},\tilde{\by}_{t_q}^{\beta_q})\in\calB_5(\bX_{t_q}^{\beta_q},\bY_{t_q}^{\beta_q},j)
$ satisfying $(\tilde{\bx}_{t_q}^{j-1},\tilde{\by}_{t_q}^{j-1})=(\hat{\bX}_{t_q}^{j-1},\hat{\bY}_{t_q}^{j-1})$ such that for all $(\bar{\bx}_{t_q}^{\beta_q},\bar{\by}_{t_q}^{\beta_q})\in\calB_5(\bX_{t_q}^{\beta_q},\bY_{t_q}^{\beta_q},j)
$ where $(\bar{\bx}_{t_q}^{j-1},\bar{\by}_{t_q}^{j-1})=(\hat{\bX}_{t_q}^{j-1},\hat{\bY}_{t_q}^{j-1})$, we have
\begin{align}
\hat{H}_{\rmS}(l,m,\tilde{\bx}_{t_q}^{\beta_q},\tilde{\by}_{t_q}^{\beta_q})\leq \hat{H}_{\rmS}(l,m,\bar{\bx}_{t_q}^{\beta_q},\bar{\by}_{t_q}^{\beta_q})\label{def:db5}.
\end{align}

\item $t_{q+1}\leq j\leq k$: Define 
\begin{align}
\calB_6(\bX_{t_q}^{T_k},\bY_{t_q}^{T_k},j)
\nn:=\Big\{(\tilde{\bx}_{t_q}^{T_k},\tilde{\by}_{t_q}^{T_k}):&\forall \tau\in[j:\beta_q],~\tilde{f}_\tau(\tilde{\bx}_{t_q}^\tau)=\tilde{f}_\tau(\bX_{t_q}^\tau),~\tilde{g}_\tau(\tilde{\by}_{t_q}^\tau)=\tilde{g}_\tau(\bY_{t_q}^\tau)\\
&\forall \tau\in[\alpha_{q+1}:T_k],~\tilde{f}_\tau(\tilde{\bx}_{t_{q+1}}^\tau)=\tilde{f}_\tau(\bX_{t_{q+1}}^\tau),~\tilde{g}_\tau(\tilde{\by}_{t_{q+1}}^\tau)=\tilde{g}_\tau(\bY_{t_{q+1}}^\tau)\Big\}\label{defb6}.
\end{align}

Given $(\hat{\bX}_{t_{q-1}}^{j-1},\hat{\bY}_{t_{q-1}}^{j-1})$, we declare $(\hat{\bX}_j,\hat{\bY}_j)=(\tilde{\bx}_j,\tilde{\by}_j)$ if there exists  $(\tilde{\bx}_{t_q}^{T_k},\tilde{\by}_{t_q}^{T_k})\in\calB_6(\bX_{t_q}^{T_k},\bY_{t_q}^{T_k},j)
$ satisfying $(\tilde{\bx}_{t_q}^{j-1},\tilde{\by}_{t_q}^{j-1})=(\hat{\bX}_{t_q}^{j-1},\hat{\bY}_{t_q}^{j-1})$ such that for all $(\bar{\bx}_{t_q}^{T_k},\bar{\by}_{t_q}^{T_k})\in\calB_6(\bX_{t_q}^{T_k},\bY_{t_q}^{T_k},j)
$ where $(\bar{\bx}_{t_q}^{j-1},\bar{\by}_{t_q}^{j-1})=(\hat{\bX}_{t_q}^{j-1},\hat{\bY}_{t_q}^{j-1})$, we have
\begin{align}
\hat{H}_{\rmS}(l,m,\tilde{\bx}_{t_q}^{T_k},\tilde{\by}_{t_q}^{T_k})\leq \hat{H}_{\rmS}(l,m,\bar{\bx}_{t_q}^{T_k},\bar{\by}_{t_q}^{T_k})\label{def:db6}.
\end{align}
\end{enumerate}
\end{enumerate}
\end{enumerate}

\subsection{Error Events}
\label{analysisep}

In this subsection, we present the error events in our coding scheme. We consider the case where $q\geq 2$, $k\in\calS(q)$ and $T_k>\beta_q$ only since the analyses for other cases can be done similarly.

For subsequent analyses, define the set
\begin{align}
\calF(l,m,\bx_a^b,\by_a^b)
:&=\left\{(\tilde{\bx}_a^b,\tilde{\by}_a^b)\in\calX^{n(b-a+1)}\times\calY^{n(b-a+1)}:~\tilde{\bx}_a^{l-1}=\bx_a^{l-1},~\tilde{\bx}_l\neq \bx_l,~\tilde{\by}_a^{m-1}=\by_a^{m-1},~\tilde{\by}_m\neq \by_m\right\}\label{firstdiffer}.
\end{align}

Note that $\calF(l,m,\bx_a^b,\by_a^b)$ is a collection of source blocks $(\tilde{\bx}_a^b,\tilde{\by}_a^b)$ such that $\bx_a^b$ and $\tilde{\bx}_a^b$ differs first from $l$-th block and $\by_a^b$ and $\tilde{\by}_a^b$ differs first from $m$-th block. Hence, we have that for any $(\bx_a^b,\by_a^b)$,
\begin{align}
\calX^{n(b-a+1)}\times\calY^{n(b-a+1)}=\bigcup_{l=a}^b\bigcup_{m=a}^b 
\calF(l,m,\bx_a^b,\by_a^b).
\end{align}

The error in decoding $(\bX_k,\bY_k)$ occurs if one of the following events occur:
\begin{enumerate}
\item $l,m\in[t_{q-1}:\beta_{q-1}]$ and $\min\{l,m\}\leq \alpha_{q-1}$
\begin{align}
\calE_{\mathrm{sw}}^1(l,m)
\nn&:=\Bigg\{\exists (\tilde{\bx}_{t_{q-1}}^{\beta_{q-1}},\tilde{\by}_{t_{q-1}}^{\beta_{q-1}})\in\calB_1(\bX_{t_{q-1}}^{\beta_{q-1}},\bY_{t_{q-1}}^{\beta_{q-1}})\bigcap\calF(l,m,\bX_{t_{q-1}}^{\beta_{q-1}},\bY_{t_{q-1}}^{\beta_{q-1}}):\\
&\qquad\qquad\qquad\qquad\hat{H}_{\rmS}(l,m,\tilde{\bx}_{t_{q-1}}^{\beta_{q-1}},\tilde{\by}_{t_{q-1}}^{\beta_{q-1}})\leq \hat{H}_{\rmS}(l,m,\bX_{t_{q-1}}^{\beta_{q-1}},\bY_{t_{q-1}}^{\beta_{q-1}})\Bigg\}\label{calsw1}.
\end{align}
These error events correspond to the joint decoding of $(\bX_{t_{q-1}}^{\alpha_{q-1}},\bY_{t_{q-1}}^{\alpha_{q-1}})$ (cf. \eqref{def:b1} and \eqref{def:db1}).

\item $j\in[\alpha_{q-1}+1:t_q-1]$, $l,m\in[j:\beta_{q-1}]$ and $\min\{l,m\}=j$.
\begin{align}
\calE_{\mathrm{sw},j}^2(l,m)
\nn&:=\Bigg\{\exists (\tilde{\bx}_{t_{q-1}}^{\beta_{q-1}},\tilde{\by}_{t_{q-1}}^{\beta_{q-1}})\in\calB_2 (\bX_{t_{q-1}}^{\beta_{q-1}},\bY_{t_{q-1}}^{\beta_{q-1}})\bigcap\calF(l,m,\bX_{t_{q-1}}^{\beta_{q-1}},\bY_{t_{q-1}}^{\beta_{q-1}}):\\
&\qquad\qquad\qquad\qquad\hat{H}_{\rmS}(l,m,\tilde{\bx}_{t_{q-1}}^{\beta_{q-1}},\tilde{\by}_{t_{q-1}}^{\beta_{q-1}})\leq \hat{H}_{\rmS}(l,m,\bX_{t_{q-1}}^{\beta_{q-1}},\bY_{t_{q-1}}^{\beta_{q-1}})\Bigg\}\label{calsw2}.
\end{align}
These error events correspond to the sequential decoding of $(\bX_j,\bY_j)$ for $j\in[\alpha_{q-1}+1:t_q-1]$ (cf. \eqref{def:b2} and \eqref{def:db2}).

\item $j\in[t_q:\beta_{q-1}]$, $l,m\in[j:\beta_q]$ and $\min\{l,m\}=j$.
\begin{align}
\calE_{\mathrm{sw},j}^3(l,m)
\nn&:=\Bigg\{\exists (\tilde{\bx}_{t_{q-1}}^{\beta_q},\tilde{\by}_{t_{q-1}}^{\beta_q})\in\calB_3(\bX_{t_{q-1}}^{\beta_q},\bY_{t_{q-1}}^{\beta_q})\bigcap\calF(l,m,\bX_{t_{q-1}}^{\beta_q},\bY_{t_{q-1}}^{\beta_q}):\\
&\qquad\qquad\qquad\qquad\hat{H}_{\rmS}(l,m,\tilde{\bx}_{t_{q-1}}^{\beta_q},\tilde{\by}_{t_{q-1}}^{\beta_q})\leq \hat{H}_{\rmS}(l,m,\bX_{t_{q-1}}^{\beta_q},\bY_{t_{q-1}}^{\beta_q})\Bigg\}\label{calsw3}.
\end{align}

These error events correspond to the sequential decoding of $(\bX_j,\bY_j)$ for $j\in[t_q:\beta_{q-1}]$ (cf. \eqref{defb3} and \eqref{def:db3}).

\item $j\in[\alpha_q:t_{q+1}-1]$, $l,m\in[j:\beta_q]$ and $\min\{l,m\}=j$.
\begin{align}
\calE_{\mathrm{sw},j}^5(l,m)
\nn&:=\Bigg\{\exists (\tilde{\bx}_{t_q}^{\beta_q},\tilde{\by}_{t_q}^{\beta_q})\in\calB_5(\bX_{t_q}^{\beta_q},\bY_{t_q}^{\beta_q},j)\bigcap\calF(l,m,\bX_{t_q}^{\beta_q},\bY_{t_q}^{\beta_q}):\\
&\qquad\qquad\qquad\qquad\hat{H}_{\rmS}(l,m,\tilde{\bx}_{t_q}^{\beta_q},\tilde{\by}_{t_q}^{\beta_q})\leq \hat{H}_{\rmS}(l,m,\bX_{t_q}^{\beta_q},\bY_{t_q}^{\beta_q})\Bigg\}\label{calsw5}.
\end{align}
These error events correspond to the sequential decoding of $(\bX_j,\bY_j)$ for $j\in[\alpha_q:t_{q+1}-1]$ (cf. \eqref{defb5} and \eqref{def:db5}).

\item $j\in[t_{q+1}:k]$, $l,m\in[j:T_k]$ and $\min\{l,m\}=j$.

\begin{align}
\calE_{\mathrm{sw},j}^6(l,m)
\nn&:=\Bigg\{\exists (\tilde{\bx}_{t_q}^{T_k},\tilde{\by}_{t_q}^{T_k})\in\calB_6(\bX_{t_q}^{T_k},\bY_{t_q}^{T_k},j)\bigcap\calF(l,m,\bX_{t_q}^{T_k},\bY_{t_q}^{T_k}):\\
&\qquad\qquad\qquad\qquad\hat{H}_{\rmS}(l,m,\tilde{\bx}_{t_q}^{T_k},\tilde{\by}_{t_q}^{T_k})\leq \hat{H}_{\rmS}(l,m,\bX_{t_q}^{T_k},\bY_{t_q}^{T_k})\Bigg\}\label{calsw6}.
\end{align}
These error events correspond to the sequential decoding of $(\bX_j,\bY_j)$ for $j\in[t_{q+1}:k]$ (cf. \eqref{defb6} and \eqref{def:db6}).
\end{enumerate}

\subsection{Preliminaries for the Evaluation of the Error Probabilities}
\label{parametervalues}

In this subsection, we present some preliminaries which will be used in the analysis of the probabilities of the error events. 

We first recall some important quantities that are used in the derivation of the error exponent in~\cite{draper2010lossless} for distributed streaming compression, i.e., $E_X\left(R_X,R_Y,\gamma\right)$ and $E_Y\left(R_X,R_Y,\gamma\right)$ (cf. \cite[Eqn.~(29)]{draper2010lossless}). We recap the expression in both the Gallager~\cite{gallagerIT} and Csisz\'ar-K\"orner~\cite{csiszar2011information} forms. In the proof in Section \ref{asympep}, we use the fact that these two forms for the error exponent are equivalent~(cf. \cite[Lemma 5]{draper2010lossless}). Define
\begin{align}
  E_X(R_X,R_Y,\gamma)
  &:=\max_{\rho\in[0,1]} \Big[\gamma E_{X|Y}(R_X,\rho)+(1-\gamma)E_{XY}(R_X,R_Y,\rho)\Big]\label{gammax}\\
  \nn&:=\inf_{Q_{XY},\tilde{Q}_{XY}} \gamma D(Q_{XY}\|P_{XY})+(1-\gamma)D(\tilde{Q}_{XY}\|P_{XY})\\
  &\qquad+\left|\gamma\left(R_X-H(Q_{X|Y}|Q_Y)\right)+(1-\gamma)\left(R_X+R_Y-H(\tilde{Q}_{XY})\right)\right|^+\label{gammax2},
\end{align}
where the {\em Gallager functions} are
\begin{align}
E_{X|Y}(R_X,\rho)
&:=\rho R_X-\log\sum_{y}P_{Y}(y)\left(\sum_{x}P_{X|Y}(x|y)^{\frac{1}{1+\rho}}\right)^{1+\rho},\\
E_{XY}(R_X,R_Y,\rho)
&:=\rho(R_X+R_Y)-(1+\rho)\log\sum_{x,y}P_{XY}(x,y)^{\frac{1}{1+\rho}},
\end{align}
and $E_Y(R_X,R_Y,\gamma)$ is similar to $E_X(R_X,R_Y,\gamma)
$ with $X$ and $Y$ interchanged.

Now, we present the definitions of the coding rates. 
Let $(R_X^*,R_Y^*)$ be a   rate pair on the boundary of the SW region (see Cases (i)-(v) in Figure \ref{rateregion}).
We choose $(n,N_1,N_2)$ such that
\begin{align}   
R_{X,n}:=\frac{1}{n}\log N_1&=R_X^*+\theta_1\xi_n\label{m1size}\\
R_{Y,n}:=\frac{1}{n}\log N_2&=R_Y^*+\theta_2\xi_n\label{m2size}.
\end{align}
We remark that this choice satisfies the conditions in Definition \ref{defmdcsw}. 

Additionally, define the rates 
\begin{align}
R_{X,n}^l&: =\frac{\Psi-\Omega+1}{n(\beta_{q-1}-l+1)}\log N_1,\label{rxnl}\\*
R_{Y,n}^m&: =\frac{\Psi-\Omega+1}{n(\beta_{q-1}-m+1)}\log N_2\label{rynm}.
\end{align}
We remark that $R_{X,n}^l$ and $R_{Y,n}^m$ are only used in the analysis of the probability of the  error event  $\calE_{\mathrm{sw}}^1(l,m)$ (cf.~\eqref{calsw1} and the upper bounding in the steps leading to~\eqref{upcalesw1}) where $l,m\in[t_{q-1}:\alpha_{q-1}]$.

Finally, we present an alternative form of the weighted empirical suffix entropy in~\eqref{suffixentropy} in terms of types and conditional types. Given $P_1,P_2\in\calP(\calY)$ and  $V_1 ,V_2\in\calP(\calX|\calY)$, when $l\leq m$, define
\begin{align}
H_{\rmS}(P_1,P_2,V_1,V_2):=(m-l)H(V_1|P_1)+(\beta_{q-1}-m+1)H(P_2\times V_2).
\end{align}

Using the definition of the weighted empirical suffix entropy in~\eqref{suffixentropy}, we obtain that for $a\leq l\leq m\leq b$,
\begin{align}
H_{\rmS}(P_1,P_2,V_1,V_2)=\hatH_{\rmS}(l,m,\tilde{\bx}_a^b,\tilde{\by}_a^b)\label{suffixtype}.
\end{align}
holds if $\tilde{\by}_l^{m-1}\in\calT_{P_1}$, $\tilde{\by}_m^b\in\calT_{P_2}$, $\tilde{\bx}_l^{m-1}\in\calT_{V_1}(\tilde{\by}_l^{m-1})$ and $\tilde{\bx}_m^b\in\calT_{V_2}(\tilde{\by}_m^b)$ for some types $P_1\in\calP_{n(m-l)}(\calY)$ and $P_2\in\calP_{n(b-m+1)}(\calY)$ and some conditional types $V_1\in\calV_{n(m-l)}(\calX;P_1)$  and $V_2\in\calV_{n(b-m+1)}(\calX;P_2)$. Recall that given  an $n$-type $P\in\calP_n(\calY)$,   $\calV_n(\calX; P):= \{V \in \calP( \calX|\calY): \calT_V(y^n)\ne\emptyset\;\mbox{for some}\; y^n\in\calT_P\}$ is the set of all conditional types ``compatible with'' $P$.  We use the facts that $|\calP_n(\calY)|\le (n+1)^{|\calY|}$ and $|\calV_n(\calX; P)|\le (n+1)^{ |\calX| |\calY|}$ extensively in the following.

\subsection{Evaluation of the Error Probabilities}
\label{asympep}
We are now ready to upper bound the probability of each error event using the definitions in Section \ref{parametervalues}.

\begin{enumerate}
\item $\calE_{\mathrm{sw}}^1(l,m)$ defined in~\eqref{calsw1}:
\begin{enumerate}
\item $l,m\in[t_{q-1}:\alpha_{q-1}]$ and $l\leq m$.

We can upper bound the probability of $\calE_{\mathrm{sw}}^1(l,m)$ as follows:
\begin{align}
&\nn\Pr\left(\calE_{\mathrm{sw}}^1(l,m)\right)\\
&\leq \sum_{(\bx_{t_{q-1}}^{\beta_{q-1}},\by_{t_{q-1}}^{\beta_{q-1}})}P_{XY}^{n\Psi}(\bx_{t_{q-1}}^{\beta_{q-1}},\by_{t_{q-1}}^{\beta_{q-1}})\min\left\{1,\Pr\left(\calE_{\mathrm{sw}}^1(l,m)|(\bX_{t_{q-1}}^{\beta_{q-1}},\bY_{t_{q-1}}^{\beta_{q-1}})=(\bx_{t_{q-1}}^{\beta_{q-1}},\by_{t_{q-1}}^{\beta_{q-1}})\right)\right\}\label{upperstep1},
\end{align}
where $\beta_{q-1}-t_{q-1}+1=\Psi$ by invoking the definitions of $\beta_q$ \eqref{def:betaq} and $t_q$ in \eqref{def:tq}.

Define 
\begin{align}
q(\bx_{t_{q-1}}^{\beta_{q-1}},\by_{t_{q-1}}^{\beta_{q-1}})):=\Pr\left(\calE_{\mathrm{sw}}^1(l,m)|(\bX_{t_{q-1}}^{\beta_{q-1}},\bY_{t_{q-1}}^{\beta_{q-1}})=(\bx_{t_{q-1}}^{\beta_{q-1}},\by_{t_{q-1}}^{\beta_{q-1}})\right).
\end{align}

Then, given types $P_1\in\calP_{n(m-l)}(\calY)$, $P_2\in\calP_{n(\beta_{q-1}-m+1)}(\calY)$ and conditional types, $V_1\in\calV_{n(m-l)}(\calX;P_1)$, $V_2\in\calV_{n(\beta_{q-1}-m+1)}(\calX;P_2)$, if $\by_l^{m-1}\in\calT_{P_1}$, $\by_m^{\beta_{q-1}}\in\calT_{P_2}$, $\bx_l^{m-1}\in\calT_{V_1}(\by_l^{m-1})$ and $\bx_m^{\beta_{q-1}}\in\calT_{V_2}(\by_m^{\beta_{q-1}})$, we obtain
\begin{align}
q(\bx_{t_{q-1}}^{\beta_{q-1}},\by_{t_{q-1}}^{\beta_{q-1}}))
&\leq 
\sum_{\substack{(\tilde{\bx}_{t_{q-1}}^{\beta_{q-1}},\tilde{\by}_{t_{q-1}}^{\beta_{q-1}})\in\calF(l,m,\bx_{t_{q-1}}^{\beta_{q-1}},\by_{t_{q-1}}^{\beta_{q-1}}):\\\hat{H}_{\rmS}(l,m,\tilde{\bx}_{t_{q-1}}^{\beta_{q-1}},\tilde{\by}_{t_{q-1}}^{\beta_{q-1}})\leq \hat{H}_{\rmS}(l,m,\bx_{t_{q-1}}^{\beta_{q-1}},\by_{t_{q-1}}^{\beta_{q-1}})}}\!\!\!\!\!\!\!\!\!\!\!\!\!\!
\Pr\left((\tilde{\bx}_{t_{q-1}}^{\beta_{q-1}},\tilde{\by}_{t_{q-1}}^{\beta_{q-1}})\in\calB_1(\bx_{t_{q-1}}^{\beta_{q-1}},\by_{t_{q-1}}^{\beta_{q-1}})\right)\\
&\leq 
\sum_{\substack{(\tilde{\bx}_{t_{q-1}}^{\beta_{q-1}},\tilde{\by}_{t_{q-1}}^{\beta_{q-1}})\in\calF(l,m,\bx_{t_{q-1}}^{\beta_{q-1}},\by_{t_{q-1}}^{\beta_{q-1}}):\\\hat{H}_{\rmS}(l,m,\tilde{\bx}_{t_{q-1}}^{\beta_{q-1}},\tilde{\by}_{t_{q-1}}^{\beta_{q-1}})\leq \hat{H}_{\rmS}(l,m,\bx_{t_{q-1}}^{\beta_{q-1}},\by_{t_{q-1}}^{\beta_{q-1}})}}\!\!\!\!\frac{1}{(N_1N_2)^{\Psi-\Omega+1}}\label{randombinning}\\
&=\sum_{\substack{\tilde{P}_2\in\calP_{n(\beta_{q-1}-m+1)}(\calY),\\\tilde{V}_1\in\calV_{n(m-l)}(\calX;P_1),\\\tilde{V_2}\in\calV_{n(\beta_{q-1}-m+1)}(\calX;\tilde{P_2}):\\H_{\rmS}(P_1,\tilde{P}_2,\tilde{V}_1,\tilde{V_2})\leq H_{\rmS}(P_1,P_2,V_1,V_2)}}\sum_{\tilde{y}_m^{\beta_{q-1}}\in\calT_{\tilde{P}_2}}\sum_{\substack{\tilde{\bx}_l^{m-1}\in\calP_{\tilde{V}_1}(\by_l^{m-1})\\\tilde{\bx}_m^{\beta_{q-1}}\in\calP_{\tilde{V}_2}(\tilde{\by}_m^{\beta_{q-1}})}}\frac{1}{(N_1N_2)^{\Psi-\Omega+1}}\label{usesetf}\\
&\leq \frac{ (n(\beta_{q-1}-l+1)+1)^{3|\calX||\calY|}\exp(n(\beta_{q-1}-l+1)H_{\rmS}(P_1,P_2,V_1,V_2))}{(N_1N_2)^{\Psi-\Omega+1}}\label{draper8287},
\end{align}
where \eqref{randombinning} follows because in the joint decoding of $(\bX_{t_{q-1}}^{\alpha_{q-1}},\bY_{t_{q-1}}^{\alpha_{q-1}})$, we make use of $\beta_{q-1}-\alpha_{q-1}+1=\Psi-\Omega+1$ binning codewords (see decoding procedure in \eqref{def:b1} and \eqref{def:db1}, corresponding error events in \eqref{calsw1} and the definitions of $\alpha_q$ and $\beta_q$ in \eqref{def:alphaq} and \eqref{def:betaq} respectively); \eqref{usesetf} follows by invoking the definition of $\calF(l,m,\bx_{t_{q-1}}^{\beta_{q-1}},\by_{t_{q-1}}^{\beta_{q-1}})$ in \eqref{firstdiffer}, expressing the sum of sequences as sum over (conditional) type classes and noting the equivalent expression for the score function in terms of types and conditional types in \eqref{suffixtype}; \eqref{draper8287} follows from calculations involving types~\cite{csiszar2011information}  and the condition $H_{\rmS}(P_1,\tilde{P}_2,\tilde{V}_1,\tilde{V_2})\leq H_{\rmS}(P_1,P_2,V_1,V_2)$ (also  refer to~\cite[Eqns.~(82)-(87)]{draper2010lossless}). To be specific, we obtain the polynomial factor in \eqref{draper8287} by noting that for $l\leq m\leq \alpha_{q-1}<\beta_{q-1}$,
\begin{align}
\left|\calP_{n(\beta_{q-1}-m+1)}(\calY)\right|
&\leq \left(n(\beta_{q-1}-m+1)+1\right)^{|\calY|} \label{eqn:num_types1}\\
&\leq \left(n(\beta_{q-1}-l+1)+1\right)^{|\calX||\calY|}, \label{eqn:num_types2}\\
\left|\calV_{n(m-l)}(\calX;P_1)\right|\left|\calV_{n(\beta_{q-1}-m+1)}(\calX;\tilde{P}_2)\right|
&\leq \left(n(m-l)+1\right)^{|\calX||\calY|}\left(n(\beta_{q-1}-m+1)+1\right)^{|\calX||\calY|}\label{eqn:num_types3}\\
&\leq \left(n(\beta_{q-1}-l+1)+1\right)^{2|\calX||\calY|}.\label{eqn:num_types4}
\end{align}


Combining \eqref{upperstep1} and \eqref{draper8287}, we obtain
\begin{align}
&\nn\Pr\left(\calE_{\mathrm{sw}}^1(l,m)\right)\\
&\leq \sum_{(\bx_{t_{q-1}}^{\beta_{q-1}},\by_{t_{q-1}}^{\beta_{q-1}})}P_{XY}^{n\Psi}(\bx_{t_{q-1}}^{\beta_{q-1}},\by_{t_{q-1}}^{\beta_{q-1}})
\min\left\{1,q(\bx_{t_{q-1}}^{\beta_{q-1}},\by_{t_{q-1}}^{\beta_{q-1}})\right\}\\
&=\sum_{(\bx_{t_{q-1}}^{l-1},\by_{t_{q-1}}^{l-1})}\!\!\!\!
P_{XY}^{n(l-t_q+1)}(\bx_{t_{q-1}}^{l-1},\by_{t_{q-1}}^{l-1})\!\!\!\!
\sum_{(\bx_l^{\beta_{q-1}},\by_l^{\beta_{q-1}})}\!\!\!\!
P_{XY}^{n(\beta_q-l+1)}(\bx_l^{\beta_{q-1}},\by_l^{\beta_{q-1}})
\min\left\{1,q(\bx_{t_{q-1}}^{\beta_{q-1}},\by_{t_{q-1}}^{\beta_{q-1}})\right\} \label{eqn:split}\\
\nn&=\sum_{(\bx_{t_{q-1}}^{l-1},\by_{t_{q-1}}^{l-1})}\!\!\!\!
P_{XY}^{n(l-t_q+1)}(\bx_{t_{q-1}}^{l-1},\by_{t_{q-1}}^{l-1})\!\!\!\!
\sum_{\substack{ P_1\in\calP_{n(m-l)}(\calY)\\P_2\in\calP_{n(\beta_{q-1}-m+1)}(\calY)\\ V_1\in\calV_{n(m-l)}(\calX;P_1)\\V_2\in\calV_{n(\beta_{q-1}-m+1)}(\calX;P_2)}}\sum_{\substack{\by_l^{m-1}\in\calT_{P_1}\\ \by_m^{\beta_{q-1}}\in\calT_{P_2}}}\sum_{\substack{\bx_l^{m-1}\in\calT_{V_1}(\by_l^{m-1})\\\bx_m^{\beta_{q-1}}\in\calT_{V_2}(\by_m^{\beta_{q-1}})}}\\
&\qquad \qquad \qquad \qquad \qquad P_{XY}^{n(\beta_{q-1}-l+1)}(\bx_l^{\beta_{q-1}},\by_l^{\beta_{q-1}})
\min\left\{1,q(\bx_{t_{q-1}}^{\beta_{q-1}},\by_{t_{q-1}}^{\beta_{q-1}}))\right\} \label{eqn:split_types}\\
\nn&\leq \left(\sum_{(\bx_{t_{q-1}}^{l-1},\by_{t_{q-1}}^{l-1})}\!\!\!\!
P_{XY}^{n(l-t_q+1)}(\bx_{t_{q-1}}^{l-1},\by_{t_{q-1}}^{l-1})\right)\!\!\!\!
\sum_{\substack{P_1\in\calP_{n(m-l)}(\calY)\\P_2\in\calP_{n(\beta_{q-1}-m+1)}(\calY)}}\sum_{\substack{V_1\in\calV_{n(m-l)}(\calX;P_1)\\V_2\in\calV_{n(\beta_{q-1}-m+1)}(\calX;P_2)}}\\*
\nn&\qquad\exp\left\{-n(m-l)D(P_1\times V_1\|P_{XY})-n(\beta_{q-1}-m+1)D(P_2\times V_2\|P_{XY})\right\}\\*
&\qquad \times\min\left\{1,\frac{ (n(\beta_{q-1}-l+1)+1)^{3|\calX||\calY|}\exp(n(\beta_{q-1}-l+1)H_{\rmS}(P_1,P_2,V_1,V_2))}{(N_1N_2)^{\Psi-\Omega+1}}\right\}\label{complicated}\\
\nn&=\sum_{\substack{P_1\in\calP_{n(m-l)}(\calY)\\P_2\in\calP_{n(\beta_{q-1}-m+1)}(\calY)}}\sum_{\substack{V_1\in\calV_{n(m-l)}(\calX;P_1)\\V_2\in\calV_{n(\beta_{q-1}-m+1)}(\calX;P_2)}}\\*
\nn&\qquad\exp\left\{-n(m-l)D(P_1\times V_1\|P_{XY})-n(\beta_{q-1}-m+1)D(P_2\times V_2\|P_{XY})\right\}\\
&\qquad \times(n(\beta_{q-1}-l+1)+1)^{3|\calX||\calY|} \min\left\{1,\frac{ \exp\left(n(\beta_{q-1}-l+1)H_{\rmS}(P_1,P_2,V_1,V_2)\right)}{(N_1N_2)^{\Psi-\Omega+1}}\right\} \label{eqn:one}\\
&\leq (n(\beta_{q-1}-l+1)+1))^{7|\calX||\calY|}\exp\left\{-n(\beta_{q-1}-l+1) E_X\left(R_{X,n}^l,R_{Y,n}^m,\frac{m-l}{\beta_q-l+1}\right)\right\}\label{defexgamma2}\\
&\leq (n(\beta_{q-1}-l+1)+1)^{7|\calX||\calY|}\exp\left\{-n(\beta_{q-1}-l+1) \inf_{\gamma\in[0,1]}E_X(R_{X,n}^l,R_{Y,n}^m,\gamma)\right\}\label{defexgamma},
\end{align}
where in~\eqref{eqn:split} we split the product distribution $P_{XY}^{n\Psi}$ and the sum into two parts (from $t_{q-1}$ to $l-1$ and from $l$ to $\beta_{q-1}$);  in~\eqref{eqn:split_types} we split the inner sum into types and conditional types; \eqref{complicated} follows from using standard results from the method of types \cite{csiszar2011information} and the upper bound on $q(\bx_{t_{q-1}}^{\beta_{q-1}},\by_{t_{q-1}}^{\beta_{q-1}})$ in \eqref{draper8287};  \eqref{eqn:one} follows by using the fact that the first term in parentheses in  \eqref{complicated} is unity; \eqref{defexgamma2} follows from upper bound on the number of (conditional) types similarly as in \eqref{draper8287} (also see the calculations in \eqref{eqn:num_types1}--\eqref{eqn:num_types4}), the definition  of $E_X(R_X,R_Y,\gamma)$ in \eqref{gammax}, the codeword sizes $N_1$ in \eqref{m1size} and $N_2$ in \eqref{m2size}, and the rates $R_{X,n}^l$ in \eqref{rxnl} and $R_{Y,n}^m$ in \eqref{rynm}; and \eqref{defexgamma} follows since $\exp(-a)$ is decreasing in $a$. The subsequent analyses for other error events are similar to \eqref{defexgamma} and thus we present the results only without giving detailed proofs.

\item $l\in[t_{q-1}:\alpha_{q-1}]$ and $m\in[\alpha_{q-1}+1:\beta_{q-1}]$: Similarly to the  steps leading to \eqref{defexgamma}, we obtain
\begin{align}
\Pr\left(\calE_{\mathrm{sw}}^1(l,m)\right)&\leq (n(\beta_{q-1}-l+1)+1)^{7|\calX||\calY|}\exp\left\{-n(\beta_{q-1}-l+1) \inf_{\gamma\in[0,1]}E_X(R_{X,n}^l,R_{Y,n},\gamma)\right\}\label{sw1:step2}.
\end{align}

\item $l,m\in[t_{q-1}:\alpha_{q-1}]$ and $l\geq m$: Similarly to the  steps leading to \eqref{defexgamma}, we obtain
\begin{align}
\Pr\left(\calE_{\mathrm{sw}}^1(l,m)\right)&\leq (n(\beta_{q-1}-m+1)+1)^{7|\calX||\calY|}\exp\left\{-n(\beta_{q-1}-m+1) \inf_{\gamma\in[0,1]}E_Y(R_{X,n}^l,R_{Y,n}^m,\gamma)\right\}\label{sw1:step3}.
\end{align}

\item $m\in[t_{q-1}:\alpha_{q-1}]$ and $l\in[\alpha_{q-1}+1:\beta_{q-1}]$: Similarly to the  steps leading to \eqref{defexgamma}, we obtain
\begin{align}
\Pr\left(\calE_{\mathrm{sw}}^1(l,m)\right)&\leq (n(\beta_{q-1}-m+1)+1)^{7|\calX||\calY|}\exp\left\{-n(\beta_{q-1}-m+1) \inf_{\gamma\in[0,1]}E_Y(R_{X,n},R_{Y,n}^m,\gamma)\right\}\label{sw1:step4}.
\end{align}
\end{enumerate}
Therefore, combining the results in \eqref{defexgamma}, \eqref{sw1:step2}, \eqref{sw1:step3} and \eqref{sw1:step4}, we obtain that for all $n\in\bbN$,
\begin{align}
\nn&\Pr\left(\bigcup_{\substack{l,m\in[t_{q-1}:\beta_{q-1}]:\\\min\{l,m\}\leq \alpha_{q-1}}}\calE_{\mathrm{sw}}^1(l,m)\right)\leq \sum_{\substack{l,m\in[t_{q-1}:\beta_{q-1}]:\\\min\{l,m\}\leq \alpha_{q-1}}}\Pr\left(\calE_{\mathrm{sw}}^1(l,m)\right)\\
\nn&\leq \sum_{l=t_{q-1}}^{\alpha_{q-1}}\sum_{m=l}^{\alpha_{q-1}} (n(\beta_{q-1}-l+1)+1)^{7|\calX||\calY|}\exp\left\{-n(\beta_{q-1}-l+1) \inf_{\gamma\in[0,1]}E_X(R_{X,n}^l,R_{Y,n}^m,\gamma)\right\}\\*
\nn&\qquad+\sum_{m=t_{q-1}}^{\alpha_{q-1}}\sum_{l=m}^{\alpha_{q-1}} (n(\beta_{q-1}-m+1)+1)^{7|\calX||\calY|}\exp\left\{-n(\beta_{q-1}-m+1) \inf_{\gamma\in[0,1]}E_Y(R_{X,n}^l,R_{Y,n}^m,\gamma)\right\}\\*
\nn&\qquad+\sum_{l=t_{q-1}}^{\alpha_{q-1}}\sum_{m=\alpha_{q-1}+1}^{\beta_{q-1}}(n(\beta_{q-1}-l+1)+1)^{7|\calX||\calY|}\exp\left\{-n(\beta_{q-1}-l+1) \inf_{\gamma\in[0,1]}E_X(R_{X,n}^l,R_{Y,n},\gamma)\right\}\\*
&\qquad+\sum_{m=t_{q-1}}^{\alpha_{q-1}}\sum_{l=\alpha_{q-1}+1}^{\beta_{q-1}} (n(\beta_{q-1}-m+1)+1)^{7|\calX||\calY|}\exp\left\{-n(\beta_{q-1}-m+1) \inf_{\gamma\in[0,1]}E_Y(R_{X,n},R_{Y,n}^m,\gamma)\right\}\label{manysum}\\
\nn&\leq 2\Psi^2(n\Psi+1)^{7|\calX||\calY|}\exp\left\{-n(\Psi-\Omega+1)\min\left\{\inf_{\gamma\in[0,1]}E_X(R_{X,n}^l,R_{Y,n}^m,\gamma),\inf_{\gamma\in[0,1]}E_Y(R_{X,n}^l,R_{Y,n}^m,\gamma)\right\}\right\}\\*
&\qquad+2\Psi^2(n\Psi+1)^{7|\calX||\calY|}\exp\left\{-n(\Psi-\Omega+1)\min\left\{\inf_{\gamma\in[0,1]}E_X(R_{X,n}^l,R_{Y,n},\gamma),\inf_{\gamma\in[0,1]}E_Y(R_{X,n},R_{Y,n}^m,\gamma)\right\}\right\},
\label{upcalesw1}
\end{align}
where \eqref{upcalesw1} follows since using the definitions of $\alpha_q$ in \eqref{def:alphaq}, $\beta_q$ in \eqref{def:betaq}, $t_q$ in \eqref{def:tq} and the conditions that $\Psi>2\Omega$, $\Omega\geq T$, we find that $l,m\in[t_{q-1}:\alpha_{q-1}]$,
\begin{align}
\alpha_{q-1}-t_{q-1}+1&=\Omega<\Psi,\label{ipsw1step0}\\
\beta_{q-1}-(\alpha_{q-1}+1)+1&=\Psi-\Omega<\Psi,\label{ipsw1step1}\\
\Psi-\Omega+1&\leq \beta_{q-1}-l+1\leq \Psi,\label{ipsw1step2}\\
\Psi-\Omega+1&\leq \beta_{q-1}-m+1\leq \Psi\label{ipsw1step3}.
\end{align}

Specifically, using the bound in \eqref{ipsw1step0}, we can upper bound the number of the summands in the first two sums in \eqref{manysum} by $2\Psi^2$. Furthermore, using \eqref{ipsw1step2} and \eqref{ipsw1step3}, each summand in the first two sums in \eqref{manysum} can be upper bounded by the first term in \eqref{upcalesw1} divided by $2\Psi^2$. Hence, the sum of first two sums in \eqref{manysum} is upper bounded by the the first term in \eqref{upcalesw1}. Similarly, using \eqref{ipsw1step0} to \eqref{ipsw1step3}, we can upper bound the sum of last two sums in \eqref{manysum} by the second term in~\eqref{upcalesw1}.
 
\item $\calE_{\mathrm{sw},j}^2(l,m)$ defined in  \eqref{calsw2}: Similarly to the steps leading to \eqref{defexgamma}, we obtain
\begin{enumerate}
\item $l\leq m$
\begin{align}
\Pr\left(\calE_{\mathrm{sw},j}^2(l,m)\right)
&\leq (n(\beta_{q-1}-l+1)+1)^{7|\calX||\calY|}\exp\left\{-n(\beta_{q-1}-l+1) \inf_{\gamma\in[0,1]}E_X(R_{X,n},R_{Y,n},\gamma)\right\}
\end{align}
\item $m\leq l$
\begin{align}
\Pr\left(\calE_{\mathrm{sw},j}^2(l,m)\right)
&\leq (n(\beta_{q-1}-m+1)+1)^{7|\calX||\calY|}\exp\left\{-n(\beta_{q-1}-m+1) \inf_{\gamma\in[0,1]}E_Y(R_{X,n},R_{Y,n},\gamma)\right\}
\end{align}
\end{enumerate}
Therefore, for all $n\in\bbN$,
\begin{align}
\nn&\Pr\left(\bigcup_{\substack{j\in[\alpha_{q-1}+1:t_q-1]\\l,m\in[j:\beta_{q-1}]\\\min\{l,m\}=j}}\calE_{\mathrm{sw},j}^2(l,m)\right)\\
\nn&=\sum_{j=\alpha_{q-1}+1}^{t_q-1}\Bigg(\sum_{m=j}^{\beta_{q-1}} (n(\beta_{q-1}-j+1)+1)^{7|\calX||\calY|}\exp\left\{-n(\beta_{q-1}-j+1) \inf_{\gamma\in[0,1]}E_X(R_{X,n},R_{Y,n},\gamma)\right\}\\
&\qquad+\sum_{l=j}^{\beta_{q-1}} (n(\beta_{q-1}-j+1)+1)^{7|\calX||\calY|}\exp\left\{-n(\beta_{q-1}-j+1) \inf_{\gamma\in[0,1]}E_Y(R_{X,n},R_{Y,n},\gamma)\right\}\Bigg)\label{upcalesw2ini}\\
&\leq 2\Psi^2(n\Psi+1)^{7|\calX||\calY|}\exp\Bigg\{-n\Omega\min\Big\{\inf_{\gamma\in[0,1]}E_X(R_{X,n},R_{Y,n},\gamma),\inf_{\gamma\in[0,1]}E_Y(R_{X,n},R_{Y,n},\gamma)\Big\}\Bigg\}\label{upcalesw2},
\end{align}
where \eqref{upcalesw2} follows by invoking the defintions of $\alpha_q$, $\beta_q$ and $t_q$ in \eqref{def:alphaq}, \eqref{def:betaq} and \eqref{def:tq} and concluding that for $j\in[\alpha_{q-1}+1:t_q-1]$ and $l,m\in[j:\beta_{q-1}]$,
\begin{align}
(t_q-1)-(\alpha_{q-1}+1)+1&\leq t_q-1-\alpha_{q-1}=\Psi-2\Omega+1<\Psi,\label{ipsw2step0}\\
\beta_{q-1}-l+1&\leq \beta_{q-1}-\alpha_{q-1}=\Psi-\Omega<\Psi,\label{ipsw2step1}\\
\beta_{q-1}-m+1&\leq \beta_{q-1}-\alpha_{q-1}=\Psi-\Omega<\Psi,\label{ipswstep2}\\
\beta_{q-1}-j+1&\geq \beta_{q-1}-t_q+2=\Omega,\label{ipsw2step3}\\
\beta_{q-1}-j+1&\leq \beta_{q-1}-(\alpha_{q-1}+1)+1=\Psi-\Omega<\Psi\label{ipsw2step4}.
\end{align}

Specifically, using the bounds in \eqref{ipsw2step0}, \eqref{ipsw2step1} and \eqref{ipswstep2}, we can upper bound the number of summands in the two sums in \eqref{upcalesw2ini} by $2\Psi^2$. Furthermore, using the bounds in  \eqref{ipsw2step3} and \eqref{ipsw2step4}, each summand in the sums of \eqref{upcalesw2ini} can be upper bounded by \eqref{upcalesw2} divided by $2\Psi^2$.

We remark the reasoning for last step in \eqref{upcalesw2} holds similarly in the analyses of other error events for the last steps of the bounding of the error probabilities in the sequel. See \eqref{upcalesw3}, \eqref{upcalesw5} and \eqref{upcalesw6} where we omit the reasonings  for them for the sake of brevity.

\item $\calE_{\mathrm{sw},j}^3(l,m)$ defined in \eqref{calsw3}: Similarly to the  steps leading to \eqref{defexgamma}, we obtain
\begin{enumerate}
\item $l\leq m$
\begin{align}
\Pr\left(\calE_{\mathrm{sw},j}^3(l,m)\right)
&\leq (n(\beta_q-l+1)+1)^{7|\calX||\calY|}\exp\left\{-n(\beta_q-l+1) \inf_{\gamma\in[0,1]}E_X(R_{X,n},R_{Y,n},\gamma)\right\}
\end{align}
\item $m\leq l$
\begin{align}
\Pr\left(\calE_{\mathrm{sw},j}^3(l,m)\right)
&\leq (n(\beta_q-m+1)+1)^{7|\calX||\calY|}\exp\left\{-n(\beta_q-m+1) \inf_{\gamma\in[0,1]}E_Y(R_{X,n},R_{Y,n},\gamma)\right\}
\end{align}
\end{enumerate}
Therefore, for all $n\in\bbN$,
\begin{align}
\nn&\Pr\left(\bigcup_{\substack{j\in[t_q:\beta_{q-1}]\\l,m\in[j : \beta_q]\\\min\{l,m\}=j}}\calE_{\mathrm{sw},j}^3(l,m)\right)\\
\nn&=\sum_{j=t_q}^{\beta_{q-1}}\Bigg(\sum_{m=j}^{\beta_q} (n(\beta_q-j+1)+1)^{7|\calX||\calY|}\exp\left\{-n(\beta_q-j+1) \inf_{\gamma\in[0,1]}E_X(R_{X,n},R_{Y,n},\gamma)\right\}\\
&\qquad+\sum_{l=j}^{\beta_q} (n(\beta_q-j+1)+1)^{7|\calX||\calY|}\exp\left\{-n(\beta_q-j+1) \inf_{\gamma\in[0,1]}E_Y(R_{X,n},R_{Y,n},\gamma)\right\}\Bigg)\\
&\leq 2\Psi^2(n\Psi+1)^{7|\calX||\calY|}\exp\Bigg\{-n(\Psi-\Omega+2)\min\Big\{\inf_{\gamma\in[0,1]}E_X(R_{X,n},R_{Y,n},\gamma),\inf_{\gamma\in[0,1]}E_Y(R_{X,n},R_{Y,n},\gamma)\Big\}\Bigg\}.\label{upcalesw3}
\end{align}

\item $\calE_{\mathrm{sw},j}^5(l,m)$ defined in \eqref{calsw5}: Similarly to the  steps leading to \eqref{defexgamma}, we obtain
\begin{enumerate}
\item $l\leq m$
\begin{align}
\Pr\left(\calE_{\mathrm{sw},j}^5(l,m)\right)
&\leq (n(\beta_q-l+1)+1)^{7|\calX||\calY|}\exp\left\{-n(\beta_q-l+1) \inf_{\gamma\in[0,1]}E_X(R_{X,n},R_{Y,n},\gamma)\right\}
\end{align}
\item $m\leq l$
\begin{align}
\Pr\left(\calE_{\mathrm{sw},j}^5(l,m)\right)
&\leq (n(\beta_q-m+1)+1)^{7|\calX||\calY|}\exp\left\{-n(\beta_q-m+1) \inf_{\gamma\in[0,1]}E_Y(R_{X,n},R_{Y,n},\gamma)\right\}
\end{align}
\end{enumerate}
Therefore, for all $n\in\bbN$,
\begin{align}
\nn&\Pr\left(\bigcup_{\substack{j\in[\alpha_q:t_{q+1}-1]\\l,m\in[j:\beta_q]\\\min\{l,m\}=j}}\calE_{\mathrm{sw},j}^5(l,m)\right)\\
\nn&=\sum_{j=\alpha_q}^{t_{q+1}-1}\Bigg(\sum_{m=j}^{\beta_q} (n(\beta_q-j+1)+1)^{7|\calX||\calY|}\exp\left\{-n(\beta_q-j+1) \inf_{\gamma\in[0,1]}E_X(R_{X,n},R_{Y,n},\gamma)\right\}\\
&\qquad+\sum_{l=j}^{\beta_q} (n(\beta_q-j+1)+1)^{7|\calX||\calY|}\exp\left\{-n(\beta_q-j+1) \inf_{\gamma\in[0,1]}E_Y(R_{X,n},R_{Y,n},\gamma)\right\}\Bigg)\\
&\leq 2\Psi^2(n\Psi+1)^{7|\calX||\calY|}\exp\Bigg\{-n\Omega\min\Big\{\inf_{\gamma\in[0,1]}E_X(R_{X,n},R_{Y,n},\gamma),\inf_{\gamma\in[0,1]}E_Y(R_{X,n},R_{Y,n},\gamma)\Big\}\Bigg\}\label{upcalesw5}.
\end{align}

\item $\calE_{\mathrm{sw},j}^6(l,m)$ defined in \eqref{calsw6}: Similarly to the  steps leading to \eqref{defexgamma}, we obtain
\begin{enumerate}
\item $l\leq m$
\begin{align}
\Pr\left(\calE_{\mathrm{sw},j}^6(l,m)\right)
&\leq (n(T_k-l+1)+1)^{7|\calX||\calY|}\exp\left\{-n(T_k-l+1) \inf_{\gamma\in[0,1]}E_X(R_{X,n},R_{Y,n},\gamma)\right\}
\end{align}
\item $m\leq l$
\begin{align}
\Pr\left(\calE_{\mathrm{sw},j}^6(l,m)\right)
&\leq (n(T_k-m+1)+1)^{7|\calX||\calY|}\exp\left\{-n(T_k-m+1) \inf_{\gamma\in[0,1]}E_Y(R_{X,n},R_{Y,n},\gamma)\right\}
\end{align}
\end{enumerate}
Therefore, for all $n\in\bbN$,
\begin{align}
\nn&\Pr\left(\bigcup_{\substack{j\in[t_{q+1}:k]\\l,m\in[j:T_k]\\\min\{l,m\}=j}}\calE_{\mathrm{sw},j}^6(l,m)\right)\\
\nn&=\sum_{j=t_{q+1}}^k \Bigg(\sum_{m=j}^{T_k} (n(T_k-j+1)+1)^{7|\calX||\calY|}\exp\left\{-n(T_k-j+1) \inf_{\gamma\in[0,1]}E_X(R_{X,n},R_{Y,n},\gamma)\right\}\\
&\qquad+\sum_{l=j}^{T_k} (n(T_k-j+1)+1)^{7|\calX||\calY|}\exp\left\{-n(T_k-j+1) \inf_{\gamma\in[0,1]}E_Y(R_{X,n},R_{Y,n},\gamma)\right\}\Bigg)\\
&\leq 2\Psi^2(n\Psi+1)^{7|\calX||\calY|}\exp\Bigg\{-nT\min\Big\{\inf_{\gamma\in[0,1]}E_X(R_{X,n},R_{Y,n},\gamma),\inf_{\gamma\in[0,1]}E_Y(R_{X,n},R_{Y,n},\gamma)\Big\}\Bigg\}\label{upcalesw6}.
\end{align}
\end{enumerate}

We are now ready to bound the error probability in decoding $(\bX_k,\bY_k)$ at time $T_k$ for $k\in\calS(q)$ with $q\geq 2$. Recalling the analyses of error events in Section \ref{analysisep} (cf. \eqref{calsw1} to \eqref{calsw6}), we conclude that
\begin{align}
 \nn&\Pr\left((\hat{\bX}_k,\hat{\bY}_k)\neq (\bX_k,\bY_k)\right)\\
 \nn&\leq \Pr\left(\bigcup_{\substack{l,m\in[t_{q-1}:\beta_{q-1}]:\\\min\{l,m\}\leq \alpha_{q-1}}}\calE_{\mathrm{sw}}^1(l,m)\right)+\Pr\left(\bigcup_{\substack{j\in[\alpha_{q+1}:t_q-1]\\l,m\in[j:\beta_{q-1}]\\\min\{l,m\}=j}}\calE_{\mathrm{sw},j}^2(l,m)\right)+\Pr\left(\bigcup_{\substack{j\in[t_q:\beta_{q-1}]\\l,m\in[j:\beta_q]\\\min\{l,m\}=j}}\calE_{\mathrm{sw},j}^3(l,m)\right)\\
&\qquad+\Pr\left(\bigcup_{\substack{j\in[\alpha_q:t_{q+1}-1]\\l,m\in[j:\beta_q]\\\min\{l,m\}=j}}\calE_{\mathrm{sw},j}^5(l,m)\right)+\Pr\left(\bigcup_{\substack{j\in[t_{q+1}:k]\\l,m\in[j:T_k]\\\min\{l,m\}=j}}\calE_{\mathrm{sw},j}^6(l,m)\right)\label{uppfinalerror}.
\end{align}

\subsection{Asymptotic Behavior of the Exponents}

We choose the largest and smallest memories to  be
\begin{align}
\Psi&=n^{\frac{1}{2}+\delta}\label{defpsi}\\
\Omega&=2T\label{defomega},
\end{align}
where $\delta\in(0,\frac{1}{2})$. Note these choices of $\Psi$ and $\Omega$ satisfy the two conditions   $\Psi>2\Omega$ and $\Omega\geq T$ for $n$ large enough since $T$ is a constant.

Define the doubly-indexed sequence 
\begin{align}
\kappa_{n,l}:=\left(1-\frac{\Psi-\Omega+1}{\beta_{q-1}-l+1}\right)(R_X^*+\theta_1\xi_n)\label{def:taunl}.
\end{align}
Invoking the definitions in \eqref{m1size} and \eqref{rxnl}, we obtain 
\begin{align}
R_{X,n}^l
&=\frac{\Psi-\Omega+1}{n(\beta_{q-1}-l+1)}\log N_1\\
&=R_X^*+\theta_1\xi_n-\left(1-\frac{\Psi-\Omega+1}{\beta_{q-1}-l+1}\right)(R_X^*+\theta_1\xi_n)\\
&=R_X^*+\theta_1\xi_n-\kappa_{n,l}=R_{X,n}-\kappa_{n,l}\label{deviaterxn}.
\end{align}

Using the definitions of $\alpha_q$ in \eqref{def:alphaq}, $\beta_q$ in \eqref{def:betaq} and $t_q$ in \eqref{def:tq}, we obtain 
\begin{align}
\beta_q-\alpha_q&=\Psi-\Omega,\\
\beta_q-t_q&=\Psi-1.
\end{align}
Hence, for $l\in[t_{q-1} : \alpha_{q-1}]$, we have
\begin{align}
\Psi-\Omega+1\leq \beta_{q-1}-l+1 \leq \Psi\label{betaqlm}.
\end{align}
Further, invoking \eqref{def:taunl}, we conclude that for every $l\in[t_{q-1}:\alpha_{q-1}]$,
\begin{align}
\kappa_{n,l}
&\leq \left(1-\frac{\Psi-\Omega+1}{\Psi}\right)(R_X^*+\theta_1\xi_n)\\
&\leq \frac{\Omega-1}{\Psi}(R_X^*+\theta_1\xi_n)\\
&=o(\xi_n)\label{oxin},
\end{align}
where \eqref{oxin} holds by using the values of $\Psi$ in \eqref{defpsi} and $\Omega$ in \eqref{defomega}, and the asymptotic conditions on the sequence $\{\xi_n\}_{n\in\bbN}$ (See Definition \ref{defmdcsw}).  We remark that \eqref{oxin} holds for all $l\in[t_{q-1} : \alpha_{q-1}]$, i.e., for all such $l$ and for any $\varepsilon>0$, there exists $N=N_\varepsilon\in\bbN$ such that $\kappa_{n,l}/\xi_n<\varepsilon$ for all $n> N$. 

Define 
\begin{align}
\zeta_{n,m}:=\left(1-\frac{\Psi-\Omega+1}{\beta_{q-1}-m+1}\right)(R_Y^*+\theta_2\xi_n)\label{def:tannm}.
\end{align}
Similarly as \eqref{deviaterxn} and \eqref{oxin}, we can show that
\begin{align}
R_{Y,n}^m=R_Y^*+\theta\xi_n-\zeta_{n,m}=R_{Y,n}-\zeta_{n,m},
\end{align}
and for $m\in[t_{q-1} : \alpha_{q-1}]$,
\begin{align}
\zeta_{n,m}=o(\xi_n)\label{oxiny}.
\end{align}
We remark that  \eqref{oxiny} also holds for all $m \in [t_{q-1} : \alpha_{q-1}]$.

The next lemma presents the asymptotic behavior of the exponents of the error probabilities in  \eqref{upcalesw1}, \eqref{upcalesw2}, \eqref{upcalesw3}, \eqref{upcalesw5} and \eqref{upcalesw6}.
\begin{lemma}
\label{mdcasymp}
For $l,m\in[t_{q-1}:\alpha_{q-1}]$,
 we have
\begin{align}
\liminf_{n\to\infty} \frac{\min\left\{\inf_{\gamma\in[0,1]}E_X(R_{X,n}^l,R_{Y,n}^m,\gamma),\inf_{\gamma\in[0,1]}E_Y(R_{X,n}^l,R_{Y,n}^m,\gamma) \right\}}{\xi_n^2}&\geq L(R_X^*,R_Y^*)\label{eesw1}\\
\liminf_{n\to\infty} \frac{\min\left\{\inf_{\gamma\in[0,1]}E_X(R_{X,n}^l,R_{Y,n},\gamma),\inf_{\gamma\in[0,1]}E_Y(R_{X,n},R_{Y,n}^m,\gamma) \right\}}{\xi_n^2}&\geq L(R_X^*,R_Y^*)\\
\liminf_{n\to\infty} \frac{\min\left\{\inf_{\gamma\in[0,1]}E_X(R_{X,n},R_{Y,n},\gamma),\inf_{\gamma\in[0,1]}E_Y(R_{X,n},R_{Y,n},\gamma) \right\}}{\xi_n^2}&\geq L(R_X^*,R_Y^*),\label{eesw3}
\end{align}
where $L(R_X^*,R_Y^*)$ is defined as follows:
\begin{enumerate}
\item Case (i): $R_X^*=H(P_{X|Y}|P_Y)$ and $R_Y^*>H(P_Y)$
\begin{align}
L(R_X^*,R_Y^*)=\frac{\theta_1^2}{2\rmV(P_{X|Y}|P_Y)},
\end{align}
\item Case (ii): $R_X^*=H(P_{X|Y}|P_Y)$ and $R_Y^*=H(P_Y)$
\begin{align}
L(R_X^*,R_Y^*)=\min\left\{\inf_{\gamma\in[0,1]}\frac{(\theta_1+(1-\gamma)\theta_2)^2}{2\left(\gamma \rmV(P_{X|Y}|P_Y)+(1-\gamma)\rmV(P_{XY})\right)},
\frac{(\theta_1+\theta_2)^2}{2\rmV(P_{XY})}\right\}\end{align}
\item Case (iii): $R_X^*+R_Y^*=H(P_{XY})$, $H(P_{X|Y}|P_Y)<R_X^*<H(P_X)$ and $H(P_{Y|X}|P_X)<R_Y^*<H(P_Y)$
\begin{align}
L(R_X^*,R_Y^*)=\frac{(\theta_1+\theta_2)^2}{2\rmV(P_{XY})},
\end{align}
\end{enumerate}
and for Cases (iv) and (v), $L(R_X^*,R_Y^*)$ is defined similarly as Cases (ii) and (i) with $X$ and $Y$, $\theta_1$ and $\theta_2$ interchanged.
\end{lemma}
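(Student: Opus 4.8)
The plan is to establish Lemma \ref{mdcasymp} by reducing all three limiting statements to a single Taylor-expansion computation for the exponent $E_X(R_{X,n}', R_{Y,n}', \gamma)$ (and symmetrically for $E_Y$), where the perturbed rates $R_{X,n}', R_{Y,n}'$ are of the form $R_X^* + \theta_1\xi_n + o(\xi_n)$ and $R_Y^* + \theta_2\xi_n + o(\xi_n)$. First I would observe that in each of \eqref{eesw1}--\eqref{eesw3}, the rate arguments differ from $(R_{X,n}, R_{Y,n}) = (R_X^* + \theta_1\xi_n, R_Y^* + \theta_2\xi_n)$ only by the perturbations $\kappa_{n,l}$ and $\zeta_{n,m}$, both of which are $o(\xi_n)$ uniformly over $l, m \in [t_{q-1}:\alpha_{q-1}]$ by \eqref{oxin} and \eqref{oxiny}. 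Hence it suffices to show that for \emph{any} sequences of rates $R_{X,n}' = R_X^* + \theta_1\xi_n + o(\xi_n)$ and $R_{Y,n}' = R_Y^* + \theta_2\xi_n + o(\xi_n)$, one has $\liminf_n \xi_n^{-2}\inf_{\gamma\in[0,1]} E_X(R_{X,n}', R_{Y,n}', \gamma) \ge L(R_X^*, R_Y^*)$, and likewise for $E_Y$; then the minimum of two quantities each bounded below by $L$ is bounded below by $L$, and the three displays follow.

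Next I would handle the core Taylor expansion, case by case. Using the Csisz\'ar--K\"orner form \eqref{gammax2} of $E_X(R_X, R_Y, \gamma)$, I would analyze the behavior as the rate pair approaches the boundary. In Case (i), $R_X^* = H(P_{X|Y}|P_Y)$ while $R_Y^*$ is strictly above $H(P_Y)$, so for $n$ large the constraint tied to the joint entropy is slack, and only the conditional term is active; the optimizing $\rho$ in the Gallager form tends to $0$, and a second-order expansion of $E_{X|Y}(R_X, \rho)$ around $\rho = 0$ at $R_X = H(P_{X|Y}|P_Y) + \theta_1\xi_n + o(\xi_n)$ yields $\frac{\theta_1^2 \xi_n^2}{2\rmV(P_{X|Y}|P_Y)} + o(\xi_n^2)$, uniformly in $\gamma$ since the $(1-\gamma)$-weighted joint term is nonnegative and in fact bounded away from zero (it does not vanish at a rate comparable to $\xi_n^2$ when $R_Y^* > H(P_Y)$). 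In Case (iii), both individual rates lie strictly inside their respective intervals, so only the sum-rate constraint $R_X + R_Y = H(P_{XY})$ is active; the relevant expansion is of $E_{XY}(R_X, R_Y, \rho)$ around $\rho = 0$ at $R_X + R_Y = H(P_{XY}) + (\theta_1+\theta_2)\xi_n + o(\xi_n)$, giving $\frac{(\theta_1+\theta_2)^2\xi_n^2}{2\rmV(P_{XY})} + o(\xi_n^2)$; again the conditional-entropy terms are slack and contribute nothing to the leading order, and the $\gamma$-infimum is immaterial. Case (ii) is the delicate one: here \emph{both} $R_X^* = H(P_{X|Y}|P_Y)$ and $R_Y^* = H(P_Y)$ (hence also $R_X^* + R_Y^* = H(P_{XY})$), so \emph{both} terms inside the convex combination in \eqref{gammax2} are simultaneously near-active. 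The expansion must be carried out retaining both, yielding, after optimizing over the Gallager parameter $\rho$ and keeping $\gamma$ as a free variable, a term of the form $\xi_n^2 \cdot \frac{(\theta_1 + (1-\gamma)\theta_2)^2}{2(\gamma\rmV(P_{X|Y}|P_Y) + (1-\gamma)\rmV(P_{XY}))} + o(\xi_n^2)$ from $E_X$, and the $\inf_{\gamma\in[0,1]}$ then produces the first term in the definition of $L(R_X^*, R_Y^*)$ for Case (ii); the $E_Y$ side, where the analogous combination collapses because the $Y$-marginal rate is at $H(P_Y)$ on the full boundary, contributes $\frac{(\theta_1+\theta_2)^2}{2\rmV(P_{XY})}$, matching the second term.

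The main obstacle, as the excerpt itself flags, is the rigor of the final Taylor expansion in Case (ii): the exponent $\inf_{\gamma\in[0,1]} E_X(R_{X,n}', R_{Y,n}', \gamma)$ involves a minimization over the scalar $\gamma$ that does not commute cleanly with the limit $n\to\infty$, because the optimizing $\gamma$ itself may depend on $n$ and the rate perturbations. I would address this by following the technique of Polyanskiy \cite[Lemma 48]{polyanskiy2010thesis}: first lower-bound $E_X(R_{X,n}', R_{Y,n}', \gamma)$ for each fixed $\gamma$ by a quantity of the form $\xi_n^2 f(P_{XY}, \gamma, \bm\theta) + o(\xi_n^2)$ with the $o(\xi_n^2)$ term controlled \emph{uniformly} in $\gamma \in [0,1]$ (this uniformity is what must be checked carefully — it follows from joint continuity and compactness of $[0,1]$, together with the fact that the Gallager optimization over $\rho \in [0,1]$ behaves smoothly near $\rho = 0$), then take the infimum over $\gamma$ on both sides. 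A parallel care is needed to confirm that the perturbations $\kappa_{n,l}, \zeta_{n,m} = o(\xi_n)$ truly do not affect the leading-order $\xi_n^2$ coefficient — this is because they enter the rate linearly and the exponent's dependence on the rate at the boundary is quadratic, so an $o(\xi_n)$ rate perturbation produces an $o(\xi_n^2)$ change in the exponent; I would make this precise using the explicit form of the second-order derivatives (the varentropies) and the fact that $\xi_n \to 0$. The preliminary expansions needed for the derivatives of $E_{X|Y}$ and $E_{XY}$ in $\rho$ are exactly those collected in Appendix \ref{appprelim}, which I would invoke rather than re-derive.
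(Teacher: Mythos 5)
Your proposal follows the paper's approach essentially step for step: plug a specific $\rho_n(\gamma)$ of order $\xi_n$ into the Gallager form, Taylor expand $E_{X|Y}(\rho)$ and $E_{XY}(\rho)$ to second order using the derivative bounds collected in Appendix~\ref{appprelim}, observe that $\kappa_{n,l},\zeta_{n,m}=o(\xi_n)$ uniformly so they perturb the quadratic leading order only by $o(\xi_n^2)$, and then pass the $\inf_{\gamma\in[0,1]}$ through the Taylor expansion via a Polyanskiy-style lemma (the paper's Lemma~\ref{convergegamma}) rather than by a naive interchange of limit and infimum.

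One point worth tightening in your Case~(i) argument. You justify the claim that the expansion yields $\theta_1^2\xi_n^2/(2\rmV(P_{X|Y}|P_Y))+o(\xi_n^2)$ ``uniformly in $\gamma$'' by asserting that the $(1-\gamma)$-weighted joint term is ``bounded away from zero''. As written this is not correct: that term is multiplied by $(1-\gamma)$, and it vanishes at $\gamma=1$, which is precisely where the $\gamma$-infimum concentrates in the limit. The actual mechanism (which the paper implements via Lemma~\ref{convergegamma}) is that the $\Theta(\xi_n)$-order coefficient of the lower bound, namely $f(\gamma)\propto (1-\gamma)(R_X^*+R_Y^*-H(P_{XY}))|\theta_1+(1-\gamma)\theta_2|$, is nonnegative and vanishes at $\gamma=1$; this forces the infimizing $\gamma$ toward $1$ as $n\to\infty$, and the $\xi_n^2$-order coefficient $g(\gamma)$ evaluated there, $g(1)=\theta_1^2/(2\rmV(P_{X|Y}|P_Y))$, is what survives. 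Since you already cite the Polyanskiy-type device as the tool that makes the $\gamma$-infimum rigorous, the gap is one of exposition rather than substance, but the phrase ``bounded away from zero'' should be replaced by an appeal to nonnegativity plus the vanishing of the $\Theta(\xi_n)$ coefficient at $\gamma=1$.
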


The proof of Lemma \ref{mdcasymp} is presented in Appendix \ref{prooflemmamdc}. We emphasize  that the same lower bounds for the limits in  \eqref{eesw1}--\eqref{eesw3} hold  regardless of the specific choices of $l,m\in[t_{q-1}:\alpha_{q-1}]$ due to the estimates $\kappa_{n,l}=o(\xi_n)$ (see \eqref{oxin}) and $\zeta_{n,m}=o(\xi_n)$ (see \eqref{oxiny}) and the fact that \eqref{oxin} and \eqref{oxiny} hold for all $l$ and $m$ in the specified range.

Using Lemma \ref{mdcasymp}, the bounds in \eqref{upcalesw1}, \eqref{upcalesw2}, \eqref{upcalesw3}, \eqref{upcalesw5} and \eqref{upcalesw6}, and the facts that $\Psi = n^{\frac{1}{2}+\delta}$ (cf.~\eqref{defpsi})  and $\Omega=2T$ (cf.~\eqref{defomega}), we conclude that the upper bound of the error probability in \eqref{uppfinalerror} is dominated by the final term in  \eqref{upcalesw6}.  Therefore, we obtain
\begin{align}
\nn&\liminf_{n\to\infty}-\frac{\log \Pr\left((\hat{\bX}_k,\hat{\bY}_k)\neq (\bX_k,\bY_k)\right)}{n\xi_n^2}\\
\nn&\geq \liminf_{n\to\infty} T\frac{\min\left\{\inf_{\gamma\in[0,1]}E_X(R_{X,n},R_{Y,n},\gamma),\inf_{\gamma\in[0,1]}E_Y(R_{X,n},R_{Y,n},\gamma) \right\}}{\xi_n^2}\\
&\qquad+\frac{\log 2+2\log \Psi}{n\xi_n^2}+\frac{7|\calX||\calY|\log (n\Psi+1)}{n\xi_n^2}\\
&\geq TL(R_X^*,R_Y^*)\label{acefinal},
\end{align}
where \eqref{acefinal} holds by invoking Lemma \ref{mdcasymp} and noting that $\frac{\log n}{n\xi_n^2}\to 0$ as $n\to\infty$.

\section{Conclusion}
\label{conc}
In this paper, we have derived an achievable moderate deviations constant for the blockwise streaming version of SW coding. We showed that the moderate deviations constant is enhanced by a multiplicative factor of at least $T$ (over the non-streaming setting) in many instances. 

A natural next step is to attempt to derive a converse (cf.~\cite{LeeTanKhisti2016}), possibly leveraging on feedforward decoders~\cite{sahai2008block,chang2007price}.  However, we envision significant challenges in obtaining a matching converse. The only tight result for streaming source coding was proved by Chang and Sahai in \cite{chang2006error} where they derived the optimal error exponent for symbolwise lossless point-to-point streaming. They proved the direct part by using  {\em fixed-to-variable-length} codes, coupled with a FIFO  encoder. We consider  {\em fixed-to-fixed-length} codes in this paper. Chang \emph{et al.}~\cite{chang2007price} made attempts to establish a converse result in the large deviations regime for symbolwise lossless streaming source coding with decoder side information  using feedforward decoding. However, the derived bounds on the error exponents differ significantly (from the achievability) except for some very pathological sources. If we adopt vanilla feedforward decoders~\cite{sahai2008block,chang2007streaming} for our  problem setting, we are able to derive a converse moderate deviations result  in which the moderate deviations constant is infinity, which is vacuous. This is because of the suboptimality of the bounds on the error exponents. Other possible research topics may include streaming lossy source coding \cite[Chapter 3]{chang2007streaming} and streaming  versions of other multi-terminal coding problems~\cite[Part~II]{el2011network}.

\appendix
\subsection{Proof of Proposition~\ref{ttime}}
\label{proofttime}

Note that for $\bm{\theta}\in\Theta_{\mathrm{(ii)}}$ (cf.~\eqref{def:Theta2}), $\theta_1>0$ and $\theta_1+\theta_2>0$. The conditions on $\bm{\theta}$ are essentially only concerning the ratio of $\theta_1$ and $\theta_2$. For ease of notation, we denote $f(P_{XY},\gamma,\bm{\theta})$ as $f(P_{XY},\gamma)$ to suppress the dependency on $\bm{\theta}$.  To further simplify notation, we also use $\rmV_{\rmc}:=\rmV(P_{X|Y}|P_Y)$ and $\rmV_\rmj:= \rmV(P_{XY})$ to denote the conditional and joint varentropies.

The first and second derivatives of $f(P_{XY},\gamma)$ with respect to $\gamma$ are
\begin{align}
f'(P_{XY},\gamma)&=\frac{-(\rmV_{\rmc}-\rmV_{\rmj})(\theta_1+(1-\gamma)\theta_2)^2}{2\left(\left(\gamma \rmV_{\rmc}+(1-\gamma)\rmV_{\rmj}\right)\right)^2}-\frac{\theta_2(\theta_1+(1-\gamma)\theta_2)}{\left(\gamma \rmV_{\rmc}+(1-\gamma)\rmV_{\rmj}\right)}\\
f''(P_{XY},\gamma)&=\frac{\left((\theta_1+\theta_2)\rmV_{\rmc}-\theta_1\rmV_{\rmj}\right)^2}{\left(\left(\gamma \rmV_{\rmc}+(1-\gamma)\rmV_{\rmj}\right)\right)^3}.
\end{align}
Hence, $f(P_{XY},\gamma)$ is a convex function in $\gamma$ since $f''(P_{XY},\gamma)\geq 0$.

Define
\begin{align}
\gamma^*=\argmin_{\gamma\in[0,1]} f(P_{XY},\gamma).
\end{align}

We first prove that \eqref{condtheta1} and \eqref{condtheta2} are sufficient conditions by considering the scenarios where $\gamma^*=1$ and $\gamma^*=0$. Then we prove that \eqref{condtheta1} and \eqref{condtheta2} are also necessary by considering the scenario where $\gamma^*\in(0,1)$.

\begin{enumerate}
\item $\gamma^*=1$

In order to achieve the infimum at $\gamma^*=1$, we need $f'(\gamma^*)\leq 0$, i.e.,
\begin{align}
f'(P_{XY},1)=\frac{-(\rmV_{\rmc}-\rmV_{\rmj})\theta_1^2}{2 \rmV_{\rmc}^2}-\frac{\theta_1\theta_2}{\rmV_{\rmc}}\leq 0
\end{align}
Hence, we have
\begin{align}
\frac{\theta_2}{\theta_1}\geq \frac{\rmV_{\rmj}-\rmV_{\rmc}}{2\rmV_{\rmc}}\label{thetacond1}.
\end{align}

Note that 
\begin{align}
f(P_{XY},1)=\frac{\theta_1^2}{2\rmV_{\rmc}}.
\end{align}
Hence, we have proved \eqref{tgain} holds when \eqref{condtheta1} is satisfied.

\item $\gamma^*=0$

In order to achieve the infimum at $\gamma^*=0$, we need $f'(\gamma^*)\geq 0$, i.e.,
\begin{align}
f'(P_{XY},0)=\frac{-(\rmV_{\rmc}-\rmV_{\rmj})(\theta_1+\theta_2)^2}{2  \rmV_{\rmj}^2}-\frac{\theta_2(\theta_1+\theta_2)}{\rmV_{\rmj}}\geq 0.
\end{align} 
Note that $\theta_1+\theta_2>0$ and $\rmV_\rmj +\rmV_\rmc >0$. After performing some algebra, we obtain
\begin{align}
\frac{\theta_2}{\theta_1}&\leq \frac{\rmV_{\rmj}-\rmV_{\rmc}}{\rmV_{\rmj}+\rmV_{\rmc}}\label{thetalower}
\end{align}

Further note that
\begin{align}
f(P_{XY},0)=\frac{(\theta_1+\theta_2)^2}{2\rmV_{\rmj}}.
\end{align}
Hence, in order to for \eqref{tgain} to hold, invoking \eqref{caseiinonstreaming}, we obtain
\begin{align}
\frac{(\theta_1+\theta_2)^2}{2\rmV_{\rmj}}\leq \frac{\theta_1^2}{2\rmV_{\rmc}}\label{require1}.
\end{align}
Invoking the constraint that $\theta_2>-\theta_1$ and $\theta_1>0$ (refer to  the definition of $\Theta_{( \mathrm{ii})}$ in \eqref{def:Theta2}), we obtain
\begin{align}
-1<\frac{\theta_2}{\theta_1}\leq \sqrt{\frac{\rmV_{\rmj}}{\rmV_{\rmc}}}-1\label{thetacond2}.
\end{align}

Combining~\eqref{thetalower} and \eqref{thetacond2}, we have proved that \eqref{tgain} holds when \eqref{condtheta2} holds.

\item $\gamma^*\in(0,1)$.

For this case, in order for \eqref{tgain} to hold, we need \eqref{require1} and 
\begin{align}
f(P_{XY},\gamma^*)\geq \frac{(\theta_1+\theta_2)^2}{2\rmV_{\rmj}}\label{caseiiicond}.
\end{align}

Since $\gamma^*$ minimizes $f(\gamma)$, we obtain
\begin{align}
f'(P_{XY},\gamma^*)=\frac{-(\rmV_{\rmc}-\rmV_{\rmj})(\theta_1+(1-\gamma^*)\theta_2)^2}{2\left(\left(\gamma^*\rmV_{\rmc}+(1-\gamma^*)\rmV_{\rmj}\right)\right)^2}-\frac{\theta_2(\theta_1+(1-\gamma^*)\theta_2}{\left(\gamma^* \rmV_{\rmc}+(1-\gamma^*)\rmV_{\rmj}\right)}=0\label{gammastar}.
\end{align}

Solving \eqref{gammastar}, we obtain
\begin{align}
\gamma^*
&=\frac{(\rmV_{\rmc}+ \rmV_{\rmj})\theta_2-(\rmV_{\rmj}-\rmV_{\rmc})\theta_1}{\theta_2\left(\rmV_{\rmj}-\rmV_{\rmc}\right)}\\
&=-\frac{\theta_1}{\theta_2}+\frac{\rmV_{\rmc}+\rmV_{\rmj}}{\rmV_{\rmj}-\rmV_{\rmc}}
\end{align}
Therefore, we obtain
\begin{align}
f(P_{XY},\gamma^*)=\frac{2\theta_2\left(\theta_1\rmV_{\rmj}-(\theta_1+\theta_2)\rmV_{\rmc}\right)}{\left(\rmV_{\rmj}-\rmV_{\rmc}\right)^2}.
\end{align}

Since \eqref{thetacond1} and \eqref{thetalower}  are, respectively, the conditions on $\theta_2/\theta_1$ that ensure that $\gamma^*=0$ and $\gamma^*=1$, when $\gamma^*\in(0,1)$, $\theta_2/\theta_1$ cannot satisfy either of \eqref{thetacond1} or \eqref{thetalower}, we conclude that $\gamma^*\in(0,1)$ implies
\begin{align}
 \frac{\rmV_{\rmj}-\rmV_{\rmc}}{\rmV_{\rmj}+\rmV_{\rmc}}<\frac{\theta_2}{\theta_1}< \frac{\rmV_{\rmj}-\rmV_{\rmc}}{2\rmV_{\rmc}}\label{thetainterval}.
\end{align}
Hence, $f(P_{XY},\gamma^*)>0$. In order to satisfy \eqref{caseiiicond}, we have
\begin{align}
\frac{2\theta_2\left(\theta_1\rmV_{\rmj}-(\theta_1+\theta_2)\rmV_{\rmc}\right)}{\left(\rmV_{\rmj}-\rmV_{\rmc}\right)^2}\geq \frac{(\theta_1+\theta_2)^2}{2\rmV_{\rmj}}\label{inerequire}.
\end{align}
Solving \eqref{inerequire}, we obtain
\begin{align}
-\left(\theta_1\left(\rmV_{\rmc}-\rmV_{\rmj}\right)+\theta_2\left(\rmV_{\rmc}+\rmV_{\rmj}\right)\right)^2\geq 0.
\end{align}
Thus, the only possible case is
\begin{align}
\frac{\theta_2}{\theta_1}=\frac{\rmV_{\rmj}-\rmV_{\rmc}}{\rmV_{\rmc}+\rmV_{\rmj}}
\end{align}
However, using the condition for $\gamma^*\in (0,1)$ given in~\eqref{thetainterval}, we conclude that it is impossible to satisfy \eqref{caseiiicond} if $\gamma^*\in(0,1)$.
\end{enumerate}
Therefore, we have proved that \eqref{condtheta1} and \eqref{condtheta2} are both sufficient and necessary conditions for \eqref{tgain} to hold.

\subsection{Preliminaries for the Proof of Lemma \ref{mdcasymp}}
\label{appprelim}
Define
\begin{align}
E_{XY}(\rho)&=(1+\rho)\log\sum_{x,y}P_{XY}(x,y)^{\frac{1}{1+\rho}}\label{eefunction1},\\
E_{X|Y}(\rho)&=\log \sum_{y}P_{Y}(y)\Big(\sum_{x}P_{X|Y}(x|y)^{\frac{1}{1+\rho}}\Big)^{1+\rho}\label{eefunction2},\\
E_X(\gamma,\rho)&=\gamma E_{X|Y}(\rho)+(1-\gamma)E_{XY}(\rho),\label{eefunction3}
\end{align}
and $E_{Y|X}(\rho)$ is similar to $E_{X|Y}(\rho)$ with $X$ and $Y$ interchanged; $E_Y(\gamma,\rho)$ is similar to $E_X(\gamma,\rho)$ with $X$ and $Y$ interchanged.

\begin{lemma}
\label{derivative:joint}
  For any pmf $P_{XY} \in \calP(\calX\times\calY)$ where $\calX$ and $\calY$ are finite sets,
  \begin{align}
    E_{XY}'(\rho)\Big|_{\rho=0}&=H(P_{XY}),\\
    E_{XY}''(\rho)\Big|_{\rho=0}&=\rmV(P_{XY}),\\
    \inf_{\rho\in[0,1]} E_{XY}''(\rho)&\geq 0.
   \end{align}

   Furthermore, there exists a finite positive number $M_{XY}$ such that
  \begin{align}
    \sup_{\rho\in[0,1]}\left|E_{XY}'''(\rho)\right|&\leq M_{XY}.
  \end{align}
\end{lemma}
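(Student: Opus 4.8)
The plan is to express $E_{XY}$ through the cumulant generating function (CGF) of the information density and then read off all four assertions from standard CGF facts. Write $\iota(x,y):=-\log P_{XY}(x,y)$ for $(x,y)$ in the support of $P_{XY}$; since $\calX\times\calY$ is finite, $\iota$ is a bounded random variable, and its CGF $\Lambda(t):=\log\mathbb{E}_{P_{XY}}\!\big[e^{t\iota(X,Y)}\big]$ is finite and smooth (indeed entire) for every real $t$. The first step is the identity
\[
\sum_{x,y}P_{XY}(x,y)^{\frac{1}{1+\rho}}=\sum_{x,y}P_{XY}(x,y)\,P_{XY}(x,y)^{-\frac{\rho}{1+\rho}}=\mathbb{E}_{P_{XY}}\!\Big[e^{\frac{\rho}{1+\rho}\iota(X,Y)}\Big],
\]
which gives $E_{XY}(\rho)=(1+\rho)\,\Lambda\big(t(\rho)\big)$ with $t(\rho):=\rho/(1+\rho)=1-(1+\rho)^{-1}$. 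Note $t(0)=0$, $t'(\rho)=(1+\rho)^{-2}$, and $t\big([0,1]\big)=[0,1/2]$.

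Second, I would differentiate. It is cleanest to use the variable $v:=1+\rho$, so that $E_{XY}=v\,\Lambda(1-1/v)$ and $d/d\rho=d/dv$; a short computation in which the first-order terms cancel yields the compact formula $E_{XY}''(\rho)=\Lambda''\big(t(\rho)\big)/(1+\rho)^3$. From this: $(a)$ since $\Lambda$ is a CGF it is convex, so $\Lambda''\ge 0$ on $\mathbb{R}$, whence $E_{XY}''(\rho)\ge 0$ for all $\rho\in[0,1]$, which is the claim $\inf_{\rho\in[0,1]}E_{XY}''(\rho)\ge 0$; $(b)$ at $\rho=0$, $E_{XY}''(0)=\Lambda''(0)=\var_{P_{XY}}\!\big(\iota(X,Y)\big)=\rmV(P_{XY})$. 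For the first derivative, $E_{XY}'(\rho)=\Lambda\big(t(\rho)\big)+(1+\rho)\Lambda'\big(t(\rho)\big)t'(\rho)$, and using $\Lambda(0)=0$, $\Lambda'(0)=\mathbb{E}_{P_{XY}}[\iota(X,Y)]=H(P_{XY})$ and $t'(0)=1$ we get $E_{XY}'(0)=H(P_{XY})$.

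Third, for the third-derivative bound, differentiate the formula for $E_{XY}''$ once more, using $t'(\rho)=(1+\rho)^{-2}$:
\[
E_{XY}'''(\rho)=\frac{\Lambda'''\big(t(\rho)\big)}{(1+\rho)^5}-\frac{3\,\Lambda''\big(t(\rho)\big)}{(1+\rho)^4}.
\]
Because $\iota$ takes finitely many values all bounded by $\log(1/p_{\min})$, where $p_{\min}$ is the smallest positive value of $P_{XY}$, the functions $\Lambda''$ and $\Lambda'''$ are continuous on the compact interval $[0,1/2]=t([0,1])$ and hence bounded there (concretely, $\Lambda''(t)$ and $\Lambda'''(t)$ are the variance and third cumulant of $\iota$ under the tilted law $Q_t\propto P_{XY}e^{t\iota}$, each bounded by a fixed power of $\log(1/p_{\min})$). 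Combined with $(1+\rho)\ge 1$ on $[0,1]$, this yields $\sup_{\rho\in[0,1]}|E_{XY}'''(\rho)|\le M_{XY}$ with, e.g., $M_{XY}:=\sup_{t\in[0,1/2]}|\Lambda'''(t)|+3\sup_{t\in[0,1/2]}|\Lambda''(t)|<\infty$.

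No genuine difficulty arises in this lemma; the two points deserving a little care are restricting every sum to the support of $P_{XY}$ so that $\iota$ is well defined and bounded, and verifying that the cancellations in the second-derivative computation really collapse to $E_{XY}''(\rho)=\Lambda''(t(\rho))/(1+\rho)^3$ — a step that becomes transparent once one works in the variable $v=1+\rho$ rather than in $\rho$ directly.
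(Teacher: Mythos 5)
Your proof is correct and takes a genuinely different (and cleaner) route than the paper's. The paper's (omitted, but standard) argument defines $f(\rho)=\sum_{x,y}P_{XY}(x,y)^{1/(1+\rho)}$ and $g(\rho)=f(\rho)^{1+\rho}$, writes $E_{XY}=\log g$, and then directly computes $E_{XY}',E_{XY}'',E_{XY}'''$ by differentiating through the tilted distribution $P_{X_\rho Y_\rho}(x,y)\propto P_{XY}(x,y)^{1/(1+\rho)}$; the entropy and varentropy of the tilted law appear after several lines of manipulation, and the non-negativity of $E_{XY}''$ is obtained by recognizing a variance in the resulting expression. Your reparametrization $E_{XY}(\rho)=(1+\rho)\Lambda\!\left(\rho/(1+\rho)\right)$, with $\Lambda$ the cumulant generating function of the information density $\iota=-\log P_{XY}$, routes everything through off-the-shelf CGF facts: $\Lambda(0)=0$, $\Lambda'(0)=H(P_{XY})$, $\Lambda''(0)=\rmV(P_{XY})$, $\Lambda$ convex, and $\Lambda^{(k)}$ continuous on the compact interval $t([0,1])=[0,1/2]$. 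The substitution $v=1+\rho$ makes the cancellations transparent, giving the compact formula $E_{XY}''(\rho)=\Lambda''\!\left(t(\rho)\right)/(1+\rho)^3$, which one can check equals $\rmV(P_{X_\rho Y_\rho})/(1+\rho)$, exactly the form the paper's computation arrives at; convexity of $\Lambda$ then delivers $\inf_\rho E_{XY}''(\rho)\ge 0$ for free, rather than after unwinding a variance identity. The third-derivative bound similarly reduces to boundedness of $\Lambda''$ and $\Lambda'''$ on a compact set. Both approaches are sound; yours replaces the algebraic bookkeeping of tilted-distribution calculus with a single structural identity plus general CGF theory, and the restriction of $\iota$ to the support of $P_{XY}$ that you flag is consistent with the paper's convention $0\log^k 0=0$.
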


\begin{lemma}
\label{derivative:conditional}
For any pmf $P_{XY} \in \calP(\calX\times\calY)$  where $\calX$ and $\calY$ are finite sets,
\begin{align}
  E_{X|Y}'(\rho)\Big|_{\rho=0}&=H(P_{X|Y}|P_{Y}),\\
  E_{X|Y}''(\rho)\Big|_{\rho=0}&=\rmV(P_{X|Y}|P_{Y}),\\
  \inf_{\rho\in[0,1]} E_{X|Y}''(\rho)&\geq 0.
\end{align}
   Furthermore, there exists a finite  positive number $M_X$ such that
\begin{align}
  \sup_{\rho\in[0,1]}\left|E_{X|Y}'''(\rho)\right|&\leq M_{X}.
\end{align}
The derivatives and properties of $E_{Y|X}(\rho)$ are exactly the same as $E_{X|Y}(\rho)$ with $X$ and $Y$ interchanged.
\end{lemma}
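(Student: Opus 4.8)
The plan is to establish Lemma~\ref{derivative:conditional} by a direct Taylor expansion of $E_{X|Y}(\rho)$ about $\rho=0$: the first two Taylor coefficients will be identified with $H(P_{X|Y}|P_Y)$ and $\rmV(P_{X|Y}|P_Y)$, while the two qualitative claims (non-negativity of $E_{X|Y}''$ on $[0,1]$ and boundedness of $E_{X|Y}'''$) will follow from convexity and smoothness arguments, respectively. Lemma~\ref{derivative:joint} is the non-nested analogue and follows from the same computation with strictly less bookkeeping, so I describe only the conditional case.

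For the differentiation, I would set $\phi_y(\rho):=\sum_x P_{X|Y}(x|y)^{1/(1+\rho)}$ and $g(\rho):=\sum_y P_Y(y)\,\phi_y(\rho)^{1+\rho}$, so that $E_{X|Y}(\rho)=\log g(\rho)$ with $E_{X|Y}(0)=0$. Differentiating $\rho\mapsto P_{X|Y}(x|y)^{1/(1+\rho)}=\exp\!\big(\tfrac{1}{1+\rho}\log P_{X|Y}(x|y)\big)$ and evaluating at $\rho=0$ gives $\phi_y(0)=1$, $\phi_y'(0)=-\sum_x P_{X|Y}(x|y)\log P_{X|Y}(x|y)=:H_y$ and $\phi_y''(0)=\sum_x P_{X|Y}(x|y)\big[(\log P_{X|Y}(x|y))^2+2\log P_{X|Y}(x|y)\big]=S_y-2H_y$, where $S_y:=\sum_x P_{X|Y}(x|y)(\log P_{X|Y}(x|y))^2$. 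Writing $u_y(\rho):=(1+\rho)\log\phi_y(\rho)$ and applying the chain rule yields $u_y(0)=0$, $u_y'(0)=H_y$ and $u_y''(0)=S_y-H_y^2=:V_y$, the conditional varentropy given $Y=y$. Since $g(\rho)=\sum_y P_Y(y)e^{u_y(\rho)}$, one obtains $g(0)=1$, $g'(0)=\sum_y P_Y(y)H_y=H(P_{X|Y}|P_Y)$ and $g''(0)=\sum_y P_Y(y)(V_y+H_y^2)$; hence $E_{X|Y}'(0)=g'(0)=H(P_{X|Y}|P_Y)$ and
\begin{align}
E_{X|Y}''(0)&=g''(0)-g'(0)^2\\
&=\sum_y P_Y(y)V_y+\biggl(\sum_y P_Y(y)H_y^2-\Bigl(\sum_y P_Y(y)H_y\Bigr)^2\biggr).
\end{align}
The first term is the average over $Y$ of the within-$y$ variance of $-\log P_{X|Y}(X|Y)$ and the second is the variance of the conditional mean $H_Y$; by the law of total variance their sum equals $\rmV(P_{X|Y}|P_Y)$ in the form~\eqref{defvarpxgy}.

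For $\inf_{\rho\in[0,1]}E_{X|Y}''(\rho)\ge 0$ I would invoke convexity of $\rho\mapsto E_{X|Y}(\rho)$ on $(-1,\infty)$, the standard concavity of the Gallager $E_0$-type function (see~\cite{gallagerIT,csiszar2011information}). A self-contained proof uses tilting: for fixed $\rho$ set $t=1/(1+\rho)$ and define $Q^{(\rho)}_{X|Y}(x|y)\propto P_{X|Y}(x|y)^{t}$ together with $Q^{(\rho)}_Y(y)\propto P_Y(y)\phi_y(\rho)^{1/t}$; a routine computation expresses $E_{X|Y}'(\rho)$ as the mean and $E_{X|Y}''(\rho)$ as the variance of an explicit statistic under $Q^{(\rho)}_Y\times Q^{(\rho)}_{X|Y}$, which is manifestly non-negative. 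Finally, for the third derivative: since $\calX,\calY$ are finite, each summand $P_{X|Y}(x|y)^{1/(1+\rho)}$ is smooth on $(-1,\infty)$, and whenever $P_Y(y)>0$ there is an $x^\star$ with $P_{X|Y}(x^\star|y)>0$ so $\phi_y(\rho)\ge P_{X|Y}(x^\star|y)^{1/(1+\rho)}>0$ and hence $g(\rho)>0$; therefore $E_{X|Y}=\log g$ is $C^\infty$ on a neighbourhood of the compact interval $[0,1]$, and continuity of $E_{X|Y}'''$ there gives a finite $M_X:=\sup_{\rho\in[0,1]}|E_{X|Y}'''(\rho)|$.

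The main obstacle is purely the bookkeeping in the second-derivative step: one must propagate derivatives correctly through the three nested operations---the inner sum over $x$, the power $(\cdot)^{1+\rho}$ applied to $\phi_y$, and the outer logarithm of the sum over $y$---and then recognize that the resulting expression is exactly the law-of-total-variance decomposition of $\rmV(P_{X|Y}|P_Y)$. Making the tilting argument for convexity fully rigorous is the other mildly delicate point, but it is entirely standard.
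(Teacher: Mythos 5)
Your proof is correct and follows essentially the same standard route the paper defers to (cf.\ its cited references): logarithmic differentiation of $E_{X|Y}=\log g$ with Taylor evaluation at $\rho=0$, where the law of total variance identifies $E_{X|Y}''(0)$ with $\rmV(P_{X|Y}|P_Y)$, a tilted-distribution decomposition to show $E_{X|Y}''\ge 0$ on $[0,1]$, and compactness of $[0,1]$ plus $C^{\infty}$ regularity of $\log g$ for the third-derivative bound. The only small imprecision is the claim that $E_{X|Y}''(\rho)$ is literally ``the variance of an explicit statistic'' under a single tilted joint law; writing $g(\rho)=\sum_y P_Y(y)e^{u_y(\rho)}$ with $u_y(\rho)=(1+\rho)\log\phi_y(\rho)$ gives $E_{X|Y}''(\rho)=E_{Q^{(\rho)}_Y}[u_Y''(\rho)]+\var_{Q^{(\rho)}_Y}\bigl(u_Y'(\rho)\bigr)$, a sum of two non-negative pieces in which the first requires the additional (standard, one-line) single-letter observation that each $\rho\mapsto u_y(\rho)$ is itself convex, so the conclusion is unaffected.
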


The proofs of Lemmas \ref{derivative:joint} and \ref{derivative:conditional} are  standard and thus omitted. See for example \cite{draper2010lossless} and \cite{altugwagner2014}.

Invoking Lemmas \ref{derivative:joint} and \ref{derivative:conditional}, the Taylor series expansions of \eqref{eefunction3} is as follows:
\begin{align}
  E_X(\gamma,\rho)
  \nn&=\gamma\rho H(P_{X|Y}|P_{Y})+(1-\gamma)\rho H(P_{XY})+\frac{\gamma\rho^2}{2}\rmV(P_{X|Y}|P_{Y})+\frac{(1-\gamma)\rho^2}{2}\rmV(P_{XY})\\
  &\qquad+\frac{\gamma\rho^3}{6}E_{X|Y}'''(\overline{\rho})+\frac{(1-\gamma)\rho^3}{6}E_{XY}'''(\overline{\rho})\label{taylorx},
\end{align}
for some $\overline{\rho}\in[0,\rho]$. The Taylor expansion of $E_Y(\gamma,\rho)$ is obtained by interchanging $X$ and $Y$ in \eqref{taylorx}.

\begin{lemma}
\label{convergegamma}
Let $f:[0,1]\to\bbR$, $g:[0,1]\to\bbR$ and $g_i:[0,1]\to\bbR$ where $i\in[1 : m]$ be continuous functions. Consider arbitrary positive sequences $\{a_n\}_{n=1}^{\infty}$, $\{b_n\}_{n=1}^{\infty}$
$\{c_{i,n}\}_{n=1}^{\infty}$ satisfying $b_n=o(a_n)$ and $c_{i,n}=o(b_n)$.
Then we have
\begin{align}
\inf_{\gamma\in[0,1]}\left(a_nf(\gamma)+b_n g(\gamma)+\sum_{i=1}^m c_{i,n}g_i(\gamma)\right)=a_nf^*+b_n g^*+o(b_n),
\end{align}
where 
\begin{align}
f^*&=\inf_{\gamma\in[0,1]}f(\gamma)\\
g^*&=\inf_{\gamma:f(\gamma)=f^*}g(\gamma).
\end{align}
\end{lemma}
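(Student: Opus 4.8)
\textbf{Proof proposal for Lemma \ref{convergegamma}.}

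The plan is to establish the result by combining a uniform-continuity/compactness argument on $[0,1]$ with an elementary sandwiching estimate. First I would fix notation: let $\gamma^\star$ be a minimizer of the two-stage problem, i.e.\ $f(\gamma^\star)=f^*$ and $g(\gamma^\star)=g^*$; such a point exists because $f$ is continuous on the compact set $[0,1]$, the level set $\{\gamma: f(\gamma)=f^*\}$ is nonempty and closed (hence compact), and $g$ restricted to it attains its infimum. The upper bound is then immediate: plugging $\gamma=\gamma^\star$ into the objective gives
\begin{align}
\inf_{\gamma\in[0,1]}\left(a_nf(\gamma)+b_n g(\gamma)+\sum_{i=1}^m c_{i,n}g_i(\gamma)\right)
\le a_nf^*+b_n g^*+\sum_{i=1}^m c_{i,n}g_i(\gamma^\star)
= a_nf^*+b_n g^*+o(b_n),
\end{align}
since each $c_{i,n}=o(b_n)$ and $g_i(\gamma^\star)$ is a fixed constant.

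The harder direction is the matching lower bound. Let $\gamma_n$ denote a sequence of (near-)minimizers of the $n$-th objective; by compactness pass to a convergent subsequence $\gamma_{n_j}\to\bar\gamma$. The key step is to argue that $\bar\gamma$ must lie in the $f$-optimal level set, i.e.\ $f(\bar\gamma)=f^*$. If not, then $f(\bar\gamma)=f^*+\eta$ for some $\eta>0$, so by continuity $f(\gamma_{n_j})\ge f^*+\eta/2$ for $j$ large; but then the objective at $\gamma_{n_j}$ is at least $a_{n_j}(f^*+\eta/2)+b_{n_j}g(\gamma_{n_j})+\sum_i c_{i,n_j}g_i(\gamma_{n_j})$, whereas the objective at $\gamma^\star$ is $a_{n_j}f^*+O(b_{n_j})$; since $b_n=o(a_n)$ and the $g,g_i$ are bounded on $[0,1]$, the extra $a_{n_j}\eta/2$ term dominates and $\gamma_{n_j}$ cannot be a minimizer for large $j$ — a contradiction. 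Hence $f(\bar\gamma)=f^*$, and consequently $g(\bar\gamma)\ge g^*$ by definition of $g^*$. Now subtract $a_nf^*$ and divide by $b_n$: using $f(\gamma_n)\ge f^*$ (so $a_n(f(\gamma_n)-f^*)\ge 0$) and $c_{i,n}/b_n\to 0$ together with boundedness of $g_i$, we get
\begin{align}
\liminf_{j\to\infty}\frac{1}{b_{n_j}}\left(a_{n_j}f(\gamma_{n_j})+b_{n_j}g(\gamma_{n_j})+\sum_{i=1}^m c_{i,n_j}g_i(\gamma_{n_j})-a_{n_j}f^*\right)
\ge \liminf_{j\to\infty} g(\gamma_{n_j})\ge g(\bar\gamma)\ge g^*.
\end{align}
Since every subsequence of minimizers has a further subsequence with this property, the full $\liminf$ is $\ge g^*$, which is the desired lower bound $\inf_\gamma(\cdots)\ge a_nf^*+b_ng^*+o(b_n)$.

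I expect the main obstacle to be handling the first-stage optimality rigorously: one must be careful that $a_n(f(\gamma_n)-f^*)$ does not merely stay nonnegative but is in fact $o(b_n)$, because otherwise it would corrupt the $b_n$-order term. The cleanest way is the contradiction argument above — showing any limit point of minimizers is $f$-optimal forces $f(\gamma_n)-f^*\to 0$ fast enough relative to $a_n$; more precisely, from the upper bound we know the optimal value is $a_nf^*+O(b_n)$, so $a_n(f(\gamma_n)-f^*)\le O(b_n)+ \|g\|_\infty b_n + \sum_i|c_{i,n}|\|g_i\|_\infty = O(b_n)$, which gives exactly $f(\gamma_n)-f^* = O(b_n/a_n)=o(1)$ and, fed back into the objective, contributes a nonnegative $O(b_n)$ that does not help the lower bound but also does not spoil it. A secondary point requiring care is that $f^*$ and $g^*$ are genuine minima (not just infima), which I would justify upfront via compactness of $[0,1]$ and closedness of the $f$-optimal set, as noted; all the $g_i$ terms are then harmless since they are uniformly bounded and multiplied by $c_{i,n}=o(b_n)$.
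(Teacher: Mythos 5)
Your proof is correct, and it takes essentially the same compactness-plus-subsequence route that underlies the argument of Polyanskiy's Lemma~48, which is what the paper cites in lieu of writing out a proof: upper bound by substituting a two-stage minimizer $\gamma^\star$ (which exists because $\{\gamma:f(\gamma)=f^*\}$ is nonempty and compact and $g$ is continuous), lower bound by extracting a convergent subsequence of minimizers $\gamma_n$ on the compact set $[0,1]$, showing any limit point is $f$-optimal because the $a_n$-term dominates, and then invoking continuity of $g$. Your worry about the first-order term is exactly the right subtlety: the tempting pointwise bound $a_n f(\gamma)+b_n g(\gamma)+\sum_i c_{i,n}g_i(\gamma)\ge a_n f^*+b_n g^*+\sum_i c_{i,n}\underline{g}_i$ is false in general (take $f(\gamma)=\gamma^2$, $g(\gamma)=-\gamma$, evaluate near $\gamma\approx b_n/(2a_n)$), so one must, as you do, localize to the $f$-optimal set via the subsequence argument and then simply discard the nonnegative contribution $a_n(f(\gamma_n)-f^*)\ge 0$ rather than rely on a uniform lower bound on $g$.
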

Lemma \ref{convergegamma} can be proved similarly as \cite[Lemma 48]{polyanskiy2010thesis}.

\subsection{Proof of Lemma \ref{mdcasymp}}
\label{prooflemmamdc}
We prove \eqref{eesw1} only since others can be done similarly. 
Let $(R_X^*,R_Y^*)$ be fixed boundary point of the SW region (cf. Cases (i)-(v) in Figure \ref{rateregion}). We consider  an arbitrary sequence $\{\xi_n\}_{n=1}^{\infty}$ satisfying $\xi_n\to 0$ and $\frac{\log n}{n\xi_n^2}\to0$ as $n\to\infty$. 

Recall the definition of $R_{X,n}^l$, $R_{Y,n}^m$, $\kappa_{n,l}$ and $\zeta_{n,m}$ in \eqref{rxnl}, \eqref{rynm}, \eqref{def:taunl} and \eqref{def:tannm} respectively.

Define
\begin{align}
\rho_n&=\frac{|\theta_1+(1-\gamma)\theta_2|\xi_n}{\gamma \rmV(P_{X|Y}|P_Y)+(1-\gamma)\rmV(P_{XY})}\label{defrhon},
\end{align}
where $|x|$ is the absolute value of $x$. Note that for $n$ large enough, $\rho_n\in[0,1]$.  Hence, for any $\gamma\in[0,1]$,
\begin{align}
\nn&E_X(R_{X,n}^l,R_{Y,n}^m,\gamma)\\
&=\max_{\rho\in[0,1]} \gamma\rho R_{X,n}^l+(1-\gamma)\rho(R_{X,n}^l+R_{Y,n}^m)-E_{X}(\gamma,\rho)\\
&\geq \gamma\rho_n R_{X_n}^l-\gamma\left(\rho_n H(P_{X|Y}|P_{Y})
+\frac{\rho_n^2}{2}\rmV(P_{X|Y}|P_{Y})+\frac{\rho_n^3}{6}E_{X|Y}'''(\overline{\rho}_n)\right)\\
&\qquad+(1-\gamma)\rho_n(R_{X,n}^l+R_{Y,n}^m)-(1-\gamma)\left(\rho_n H(P_{XY})+\frac{\rho_n^2}{2}\rmV(P_{XY})+\frac{\rho_n^3}{6}E_{XY}'''(\overline{\rho}_n)\right)\label{taylorexpand}\\
&\nn=\rho_n\left(\gamma R_{X,n}^l+(1-\gamma)(R_{X,n}^l+R_{Y,n}^m)-\gamma H(P_{X|Y}|P_Y)-(1-\gamma)H(P_{XY})\right)\\
&\qquad-\gamma\left(\frac{\rho_n^2}{2}\rmV(P_{X|Y}|P_{Y})+\frac{\rho_n^3}{6}M_X\right)-(1-\gamma)\left(\frac{\rho_n^2}{2}\rmV(P_{XY})+\frac{\rho_n^3}{6}M_{XY}\right)\\
&\nn=\rho_n\left(\gamma R_X^*+(1-\gamma)(R_X^*+R_Y^*)-\gamma H(P_{X|Y}|P_Y)-(1-\gamma)H(P_{XY})\right)-\rho_n(\theta_1\kappa_{n,l}+(1-\gamma)\theta_2\zeta_{n,m})\\
&\qquad+\left(\rho_n(\theta_1+(1-\gamma)\theta_2)\xi_n-\rho_n^2\left(
\frac{\gamma \rmV(P_{X|Y}|P_{Y})+(1-\gamma)\rmV(P_{XY})}{2}\right)\right)-\frac{\rho_n^3(\gamma M_X+(1-\gamma)M_{XY})}{6}\\
\nn&=\rho_n\left(\gamma R_X^*+(1-\gamma)(R_X^*+R_Y^*)-\gamma H(P_{X|Y}|P_Y)-(1-\gamma)H(P_{XY})\right)-\theta_1\rho_n\kappa_{n,l}-(1-\gamma)\theta_2\rho_n\zeta_{n,m}\\
&\qquad+\frac{(\theta_1+(1-\gamma)\theta_2)^2\xi_n^2}{2\left(\gamma \rmV(P_{X|Y}|P_Y)+(1-\gamma)\rmV(P_{XY})\right)}-\frac{\rho_n^3(\gamma M_X+(1-\gamma)M_{XY})}{6}\\
\nn&\geq \rho_n\left(\gamma R_X^*+(1-\gamma)(R_X^*+R_Y^*)-\gamma H(P_{X|Y}|P_Y)-(1-\gamma)H(P_{XY})\right)-\theta_1\rho_n\kappa_{n,l}-(1-\gamma)\theta_2\rho_n\zeta_{n,m}\\
&\qquad+\frac{(\theta_1+(1-\gamma)\theta_2)^2\xi_n^2}{2\left(\gamma \rmV(P_{X|Y}|P_Y)+(1-\gamma)\rmV(P_{XY})\right)}-\frac{\rho_n^3}{6}\max\{M_{XY},M_X\}\label{exlbduniform},
\end{align}
where \eqref{taylorexpand} follows from \eqref{taylorx} and $\bar{\rho}_n\in[0,\rho_n]$.

Define
\begin{align}
\rho_n'&=\frac{|(1-\gamma)\theta_1+\theta_2|\xi_n}{\gamma \rmV(P_{Y|X}|P_X)+(1-\gamma)\rmV(P_{XY})}\label{defrhon2}.
\end{align}
Similarly, we obtain
\begin{align}
E_Y(R_{X,n}^l,R_{Y,n}^m,\gamma)
\nn&\geq \rho_n'\left(\gamma R_Y^*+(1-\gamma)(R_X^*+R_Y^*)-\gamma H(P_{Y|X}|P_X)-(1-\gamma)H(P_{XY})\right)-(1-\gamma)\theta_1\rho_n'\kappa_{n,l}-\theta_2\rho_n'\zeta_{n,m}\\
&\qquad+\frac{((1-\gamma)\theta_1+\theta_2)^2\xi_n^2}{2\left(\gamma \rmV(P_{Y|X}|P_X)+(1-\gamma)\rmV(P_{XY})\right)}-\frac{\rho_n'^3}{6}\max\left\{M_{XY},M_Y\right\}\label{eylbduniform}.
\end{align}

We then deal with different cases. Here we only prove the result for Cases (i)-(iii) because Case (iv) is symmetric to Case (ii) and Case (v) is symmetric to Case (i).
\begin{enumerate}
\item Case (i): $R_X^*=H(P_{X|Y}|P_Y)$ and $R_Y>H(P_Y)$

For this case, invoking \eqref{exlbduniform} and \eqref{eylbduniform}, we obtain
\begin{align}
E_X(R_{X,n}^l,R_{Y,n}^m,\gamma)
\nn&\geq \rho_n(1-\gamma)\left((R_X^*+R_Y^*)-H(P_{XY})\right)-\theta_1\rho_n\kappa_{n,l}-(1-\gamma)\theta_2\rho_n\zeta_{n,m}\\
&\qquad+\frac{(\theta_1+(1-\gamma)\theta_2)^2\xi_n^2}{2\left(\gamma \rmV(P_{X|Y}|P_Y)+(1-\gamma)\rmV(P_{XY})\right)}-\frac{\rho_n^3}{6}\max\{M_{XY},M_X\}\label{excase1} \end{align}
and\begin{align}
E_Y(R_{X,n}^l,R_{Y,n}^m,\gamma)
\nn&\geq \rho_n'\left(\gamma R_Y^*+(1-\gamma)(R_X^*+R_Y^*)-\gamma H(P_{Y|X}|P_X)-(1-\gamma)H(P_{XY})\right)\\
&\qquad-(1-\gamma)\theta_1\rho_n'\kappa_{n,l}-\theta_2\rho_n'\zeta_{n,m}+\frac{((1-\gamma)\theta_1+\theta_2)^2\xi_n^2}{2\left(\gamma \rmV(P_{Y|X}|P_X)+(1-\gamma)\rmV(P_{XY})\right)}   \nn\\ 
&\qquad -\frac{\rho_n'^3}{6}\max\left\{M_{XY},M_Y\right\}\label{eycase1}.
\end{align}

In order to evaluate the right hand side of \eqref{excase1}, we define the following functions:
\begin{align}
f(\gamma)&:=\frac{(1-\gamma)\left(R_X^*+R_Y^*-H(P_{XY})\right)|\theta_1+(1-\gamma)\theta_2|}{\gamma \rmV(P_{X|Y}|P_Y)+(1-\gamma)\rmV(P_{XY})}\\
g(\gamma)&:=\frac{(\theta_1+(1-\gamma)\theta_2)^2}{2\left(\gamma \rmV(P_{X|Y}|P_Y)+(1-\gamma)\rmV(P_{XY})\right)}\\
g_1(\gamma)&:=-\frac{|\theta_1+(1-\gamma)\theta_2|}{\gamma \rmV(P_{X|Y}|P_Y)+(1-\gamma)\rmV(P_{XY})}\\
g_2(\gamma)&:=-\frac{(1-\gamma)\theta_2|\theta_1+(1-\gamma)\theta_2|}{\gamma \rmV(P_{X|Y}|P_Y)+(1-\gamma)\rmV(P_{XY})}\\
g_3(\gamma)&:=-\frac{\max\left\{M_{XY},M_Y\right\}}{6}\left(\frac{|\theta_1+(1-\gamma)\theta_2|}{\gamma \rmV(P_{X|Y}|P_Y)+(1-\gamma)\rmV(P_{XY})}\right)^3.
\end{align}
Invoking the definition of $\rho_n$ in~\eqref{defrhon}, we conclude that minimizing the right hand side of \eqref{excase1} is equivalent to minimizing
\begin{align}
F(n,\gamma)
&:=f(\gamma)\xi_n+g(\gamma)\xi_n^2+g_1(\gamma)\rho_n\kappa_{n,l}+g_2(\gamma)\rho_n\zeta_{n,m}+g_3(\gamma)\xi_n^3.
\end{align}
Note that $\rho_n\kappa_{n,l}=o(\xi_n^2)$ (see \eqref{oxin} and \eqref{defrhon}) and similarly $\rho_n\zeta_{n,m}=o(\xi_n^2)$. Invoking Lemma \ref{convergegamma}, we obtain 
\begin{align}
\inf_{\gamma\in[0,1]}E_X(R_{X,n},R_{Y,n},\gamma)
&\geq \inf_{\gamma\in[0,1]}F(n,\gamma)\\
&=\frac{T\theta_1^2\xi_n^2}{2 \rmV(P_{X|Y}|P_Y)}+o(\xi_n^2)\label{useconverge}.
\end{align}
However, for any $\gamma\in[0,1]$, the right hand side of \eqref{eycase1} is dominated by the first term which is of order $\Theta(\xi_n)$ (see~\eqref{defrhon2}).

Therefore, we obtain
\begin{align}
\nn&\liminf_{n\to\infty} \frac{\min\left\{\inf_{\gamma\in[0,1]}E_X(R_{X,n}^l,R_{Y,n}^m,\gamma),\inf_{\gamma\in[0,1]}E_Y(R_{X,n}^l,R_{Y,n}^m,\gamma) \right\}}{\xi_n^2}\\
&\geq \min\left\{\frac{1}{2 \rmV(P_{X|Y}|P_Y)},\infty\right\}=\frac{1}{2 \rmV(P_{X|Y}|P_Y)}\label{swmdccase1}.
\end{align}

\item Case (ii): $R_X^*=H(P_{X|Y}|P_Y)$ and $R_Y=H(P_Y)$

Invoking \eqref{exlbduniform} and \eqref{eylbduniform}, we obtain
\begin{align}
E_X(R_{X,n}^l,R_{Y,n}^m,\gamma)
&\geq -\theta_1\rho_n\kappa_{n,l}-(1-\gamma)\theta_2\rho_n\zeta_{n,m}+\frac{(\theta_1+(1-\gamma)\theta_2)^2\xi_n^2}{2\left(\gamma \rmV(P_{X|Y}|P_Y)+(1-\gamma)\rmV(P_{XY})\right)} \nn\\*
&\qquad -\frac{\rho_n^3}{6}\max\{M_{XY},M_X\}\label{excase2}
\end{align}
and 
\begin{align}
E_Y(R_{X,n}^l,R_{Y,n}^m,\gamma)
&\nn\geq \rho_n'\gamma I(P_{X|Y},P_Y)-(1-\gamma)\theta_1\rho_n'\kappa_{n,l}-\theta_2\rho_n'\zeta_{n,m}\\
&\qquad+\frac{((1-\gamma)\theta_1+\theta_2)^2\xi_n^2}{2\left(\gamma \rmV(P_{Y|X}|P_X)+(1-\gamma)\rmV(P_{XY})\right)}-\frac{\rho_n'^3}{6}\max\left\{M_{XY},M_Y\right\}\label{eycase2}.
\end{align}
Following the argument leading to \eqref{swmdccase1}, we obtain
\begin{align}
\nn&\liminf_{n\to\infty} \frac{\min\left\{\inf_{\gamma\in[0,1]}E_X(R_{X,n}^l,R_{Y,n}^m,\gamma),\inf_{\gamma\in[0,1]}E_Y(R_{X,n}^l,R_{Y,n}^m,\gamma) \right\}}{\xi_n^2}\\
&\geq \min\left\{\inf_{\gamma\in[0,1]}\frac{(\theta_1+(1-\gamma)\theta_2)^2}{2\left(\gamma \rmV(P_{X|Y}|P_Y)+(1-\gamma)\rmV(P_{XY})\right)},\frac{(\theta_1+\theta_2)^2}{2\rmV(P_{XY})}\right\}
\end{align}

\item Case (iii): $H(P_{X|Y}|P_Y)<R_X^*<H(P_X)$, $H(P_{Y|X}|P_X)<R_Y^*<H(P_Y)$ and $R_X^*+R_Y^*=H(P_{XY})$.

Invoking \eqref{exlbduniform} and \eqref{eylbduniform}, we obtain
\begin{align}
E_X(R_{X,n}^l,R_{Y,n}^m,\gamma)
\nn&\geq \rho_n\gamma\left(R_X^*-H(P_{X|Y}|P_Y)\right)-\theta_1\rho_n\kappa_{n,l}-(1-\gamma)\theta_2\rho_n\zeta_{n,m}\\
&\qquad+\frac{(\theta_1+(1-\gamma)\theta_2)^2\xi_n^2}{2\left(\gamma \rmV(P_{X|Y}|P_Y)+(1-\gamma)\rmV(P_{XY})\right)}-\frac{\rho_n^3}{6}\max\{M_{XY},M_X\}\label{excase3}\end{align}
and
\begin{align}
E_Y(R_{X,n}^l,R_{Y,n}^m,\gamma)
&\geq \rho_n'\gamma\left(R_Y^*-H(P_{Y|X}|P_X)\right)-(1-\gamma)\theta_1\rho_n'\kappa_{n,l}-\theta_2\rho_n'\zeta_{n,m}\nn\\
&\qquad+\frac{((1-\gamma)\theta_1+\theta_2)^2\xi_n^2}{2\left(\gamma \rmV(P_{Y|X}|P_X)+(1-\gamma)\rmV(P_{XY})\right)}-\frac{\rho_n'^3}{6}\max\left\{M_{XY},M_Y\right\}\label{eycase3}.
\end{align}
Following similar steps leading to \eqref{swmdccase1}, we obtain
\begin{align}
 \liminf_{n\to\infty} \frac{\min\left\{\inf_{\gamma\in[0,1]}E_X(R_{X,n}^l,R_{Y,n}^m,\gamma),\inf_{\gamma\in[0,1]}E_Y(R_{X,n}^l,R_{Y,n}^m,\gamma) \right\}}{\xi_n^2}\geq \frac{(\theta_1+\theta_2)^2}{2\rmV(P_{XY})}.
\end{align}
\end{enumerate}

\subsection*{Acknowledgments}
The authors would like to acknowledge Associate Editor Prof. Sandeep Pradhan and two anonymous reviewers for useful comments which helped to improve the quality of the current paper.
 
\bibliographystyle{IEEEtran}
\bibliography{IEEEfull_lin}
\end{document}